\newcommand*{\N}{\mathbb N}
\newcommand*{\groupG}{\mathcal G}
\newcommand*{\groupH}{\mathcal H}
\newcommand*{\groupK}{\mathcal K}
\newcommand*{\eqdef}{\coloneqq}
\newcommand*{\recdef}{\Coloneqq}
\newcommand{\comments}[1]{}
\newcommand*\sums[1]{\N[#1]}
\newcommand*\cons{\dblcolon}
\newcommand*\length[1]{\lvert #1\rvert}
\newcommand*\vectors[2][\Q^+]{#1\langle #2\rangle}
\newcommand*\rappl[3][]{#1\langle #2#1\rangle #3}
\newcommand*\subst[4][]{#2 #1[#4/#3 #1]}
\newcommand*\nsubst[3]{\partial_{#2} #1\cdot #3}
\newcommand*\linj[1]{#1\oplus\mathord\bullet}
\newcommand*\rinj[1]{\mathord\bullet\oplus#1}
\newcommand*\lact[3][]{#1[ #2#1] #3}
\newcommand*\ndTerms{\Lambda_\oplus}
\newcommand*\ndApprox{\Lambda_\oplus^\bot}
\newcommand*\resTerms{\Delta_\oplus}
\newcommand*\resMonomials{\resTerms^\oc}
\newcommand*\resExprs{\resTerms^{(\oc)}}
\newcommand*\resTermSums{\sums{\resTerms}}
\newcommand*\resMonomialSums{\sums{\resMonomials}}
\newcommand*\resExprSums{\sums{\resExprs}}
\newcommand*\resTermVectors{\vectors{\resTerms}}
\newcommand*\resMonomialVectors{\vectors{\resMonomials}}
\newcommand*\rigidTerms{D}
\newcommand*\rigidMonomials{\rigidTerms^\oc}
\newcommand*\rigidExprs{\rigidTerms^{(\oc)}}
\newcommand*\softify[2][]{#1\lVert{#2}#1\rVert}
\newcommand*{\tayexp}[1]{{#1}^*}
\newcommand*{\taysup}[1]{T(#1)}
\newcommand*{\eBTs}{\mathcal N}
\newcommand*{\eBT}[1]{\eBTs(#1)}
\newcommand*{\BT}[1]{\mathit{BT}(#1)}
\newcommand*{\NF}[1]{\mathit{NF}(#1)}
\newcommand*{\D}{\mathbb D}
\newcommand*{\vD}{\vec{\D}}
\newcommand*{\Dm}{\D^\oc}
\newcommand*{\G}{\D^{(\oc)}}
\newcommand*{\card}{\mathit{Card}}
\newcommand*{\wcomp}[2]{\mathcal Q_{#1}(#2)}
\newcommand*{\wcompk}[3]{\mathcal Q_{#1}^{#2}(#3)}
\newcommand*{\permutations}[1]{\mathfrak S_{#1}}
\newcommand*{\catP}{\mathbb P}
\newcommand*{\St}{\mathit{St}}
\newcommand*{\qSt}{\mathit{St}_{\cong}}
\newcommand*{\subSt}[1]{\mathcal{H}_{#1}}
\newcommand*{\preqSt}[1]{\mathcal{K}_{#1}}
\newcommand*{\supp}{\mathit{supp}}
\newcommand*\definitive[1]{\emph{#1}}
\newcommand*\restr[2]{#1\mathord\upharpoonright_#2}
\title[
	Taylor exp.\ of \texorpdfstring{$\lambda$}{lambda}-terms and the groupoid of rigid approximants
]{
	On the Taylor expansion of \texorpdfstring{$\lambda$}{lambda}-terms and the groupoid structure of their rigid approximants
}
\author{Federico Olimpieri}
\author{Lionel Vaux Auclair}
\address{Aix-Marseille Univ, CNRS, I2M, Marseille, France}
\email{\{federico.olimpieri,lionel.vaux\}@univ-amu.fr}
\keywords{lambda-calculus, Taylor expansion, nondeterminism, normalization}
\begin{document}

\begin{abstract}
	We show that the normal form of the Taylor expansion of a $\lambda$-term is
	isomorphic to its Böhm tree, improving Ehrhard and Regnier's original proof
	along three independent directions.

	First, we simplify the final step of the proof by following the left
	reduction strategy directly in the resource calculus, avoiding to introduce
	an abstract machine \emph{ad hoc}.

	We also introduce a groupoid of permutations of copies of arguments in a
	rigid variant of the resource calculus, and relate the coefficients of
	Taylor expansion with this structure, while Ehrhard and Regnier worked
	with groups of permutations of occurrences of variables.

	Finally, we extend all the results to a nondeterministic setting: by
	contrast with previous attempts, we show that the uniformity property that
	was crucial in Ehrhard and Regnier's approach can be preserved in this
	setting.
\end{abstract}

\maketitle

\section{Introduction}

\subsection{Quantitative semantics}
The field of quantitative semantics, in the sense originally introduced by
Girard~\cite{girard:qs}, is currently very lively within the linear
logic community and beyond.
The basic idea is to interpret $\lambda$-terms as generalized power
series, associated with analytic maps --- instead of continuous maps,
\emph{à la} Scott.
The concept predates linear logic, and in fact it provided the foundations for it,
\emph{via} its simpler, qualitative counterpart: coherence spaces~\cite{girard:ll}.
It was later revisited, e.g.\ by Lamarche~\cite{lamarche:quantitative}
and Hasegawa~\cite{hasegawa:applications}, to provide a denotational
interpretation of linear logic proofs as matrices;
but the current momentum originates in the more recent introduction by Ehrhard~\cite{ehrhard:fs} of models of linear logic, based on a particular class of
topological vector spaces, and thus accommodating differentiation.

In that setting, the analytic maps associated with $\lambda$-terms are also
smooth maps, \emph{i.e.} they are infinitely differentiable.
This led to the differential extensions of $\lambda$-calculus~\cite{er:difflamb} and linear logic~\cite{er:diffnets} by Ehrhard and Regnier.
The keystone of this line of work is an analogue of the
Taylor expansion formula, which allows to translate terms (or proofs)
into infinite linear combinations of finite approximants~\cite{er:tay}:
in the case of $\lambda$-calculus, those approximants are the terms of a
resource calculus, in which the copies of arguments of a function must be
provided explicitly, and then consumed linearly, instead of duplicated or
discarded during reduction.

This renewed approach to quantitative semantics served as the basis of a
considerable amount of recent work: either as a framework for denotational
models accommodating linear combinations of maps~\cite[\emph{etc.}]{lmmp:weighted,laird:fp,tao:gen,ong:qs},
possibly in contexts where sums are constrained to a particular form,
such as the probabilistic setting~\cite[\emph{etc.}]{de:pcs,tao:pdist};
or as a tool for characterizing computational properties of programs
\emph{via} those of their approximants~\cite[\emph{etc.}]{mp:bohm,ptv:sn,dll:teplt,bm:taylor}.

Indeed, by contrast with denotational semantics, resource approximants
retain a dynamics, albeit very simple and finitary: the size of terms is
strictly decreasing under reduction.
The seminal result relating the reduction of $\lambda$-terms with that of their
approximants is the commutation between Taylor expansion and normalization:
Ehrhard and Regnier have shown that the Taylor expansion $\tayexp M$ of a
$\lambda$-term $M$ can always be normalized, and that its normal form is
nothing but the Taylor expansion of the Böhm tree $\BT{M}$ of $M$~\cite{er:tay,er:boh}.
In particular, the normal form of Taylor expansion defines a proper
denotational semantics.

\subsection{Contributions}
Ehrhard and Regnier's proof of the identity $\tayexp{\BT M}=\NF{\tayexp M}$
can be summed up as follows:
\begin{enumerate}[label={Step \arabic*:},align=left,ref={\arabic*}]
	\item\label{step:m}
		The non-zero coefficients of resource terms in $\tayexp M$
		do not depend on $M$. More precisely,
		we can write $\tayexp M=\sum_{s\in\taysup{M}} \frac {1}{m(s)} s$,
		where $\taysup{M}$ is the support set of Taylor expansion
		and $m(s)$ is an integer coefficient depending only on
		the resource term $s$.
	\item\label{step:clique}
		The set $\taysup{M}$ is a clique for the coherence relation obtained by setting
		$s\coh s'$ iff $s$ and $s'$ differ only by the multiplicity of arguments in applications.
	\item\label{step:disjoint}
		The respective supports of $NF(s)$ and $NF(s')$ are disjoint
		whenever $s\coh s'$ and $s\not=s'$.
		Then one can set
		$NF(\tayexp M)=\sum_{s\in\taysup{M}} \frac {1}{m(s)} NF(s)$,
		the summands being pairwise disjoint.
	\item\label{step:coefNF}
		If $s$ is uniform, \emph{i.e.} $s\coh s$,
		and $t$ is in the support of $NF(s)$
		(the normal form of $s$, which is a finite sum of resource terms)
		then $m(t)$ divides $m(s)$ and the coefficient of $t$ in $NF(s)$ is $\frac{m(s)}{m(t)}$.
	\item\label{step:taysup}
		By Step~\ref{step:m},
		$\tayexp{\BT{M}}=\sum_{t\in \taysup{\BT{M}}}\frac{1}{m(t)} t$.
		To deduce the identity $\tayexp{\BT M}=\NF{\tayexp M}$ from the previous results,
		it is then sufficient to prove that
		$t\in \taysup{\BT{M}}$ iff there exists $s\in\taysup M$
		such that $t$ is in the support of $NF(s)$.
\end{enumerate}
The first two steps are easy consequences of the definitions.
For Step~\ref{step:disjoint}, it is sufficient to follow a well chosen
normalization strategy, and check that it preserves coherence and that if two
coherent terms share a reduct then they are equal~\cite[Section~3]{er:tay}.
Step~\ref{step:coefNF} relies on a careful investigation
of the combinatorics of substitution in the resource calculus:
this involves an elaborate argument about the structure of particular
subgroups of the group of permutations of variable occurrences~\cite[Section~4]{er:tay}.
Finally, Ehrhard and Regnier establish Step~\ref{step:taysup}
by relating Taylor expansion with execution in an abstract machine~\cite{er:boh}.

In the present work, we propose to revisit this seminal result, along three directions.
\begin{enumerate}[label={(\roman*)}]
	\item\label{contrib:taysup}
		We largely simplify Step~\ref{step:taysup}, relying on a
		technique introduced by the second author~\cite{vaux:taylornf}.
		We consider the hereditary head reduction strategy (a slight variant
		of leftmost reduction, underlying the construction of Böhm trees)
		and show that it can be simulated directly in the resource calculus,
		through Taylor expansion.
		We thus avoid the intricacies of an abstract machine with resource state.
	\item\label{contrib:nd}
		We extend all the results to a model of nondeterminism,
		introduced as a formal binary choice operator in the calculus.
		By contrast with previous proposals to nondeterminism from Ehrhard~\cite{ehrhard:finres}, or Pagani, Tasson and Vaux Auclair~\cite{ptv:taylorsn,vaux:taylornf}, we show that uniformity
		can still be relied upon, provided one keeps track of choices
		in the resource calculus:
		the coherence associated with nondeterministic choice is then that of the
		\emph{with} connective ($\with$) of linear logic.
	\item\label{contrib:groupoid}
		We analyse coefficients in the Taylor expansion by introducing a groupoid
		whose objects are rigid resource terms, \emph{i.e.} resource terms in which
		multisets of arguments are replaced with lists,
		and whose isomorphisms are permutation terms, \emph{i.e.} terms equipped
		with permutations that act on lists of arguments.
		This is more in accordance with the intuition that $m(s)$ is
		the number of permutations of arguments that leave $s$
		(or rather, any rigid representation of $s$) invariant:
		Ehrhard and Regnier rather worked on permutations of variable
		occurrences, which allowed them to consider groups rather than a groupoid.
\end{enumerate}
Although we implement all three contributions together,
they are essentially independent of each other.
Indeed, the simplification of Step~\ref{step:taysup} brought by our
contribution~\ref{contrib:taysup} only concerns the compatibility of Taylor
expansion with normalization at the level of support sets, which does not
involve coefficients; and it does not rely on uniformity,
so its extension to nondeterministic superpositions is straightforward.

Moreover, while our contribution~\ref{contrib:nd} enables us to enforce the
uniformity condition of Steps~\ref{step:clique} to~\ref{step:coefNF}
in presence of a choice operator, it also ensures that distinct branches of a
choice have disjoint supports in the Taylor expansion.
This treatment of nondeterminism makes it completely transparent in the
computation of coefficients.
In particular, one could straightforwardly extend all steps of Ehrhard and
Regnier's proof in this setting, \emph{ceterit paribus}.

Our contribution~\ref{contrib:groupoid} is thus not needed for that endeavour:
it only offers an alternative viewpoint on the combinatorics of substitution
and normalization in the resource calculus, in a uniform setting.
Nonetheless, we consider it to be the main contribution of the paper, precisely
because of the new light it sheds on this dynamics, which in turn reveals
possible connections with other approaches.

\subsection{Scope and related works}
Our contribution~\ref{contrib:taysup} establishes that, although it is
interesting in itself, Ehrhard and Regnier's study of the relationship
between elements in the Taylor expansion of a term and its execution in an
abstract machine is essentially superfluous for proving the commutation theorem.

Barbarossa and Manzonetto have independently
proposed another technique which amounts to show that any reduction from an
element of $\taysup M$ can be completed into a sequence of reductions
simulating a $\beta$-reduction step~\cite[Section 4.1]{bm:taylor}.
The strength of our own proposal is that, rather than a mere simulation result,
we establish a commutation on the nose:
hereditary head reduction commutes with Taylor expansion,
at the level of supports.
Moreover, the Böhm tree of a $\lambda$-term is the limit of its hereditary head
reducts, which ensures that this commutation extends to normalization (Step~\ref{step:taysup}).
The same path was followed by Dal Lago and Leventis~\cite{dll:teplt} for the
probabilistic case.
Let us mention that the commutation with hereditary head reduction
actually holds not only at the level of supports, but also taking coefficients
into account~\cite{vaux:taylornf}, in the more general setting of the algebraic
$\lambda$-calculus~\cite{vaux:alg} and without any additional condition:
then, whenever the convergence of the sum defining the normal form of Taylor
expansion is assured, the main commutation theorem ensues directly.
This offers an alternative to the method of Ehrhard and Regnier that is the
focus of the present paper.

As stated before, our proposal~\ref{contrib:nd} to restore uniformity in a
nondeterministic setting is valid only because the resource calculus keeps a
syntactic track of choices.
The corresponding constructors are exactly those used by Tsukada, Asada and
Ong~\cite{tao:gen} who were interested in identifying equivalent execution paths
of nondeterministic programs, but those authors do not mention, nor rely upon
any coherence property: this forbids Steps~\ref{step:m} to~\ref{step:coefNF} and, instead, they depend on infinite sums of arbitrary
coefficients to be well defined.
By contrast, Dal Lago and Leventis have independently proposed nearly the same
solution as ours~\cite[Section 2.2]{dll:teplt}, with only a minor technical
difference in the case of sums.

The previous two proposals~\ref{contrib:taysup} and~\ref{contrib:nd} may be
considered as purely technical improvements of the state of the art in the
study of Taylor expansion.
What we deem to be the most meaningful contribution of the present paper is our
study of the groupoid of rigid resource terms.
This provides us with a new understanding of the coefficients
in the Taylor expansion of a term, in which we can recast the proof of the
commutation theorem, especially Step~\ref{step:coefNF}:
apart from this change of focus, the general architecture of our approach does
not depart much from that of Ehrhard and Regnier,
but we believe the obtained combinatorial results are closer to the original
intuition behind the definition of $m$.
In fact, a notable intermediate result (Lemma~\ref{lem:functor}, p.\pageref{lem:functor})
is that the function that maps each permutation term
to the permutation it induces on the occurrences of a fixed variable is functorial:
one might understand Ehrhard and Regnier's proof of Step~\ref{step:coefNF}
as the image of ours through that functor.
Moreover, our study suggests interesting connections with otherwise
independent approaches to denotational semantics based on generalized species
of structures~\cite{fiore:esp,tao:gen} and rigid intersection type systems~\cite{mazza:pol}.

It is indeed most natural to compare our proposals to the line of work of
Tsukada, Asada and Ong~\cite{tao:gen,tao:pdist}.
On the one hand, Tsukada \emph{et al.}
thrive to develop an abstract understanding of
reduction paths in a nondeterministic $\lambda$-calculus.
They are led to consider a polyadic calculus \emph{à la} Mazza~\cite{mazza:affine,mazza:pol} with syntactic markers for nondeterministic choice,
moreover obeying linearity, typing and $\eta$-expansion
constraints:
in particular, in that polyadic setting, $\lambda$-abstractions bind lists of variables,
each bound variable occurring exactly once.
Then, to each simple type, they associate a groupoid of intersection types:
an isomorphism in this groupoid acts on polyadic rigid terms by permuting
variables bound in abstractions and lists of arguments in applications,
in such a way that terms in its source intersection type yield terms in its target.
They show that the obtained collection of groupoids form a bicategorical model
of the simply typed $\lambda\mathbf Y$-calculus, the interpretation being given
by a polyadic rigid variant of Taylor expansion.
This interpretation is moreover isomorphic to the one
obtained in generalized species of structures~\cite{fiore:esp}.

On the other hand, our contribution~\ref{contrib:nd} shows that Ehrhard and Regnier's technique
can already be adapted to the same kind of nondeterminism as the one
considered by Tsukada \emph{et al.}, without introducing any
new concept.
Also, besides having markers for nondeterministic choice, the only difference
between our rigid terms and the ordinary resource terms is that arguments are
linearly ordered: we do not consider a polyadic version.
In fact, the same rigid terms were already used by Tsukada
\emph{et al.} as \emph{intermediate representations} of resource terms, in
order to recover Ehrhard and Regnier's commutation theorem as a by-product of
their construction~\cite[Section VI]{tao:gen}.
Moreover, our permutation terms are similar to their typed isomorphisms
and this suggests directions for further investigations.

A natural follow-up to the present work would thus be to explore possible
variations on our groupoid of permutation terms, and in particular adapt it to
a polyadic setting, also taking free variables into account.
We expect this study to yield a bicategorical model of the pure, untyped
$\lambda$-calculus, similarly induced by rigid Taylor expansion \emph{à la}
Tsukada--Asada--Ong. % chktex 8
Then potential connexions between the obtained model and the construction of
various reflexive objects in the bicategory of generalized species of
structures~\cite[Section~6.2]{fiore:esp} should be investigated.

Another possible route to the untyped setting, actively developed by
the first author, is to construct a category satisfying a domain-like equation
in the model of generalized species~\cite{olimpieri:intersection}.
The objects in this category are very much like intersection types, except that
the usual identities between types (commutativity and, possibly, idempotency)
are made explicit as morphisms,
which allows to develop a bicategorical treatment of intersection type systems.

\subsection{Structure of the paper}

In the very brief Section~\ref{sect:group},
we review some results from group theory that will
be useful later.

In Section~\ref{sect:nd} we extend the ordinary untyped
$\lambda$-calculus with a generic nondeterministic choice operator,
and present its operational semantics, inspired from that
of the algebraic $\lambda$-calculus, as well as the corresponding notion of
(non extensional) Böhm trees.

Section~\ref{sect:resource} recalls and adapts the definitions of the resource
calculus and Taylor expansion. We obtain Step~\ref{step:clique} as a
straightforward consequence of the definitions and Step~\ref{step:taysup}
by showing that the support of Taylor expansion is compatible with hereditary
head reduction --- this is our contribution~\ref{contrib:taysup}.
We moreover complete Step~\ref{step:m},
making prominent the rôle played by permutations acting on lists of resource
terms.

Section~\ref{sect:rigid} is the core of the paper,
developing our main contribution~\ref{contrib:groupoid}:
we introduce the rigid version of resource terms,
and the isomorphisms between them, given by permutation terms;
then we explore the relationship between the groupoid thus formed
and the combinatorics of Taylor expansion.
We first show that the coefficient $m(s)$
is nothing but the cardinality of the group of automorphisms
of any rigid version of $s$.
Then we study the structure of permutation terms between substitutions,
first in the general case, then in the uniform case
--- which is allowed in our nondeterministic setting
thanks to our contribution~\ref{contrib:nd}.
We leverage the obtained results to determine the coefficient
of any resource term in the symmetric multilinear substitution
associated with a reduction step issued from a uniform redex.

The final Section~\ref{sect:commutation} builds on the study of rigid resource
terms and permutation terms to achieve
Steps~\ref{step:disjoint} and~\ref{step:coefNF}.
We conclude the paper with the commutation theorem.

\section{Some basic facts on groups and group actions}%
\label{sect:group}

Let $\groupG$ be a group, $X$ be a set,
and write $(g,a)\in \groupG\times X\mapsto \lact{g}{a}\in X$
for a left action of $\groupG$ on $X$.
If $a\in X$, then the \definitive{stabilizer}
of $a$ under this action is
$\St(a)\eqdef \{g\in\groupG \mid \lact{g}{a}=a\}$,
which is a subgroup of $\groupG$
(also called the isotropy group of $a$);
and the \definitive{orbit} of $a$ is the set
$\lact{\groupG}a\eqdef \{\lact {g}{a}\mid g\in\groupG\}\subseteq X$.
If $\groupH,\groupK\subseteq \groupG$,
we write $\groupH\groupK\eqdef \{hk\mid h\in\groupH,\ k\in\groupK\}$.
If $f:X\to Y$, $X'\subseteq X$ and $Y'\subseteq Y$
we write $f(X')\eqdef \{ f(x)\mid x\in X'\}$
and $f^{-1}(Y')\eqdef \{ x\mid f(x)\in Y'\}$.

Assuming that $\groupG$ is finite,
the following three facts are standard results of group theory.
\begin{fact}%
	\label{fact:bijG}
	For any $a\in X$,
	\[
		\card(\lact{\groupG}a)=\dfrac{\card(\groupG)}{\card(\St(a))}
		\quad .
	\]
\end{fact}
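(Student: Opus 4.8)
The plan is to prove this by the classical orbit--stabilizer argument: exhibit a bijection between the orbit $\lact{\groupG}a$ and the set of left cosets of the stabilizer $\St(a)$, and then count. First I would consider the evaluation map $\phi\colon\groupG\to X$ defined by $\phi(g)\eqdef\lact{g}{a}$. By the very definition of the orbit, the image of $\phi$ is exactly $\lact{\groupG}a$, so $\phi$ corestricts to a surjection $\groupG\twoheadrightarrow\lact{\groupG}a$.

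The key step is to identify the fibres of $\phi$ with the left cosets of $\groupH\eqdef\St(a)$. Writing out the action axioms, one checks that for $g,h\in\groupG$,
\[
\phi(g)=\phi(h)
\iff \lact{(h^{-1}g)}{a}=a
\iff h^{-1}g\in\groupH
\iff g\in h\groupH,
\]
where the first equivalence uses that multiplying both sides of $\lact{g}{a}=\lact{h}{a}$ by $h^{-1}$ is reversible. Hence two elements have the same image under $\phi$ precisely when they lie in a common left coset of $\groupH$, and $\phi$ induces a well-defined bijection between $\lact{\groupG}a$ and the quotient $\groupG/\groupH$.

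Finally I would count. The left cosets of $\groupH$ partition $\groupG$, and since left multiplication by $h$ restricts to a bijection $\groupH\to h\groupH$, every coset has cardinality $\card(\groupH)$. As $\groupG$ is assumed finite there are exactly $\card(\lact{\groupG}a)$ such cosets by the bijection above, whence $\card(\groupG)=\card(\lact{\groupG}a)\cdot\card(\St(a))$; dividing by $\card(\St(a))$, which is nonzero since $\groupH$ contains the unit, yields the claimed identity. There is no genuine obstacle here — this is Lagrange's theorem combined with the fibre computation — the only point requiring a little care being the verification that the fibres of $\phi$ are precisely the left cosets, which rests on the action axioms.
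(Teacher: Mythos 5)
Your proof is correct: it is the classical orbit--stabilizer argument, identifying the fibres of the evaluation map $g\mapsto\lact{g}{a}$ with the left cosets of $\St(a)$ and then counting via Lagrange. The paper does not spell out a proof at all --- it simply cites this standard fact from Lang's \emph{Algebra} --- and your argument is precisely the textbook proof being referenced, so there is nothing to reconcile.
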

\begin{proof}%
	%This follows from the orbit-stabilizer theorem by Lagrange's theorem
	\cite[Proposition 5.1]{lang:algebra}.
\end{proof}

\begin{fact}%
	\label{fact:groupProduct}
	Let $\groupH$ and $\groupK$ be any subgroups of $\groupG$. Then
	\[
		\card(\groupH\groupK)
		= \dfrac{\card( \groupH)\card(\groupK)}
		{\card ( \groupH \cap \groupK)}
		\quad .
	\]
\end{fact}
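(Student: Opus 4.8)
The plan is to count the fibers of the multiplication map $\mu\colon\groupH\times\groupK\to\groupH\groupK$ sending $(h,k)\mapsto hk$, which is surjective by the very definition of $\groupH\groupK$. First I would set $D\eqdef\groupH\cap\groupK$ and observe that $D$ acts on the left on $\groupH\times\groupK$ by $d\cdot(h,k)\eqdef(hd^{-1},dk)$; this is well defined precisely because $d$ belongs to both $\groupH$ and $\groupK$. The crucial computation is that $\mu$ is constant on $D$-orbits, since $\mu(hd^{-1},dk)=(hd^{-1})(dk)=hk$, and conversely that equality of $\mu$-images forces membership in a common orbit. Indeed, from $hk=h'k'$ one derives $h^{-1}h'=k(k')^{-1}$; calling this common element $d$, it lies in $\groupH\cap\groupK=D$ and satisfies $hd=h'$ and $d^{-1}k=k'$, so that $(h',k')=d^{-1}\cdot(h,k)$. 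Hence $\mu$ induces a bijection between the set of $D$-orbits and $\groupH\groupK$.

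Next I would compute the size of each orbit. The action is free: if $d\cdot(h,k)=(h,k)$, then $hd^{-1}=h$ gives $d=e$, so every stabilizer is trivial. Applying Fact~\ref{fact:bijG} to this action of the finite group $D$, each orbit has cardinality $\card(D)/\card(\St(h,k))=\card(D)$. Since the orbits partition $\groupH\times\groupK$ and are in bijection with $\groupH\groupK$, counting in two ways yields $\card(\groupH)\card(\groupK)=\card(\groupH\times\groupK)=\card(\groupH\groupK)\cdot\card(D)$. Dividing through by $\card(D)=\card(\groupH\cap\groupK)$ gives the claimed identity.

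Being a classical formula, I expect no genuine obstacle here; the one point requiring care is the equivalence between equality of $\mu$-images and membership in a common $D$-orbit, and in particular verifying that the element $d=h^{-1}h'=k(k')^{-1}$ really does land in $\groupH\cap\groupK$. Once that equivalence is secured, the freeness of the action together with the orbit–stabilizer theorem (Fact~\ref{fact:bijG}) make the remaining count entirely routine. Note that this argument nowhere uses that $\groupH\groupK$ is a subgroup, which is fitting since in general it need not be.
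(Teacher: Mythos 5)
Your proof is correct: the fiber-counting argument for the multiplication map $\mu\colon\groupH\times\groupK\to\groupH\groupK$ is carried out carefully, including the key verification that $hk=h'k'$ forces $(h',k')$ into the same $(\groupH\cap\groupK)$-orbit as $(h,k)$, and the freeness of the action makes the count immediate. The paper gives no proof of its own here — it simply cites a standard reference — and your argument is exactly the classical one that such references contain, so there is nothing to compare beyond noting that your write-up is a complete, self-contained substitute for the citation.
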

\begin{proof}%
	%The result follows from the fact that $\groupH\groupK$
	%can be partitioned into left cosets $h\groupK$ with $h\in \groupH$
	%and that $h\groupK=h'\groupK$ precisely when
	%$h^{-1}h'\in \groupH\cap\groupK$
	\cite[§(3.11)]{suzuki:grouptheoryI}.
\end{proof}
\begin{fact}\label{fact:cardQuotient}
	Let $f:\groupG\to\groupH$ be a group homomorphism
	and $\groupK$ be a subgroup of $\groupH$. Then
	\[
		\frac{\card(\groupG)}{\card(f^{-1}(\groupK))}
		=
		\frac{\card(f(\groupG))}{\card(f(\groupG)\cap \groupK)}
		\quad .
	\]
\end{fact}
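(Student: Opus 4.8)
The plan is to reduce everything to the elementary fact that all nonempty fibres of a group homomorphism are equinumerous. Write $k\eqdef\card(\ker f)$ and recall that for every $y\in f(\groupG)$ the fibre $f^{-1}(\{y\})$ is a coset of $\ker f$, hence $\card(f^{-1}(\{y\}))=k$, while for $y\notin f(\groupG)$ the fibre is empty. With this in hand, the argument is pure counting and no structural machinery is really needed beyond the fact that $f$ is a homomorphism.

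First I would partition the domain as a disjoint union of fibres over the image. Since $f$ corestricts to a surjection onto $f(\groupG)$, we have $\groupG=\bigsqcup_{y\in f(\groupG)}f^{-1}(\{y\})$, and counting gives $\card(\groupG)=\card(f(\groupG))\cdot k$. Next I would do the same for the preimage of $\groupK$: a fibre $f^{-1}(\{y\})$ with $y\in\groupK$ is nonempty exactly when $y$ also lies in $f(\groupG)$, so $f^{-1}(\groupK)=\bigsqcup_{y\in\groupK\cap f(\groupG)}f^{-1}(\{y\})$, whence $\card(f^{-1}(\groupK))=\card(f(\groupG)\cap\groupK)\cdot k$. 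Dividing the first identity by the second cancels the common factor $k$ and yields precisely the claimed equality; finiteness of $\groupG$ is what makes all these cardinalities finite and the division legitimate.

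The only point requiring attention — rather than a genuine obstacle — is the bookkeeping for elements of $\groupK$ lying outside $f(\groupG)$: these contribute empty fibres and must be discarded from the count, which is exactly what turns $\groupK$ into the intersection $f(\groupG)\cap\groupK$ on the right-hand side. As an alternative derivation, one can invoke the isomorphism theorems: $f^{-1}(\groupK)$ is a subgroup of $\groupG$ containing $\ker f$, and the image of the restriction of $f$ to it is $f(\groupG)\cap\groupK$, so this restriction induces an isomorphism $f^{-1}(\groupK)/\ker f\cong f(\groupG)\cap\groupK$; combining this with $\groupG/\ker f\cong f(\groupG)$ and comparing orders gives the same conclusion. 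I would favour the direct fibre count, as it is self-contained and makes the rôle of $f(\groupG)\cap\groupK$ most transparent.
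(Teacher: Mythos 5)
Your proof is correct, and it is more self-contained than what the paper does: the paper disposes of this fact in one line by citing the correspondence theorem for homomorphisms (subgroups of $\groupG$ containing $\ker f$ correspond, with preservation of indices, to subgroups of $f(\groupG)$), together with the observation that $f(\groupG)\cap\groupK=f(f^{-1}(\groupK))$, so that $[\groupG:f^{-1}(\groupK)]=[f(\groupG):f(\groupG)\cap\groupK]$. Your primary argument instead proves the needed instance from scratch by partitioning $\groupG$ and $f^{-1}(\groupK)$ into fibres of $f$, each nonempty fibre being a coset of $\ker f$ of size $k=\card(\ker f)$, and cancelling $k$; this is elementary, makes the appearance of $f(\groupG)\cap\groupK$ transparent, and implicitly reproves the relevant part of the correspondence theorem. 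The division is legitimate because $\ker f\subseteq f^{-1}(\groupK)$ and the identity lies in $f(\groupG)\cap\groupK$, so neither denominator vanishes. Your alternative derivation via the isomorphism $f^{-1}(\groupK)/\ker f\cong f(\groupG)\cap\groupK$ is essentially the paper's route, so the two approaches meet there; what the fibre count buys is independence from any named theorem, at the cost of a little bookkeeping that the citation hides.
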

\begin{proof}
	By the theorem of correspondence under homomorphisms~\cite[Theorem 5.5 (1)]{suzuki:grouptheoryI},
  observing that $f(\groupG)\cap \groupK=f(f^{-1}(\groupK))$.
\end{proof}

\section{A generic nondeterministic \texorpdfstring{$\lambda$}{lambda} calculus}%
\label{sect:nd}

\subsection{\texorpdfstring{$\lambda_\oplus$}{lambdaplus}-terms}

We consider a nondeterministic version of the $\lambda$-calculus in a pure, untyped setting.
The terms are those of the pure $\lambda$-calculus, augmented with a binary
operator $\oplus$ denoting a form of nondeterministic superposition:\footnote{
Throughout the paper, we use a self explanatory if not standard variant of BNF notation
for introducing syntactic objects: here we define the set $\ndTerms$
as that inductively generated by variables, $\lambda$-abstraction, application and sum,
and we will denote terms using letters among $M,N,P,Q$, possibly with
sub- and superscripts.
}
\[ \ndTerms \ni M,N,P,Q \recdef x \mid \lambda x. M \mid MN \mid M \oplus N. \]
As usual $\lambda_{\oplus}$-terms are considered up to renaming bound variables,
and we write $M[N/x]$ for the capture avoiding substitution of $N$ for $x$ in $M$.
We give precedence to application over abstraction, and to abstraction over $\oplus$,
and moreover associate applications on the left,
so that we may write $\lambda x.MNP\oplus Q$ for $(\lambda x.((MN)P))\oplus Q$.
We write $\lambda \vec x.M$ for a term of the form $\lambda x_1.\cdots\lambda x_n.M$
(possibly with $n=0$).

Rather than specifying the computational effect of $\oplus$ explicitly,
by reducing $M\oplus M'$ to either $M$ or $M'$,
we consider two reductions rules
\[
	(M\oplus N)P\to MP\oplus NP \qquad\text{and}\qquad \lambda x. ( M \oplus N) \to \lambda x. M \oplus \lambda x. N
\]
in addition to the $\beta$-reduction rule.
  This is in accordance with most of the literature associated with
	the Taylor expansion of $\lambda$-terms~\cite{er:difflamb,ehrhard:finres,ptv:taylorsn,vaux:taylornf}
	and quantitative denotational semantics~\cite{ehrhard:fs},
	where nondeterministic choice is modelled by the sum
	of denotations:
  rather than the current state of a nondeterministic computation, a
  term represents a superposition of possible results.\footnote{
	In fact, only the rule $(M\oplus N)P\to MP\oplus NP$ is really
	necessary in order to enable the potential redexes that
	can occur if $M$ or $N$ is an abstraction:
	in the setting of quantitative semantics,
  term application is left-linear.
	The other reduction rule can be derived in case one admits extensionality in
	the models or the $\eta$-rule in the calculus (here we don't, though):
  having it in the calculus means that we follow a \emph{call-by-name}
  interpretation of nondeterministic evaluation,
  which amounts to $\lambda$-abstraction being linear~\cite{adcptv:cbvcbnvectorial}.
	The results of the paper could be developed similarly without it.
	We chose to keep it nonetheless, because it simplifies the
	underlying theory of Böhm trees and allows us to obtain
	Ehrhard and Regnier's results~\cite{er:tay,er:boh} as a particular case of
	our own.
}
  In particular, this approach allows to keep standard rewriting notions and
  techniques such as confluence, standardization, etc.
Formally, $\to$ is defined inductively by the inference rules
of Figure~\ref{fig:red}: we simply extend the three base cases
contextually.

Observe that neither the definition of terms nor that of reduction
make the choice operator commutative, associative nor idempotent:
e.g., $x\oplus y$ and $y\oplus x$ are two distinct normal forms.
It is possible to extend the reduction relation to validate the
structural properties associated with various kinds of superpositions
(plain nondeterministic choice, probabilistic choice or a more
general quantitative superposition) while retaining good
rewriting properties: we refer the reader to the work of
Leventis~\cite{leventis:standardization} for an
extensive study of this approach.

By contrast, for our purposes, it is essential to keep $\oplus$ as a free
binary operator: following Tsukada, Asada and Ong~\cite{tao:gen},
we keep track of the branching structure of choices along the reduction.
This information will be reflected in the Taylor expansion
to be introduced in Section~\ref{sect:resource}:
this is the key to recover the uniformity property while
allowing for nondeterministic superpositions of terms.

\begin{figure}[t]
	\begin{gather*}
		\begin{prooftree}
			\infer0{(\lambda x.M)N\to M[N/x]}
		\end{prooftree}
		\qquad
		\begin{prooftree}
			\infer0{(M\oplus N)P\to MP\oplus NP}
		\end{prooftree}
		\qquad
		\begin{prooftree}
			\infer0{\lambda x. ( M \oplus N) \to \lambda x. M \oplus \lambda x. N}
		\end{prooftree}
		\\[1ex]
		\begin{prooftree}
			\hypo{M\to M'}
			\infer1{\lambda x.M\to\lambda x.M'}
		\end{prooftree}
		\quad
		\begin{prooftree}
			\hypo{M\to M'}
			\infer1{MN\to M'N}
		\end{prooftree}
		\quad
		\begin{prooftree}
			\hypo{M\to M'}
			\infer1{NM\to NM'}
		\end{prooftree}
		\quad
		\begin{prooftree}
			\hypo{M\to M'}
			\infer1{M\oplus N\to M'\oplus N}
		\end{prooftree}
		\quad
		\begin{prooftree}
			\hypo{M\to M'}
			\infer1{N\oplus M\to N\oplus M'}
		\end{prooftree}
	\end{gather*}
	\caption{Reduction rules of the $\lambda_\oplus$-calculus}%
	\label{fig:red}
\end{figure}

In fact we will not really consider the reduction relation $\to$ in the present paper,
and rather focus on the \definitive{hereditary head reduction strategy} obtained
by defining the function $L:\Lambda_\oplus\to\Lambda_\oplus$
inductively as follows:
%\begin{figure}[t]
\begin{align*}
	L(M\oplus N)
	&\eqdef L(M)\oplus L(N)
	\\
	L(\lambda\vec x.\lambda y.(M\oplus N))
	&\eqdef \lambda\vec x.(\lambda y.M\oplus \lambda y.N)
	\\
	L(\lambda\vec x.(M \oplus N)P Q_1\cdots Q_k)
	&\eqdef \lambda\vec x.(MP \oplus NP)Q_1\cdots Q_k
	\\
	L(\lambda\vec x.y Q_1\cdots Q_k)
	&\eqdef \lambda\vec x.y L(Q_1)\cdots L(Q_k)
	\\
	L(\lambda\vec x.(\lambda y. M) N Q_1\cdots Q_k)
	&\eqdef \lambda\vec x.\subst{M}{y}{N}Q_1\cdots Q_k
	\quad.
\end{align*}
%\caption{Parallel left reduction of $\lambda_\oplus$-terms}
%\label{fig:L}
%\end{figure}
Observe that this definition is exhaustive because any term in $\Lambda_\oplus$
is either of the form $M\oplus N$ or of the form
$\lambda\vec x.\lambda y.(M\oplus N)$
or of the form $\lambda\vec x.RQ_1\cdots Q_k$ with
$R=(\lambda y. M) N$
or
$R=(M \oplus N)P$
or
$R=y$.

It should be clear that $M\to^* L(M)$ and that $L(M)=M$ whenever
$M$ is normal\footnote{
	If one considers $\oplus$ as a nondeterministic choice operator,
	normalizability is meant in its \emph{must} flavour here.
	Indeed, we do not perform the choice within the reduction relation itself,
	so $M\oplus N$ is normal iff $M$ and $N$ both are.
}
but the converse does not necessarily hold.
It can moreover be shown that any normalizable term $M$ reaches its normal form
by repeatedly applying the function $L$, for instance by adapting the
standardization techniques of Leventis~\cite{leventis:phd,leventis:standardization},
but this is not the focus of the present paper.
Indeed, we are only interested in the construction of Böhm trees, and we rely
on the fact that the Böhm tree of a term $M$ can be understood as the limit of
the sequence $(L^n(M))_{n\in\N}$, in a sense that we detail below.
In particular, $M$ and $L(M)$ have the same Böhm tree
(Lemma~\ref{lem:BTL}).

\subsection{B\"{o}hm trees}

We first define the set $\ndApprox$ of \definitive{term approximants} as follows:
\[ \ndApprox \ni M,N,P,Q \recdef \bot \mid x \mid \lambda x. M \mid MN \mid M\oplus N \]
then we consider the least partial order
$ {\leq} \subseteq \ndApprox\times \ndApprox$
that is compatible with syntactic constructs
and such that $\bot\le M$ for each $M\in\ndApprox$.
Formally, $\leq$ is defined inductively
by the rules of Figure~\ref{fig:leq}.

\begin{figure}[t]
	\begin{gather*}
		\begin{prooftree}
			\infer0[]{ \bot \leq M }
		\end{prooftree}
		\qquad
		\begin{prooftree}
			\infer0[]{ M \leq M}
		\end{prooftree}
		\qquad
		\begin{prooftree}
			\hypo{ M \leq N}
			\hypo{ N \leq P}
			\infer2[]{ M \leq P}
		\end{prooftree}
		\\[1ex]
		\begin{prooftree}
			\hypo{ M \leq M'}
			\infer1[]{ \lambda x. M \leq \lambda x. M' }
		\end{prooftree}
		\qquad
		\begin{prooftree}
			\hypo{ M \leq M'}
			\hypo{ N \leq N'}
			\infer2[]{ MN \leq M'N' }
		\end{prooftree}
		\qquad
		\begin{prooftree}
			\hypo{ M \leq M'}
			\hypo{ N \leq N'}
			\infer2[]{ M \oplus N \leq M' \oplus N' }
		\end{prooftree}
	\end{gather*}
\caption{The approximation order on $\ndApprox$.}%
\label{fig:leq}
\end{figure}

The set $\eBTs\subset \ndApprox$ of
\definitive{elementary Böhm trees} is
the least set of approximants such that:
\begin{itemize}
	\item $\bot\in\eBTs$;
	\item $\lambda \vec x. x N_{1} \cdots N_{n} \in\eBTs$
    as soon as $N_{1}, \dotsc , N_{n}\in\eBTs$;\footnote{
      Here the sequence $\lambda\vec x$ of abstractions can be empty,
      and we can have $n=0$,
      in which case the body of the term is just the head variable.
    } and
	\item $N_{1}\oplus N_{2} \in\eBTs$
		as soon as $N_{1}, N_{2}\in\eBTs$.
\end{itemize}
For each $\lambda_\oplus$-term $M$, we construct an elementary B\"{o}hm tree $\eBT M$
as follows:
\begin{align*}
	\eBT{M \oplus N}
	&\eqdef
	\eBT M \oplus \eBT N
	\\
	\eBT{\lambda \vec x. x Q_1 \cdots Q_k}
	&\eqdef
	\lambda \vec x. x \eBT{Q_1} \cdots \eBT{Q_k}
	\\
	\eBT{M}
	&\eqdef
	\bot
	\qquad\text{in all other cases.}
\end{align*}

\begin{lem}\label{lem:increasingBT}
For any $M\in\ndTerms$,
$\eBT M\le \eBT{L(M)}$.
\end{lem}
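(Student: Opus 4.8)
The plan is to proceed by structural induction on $M$, organising the case analysis exactly along the five clauses that define $L$; recall that the excerpt has already observed these clauses to be exhaustive. The crucial preliminary observation is that $\eBT{-}$ collapses to $\bot$ on every term whose head is an active redex for $L$. Concretely, in the three cases
\[
	M=\lambda\vec x.\lambda y.(M_1\oplus M_2),\quad
	M=\lambda\vec x.(M_1\oplus M_2)P\,Q_1\cdots Q_k,\quad
	M=\lambda\vec x.(\lambda y.M_1)N\,Q_1\cdots Q_k,
\]
the term $M$ is neither a top-level sum nor of the shape $\lambda\vec z.z\,Q_1\cdots Q_k$: after stripping the leading abstractions, the body is respectively a sum, a sum applied to arguments, and a $\beta$-redex applied to arguments, so in each case its head is not a variable. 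Hence the third, catch-all clause of $\eBT{-}$ applies and $\eBT{M}=\bot$. Since $\bot\le P$ for every approximant $P$, the desired inequality $\eBT{M}\le\eBT{L(M)}$ holds trivially in all three of these cases, whatever $\eBT{L(M)}$ may be.

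This leaves only the two cases in which $L$ acts by recursion through stable constructors. If $M=M_1\oplus M_2$, then $L(M)=L(M_1)\oplus L(M_2)$ and both $\eBT{-}$ values are computed by the sum clause, so that $\eBT{M}=\eBT{M_1}\oplus\eBT{M_2}$ and $\eBT{L(M)}=\eBT{L(M_1)}\oplus\eBT{L(M_2)}$; applying the induction hypothesis to the proper subterms $M_1,M_2$ and closing under the congruence rule for $\oplus$ from Figure~\ref{fig:leq} yields the claim. If $M=\lambda\vec x.y\,Q_1\cdots Q_k$, then $L(M)=\lambda\vec x.y\,L(Q_1)\cdots L(Q_k)$, both sides fall under the head-variable clause of $\eBT{-}$, and the induction hypothesis gives $\eBT{Q_i}\le\eBT{L(Q_i)}$ for each $i$; closing under the congruence rules for application and abstraction from Figure~\ref{fig:leq} then gives $\eBT{M}\le\eBT{L(M)}$.

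I expect no serious obstacle here: the entire content of the argument is the bookkeeping observation that the genuinely reducing clauses of $L$ are precisely those on which $\eBT{-}$ returns $\bot$, which short-circuits the comparison, while the two non-reducing clauses commute with $\eBT{-}$ and let the induction go through by monotonicity. The only point demanding a little care is to check, in each of the three redex cases, that $M$ really does fall outside the first two clauses of $\eBT{-}$ --- that is, that after stripping leading abstractions the head of the body is not a variable --- so that $\eBT{M}=\bot$ is justified; this is immediate from the syntactic shape of $M$ in each clause. Finally, note that the induction is plainly well-founded, since in both non-trivial cases the hypothesis is invoked only on proper subterms of $M$.
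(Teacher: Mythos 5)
Your proof is correct and follows essentially the same route as the paper's: induction on $M$, with the sum and head-variable cases handled by the induction hypothesis and the congruence rules for $\le$, and all remaining cases dispatched by observing that $\eBT{M}=\bot$. You merely spell out explicitly the three redex-shaped cases that the paper compresses into a single ``otherwise'' clause.
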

\begin{proof}
By induction on $M$.
If $M=M_1\oplus M_2$ then
$\eBT M=\eBT{M_1}\oplus\eBT{M_2}$
and
$L(M)=L(M_1)\oplus L(M_2)$,
hence
$\eBT{L(M)}=\eBT{L(M_1)}\oplus\eBT{L(M_2)}$
and we conclude by induction hypothesis.
The case $M=\lambda \vec x.x Q_1\cdots Q_k$ is similar.
Otherwise,
$\eBT M=\bot\le \eBT{L(M)}$.
\end{proof}

Hence for a fixed $\lambda_\oplus$-term $M$, the sequence
$(\eBT{L^n(M)})_{n\in\N}$ is increasing,
and we call its downwards closure in $\eBTs$ \definitive{the Böhm tree of $M$},
which we denote by $\BT M$:
\emph{i.e.} we set $\BT M \eqdef\{ N\in\eBTs\mid \exists n\in\N,\ N\le \eBT{L^n(M)} \}$.
\begin{exa}%
  \label{exa:BT}
  Let $M=\Theta\lambda y.(y\oplus x)$
  where $\Theta$ is Turing's  fixpoint combinator,
  so that $L^3(M)=M\oplus x$.
  We can think of $\BT{M}$ as the infinite tree
  $((\cdots\oplus x)\oplus x)\oplus x$:
  formally, $\BT{M}=\{\bot\oplus nx\mid n\in\N \}$ where
  we define inductively $M\oplus 0N=M$ and $M\oplus(n+1)N=(M\oplus nN)\oplus N$.
\end{exa}

It could be shown that Böhm trees define a denotational semantics:
if $M\to M'$ then $BT(M)=BT(M')$.\footnote{
	Again, this would require the adaptation of standardization techniques
	to $\lambda_\oplus$, similar to those developed by
	Leventis for the probabilistic $\lambda$-calculus~\cite{leventis:standardization}.
}
Here we just observe that Böhm trees are invariant under hereditary head reduction,
which follows directly from the definition:

\begin{lem}\label{lem:BTL}
  Let $M\in \Lambda_{\oplus}$. Then $ \BT M = \BT{L(M)}$.
\end{lem}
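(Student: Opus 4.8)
The plan is to reduce the identity to the one-step monotonicity supplied by Lemma~\ref{lem:increasingBT}, by unfolding the definition of $\BT{\cdot}$ as a downward closure. Indeed, by definition $\BT{L(M)} = \{N \in \eBTs \mid \exists n \in \N,\ N \le \eBT{L^n(L(M))}\}$, and since $L^n(L(M)) = L^{n+1}(M)$, the set $\BT{L(M)}$ is exactly the downward closure in $\eBTs$ of the tail $(\eBT{L^{n+1}(M)})_{n \in \N}$ of the very sequence whose downward closure defines $\BT M$. Once the statement is phrased this way, both inclusions amount to bookkeeping on the index $n$, together with a single appeal to monotonicity to handle the missing initial term.

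For the inclusion $\BT{L(M)} \subseteq \BT M$, I would simply note that any $N \in \BT{L(M)}$ satisfies $N \le \eBT{L^{n+1}(M)}$ for some $n$, so the index $n+1$ already witnesses $N \in \BT M$; no use of Lemma~\ref{lem:increasingBT} is needed here. For the converse $\BT M \subseteq \BT{L(M)}$, take $N \in \BT M$, so that $N \le \eBT{L^n(M)}$ for some $n \in \N$. If $n \ge 1$, writing $n = m+1$ exhibits $N \le \eBT{L^m(L(M))}$, whence $N \in \BT{L(M)}$.

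The only case that actually consumes the previous results is $n = 0$: there $N \le \eBT M$, and Lemma~\ref{lem:increasingBT} gives $\eBT M \le \eBT{L(M)}$, so by transitivity of $\le$ we obtain $N \le \eBT{L(M)} = \eBT{L^0(L(M))}$, hence $N \in \BT{L(M)}$.

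I do not expect any genuine obstacle here: the argument is a direct unfolding of the definition of $\BT{\cdot}$, and all the real content is already isolated in Lemma~\ref{lem:increasingBT}, which ensures that dropping the first term of the defining sequence leaves its downward closure unchanged. The single point deserving a moment of care is the base case $n = 0$, where the defining sequence is effectively extended by one term on the left and monotonicity must be invoked.
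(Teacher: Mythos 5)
Your proof is correct and matches the paper's intent: the paper gives no explicit proof, stating only that the lemma ``follows directly from the definition,'' and your unfolding of $\BT{\cdot}$ as the downward closure of the sequence $(\eBT{L^n(M)})_{n\in\N}$, with Lemma~\ref{lem:increasingBT} handling the $n=0$ case, is precisely the intended argument.
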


It will be sufficient to follow this strategy in order to establish Step~\ref{step:taysup},
\emph{i.e.} the qualitative version of the commutation between normalization
and the Taylor expansion of $\lambda_\oplus$-terms, to be defined
in the next section.

\section{Taylor expansion in a uniform nondeterministic setting}%
\label{sect:resource}

In order to define Taylor expansion, we need to introduce an auxiliary language:
the resource calculus.

\subsection{Resource terms}%
\label{sect:resourceTerms}

	We call \definitive{resource expressions} the elements of
	$\resTerms\cup\resMonomials$,
	where the set $\resTerms$ of \definitive{resource terms}
	and the set $\resMonomials$ of \definitive{resource monomials}
	are defined by mutual induction  as follows:\footnote{
		Recall that the cartesian product of vector spaces
		is given by the disjoint union of bases:
		this is the intuition behind the operators $\linj{-}$ and $\rinj{-}$,
		which will serve in the Taylor expansion of the operator $\oplus$
		of $\Lambda_\oplus$.
		Again, we leave the exact computational behavior of $\oplus$ unspecified,
		and we treat it generically as a pairing operator (without projections):
    in this we follow Tsukada \emph{et al.}~\cite{tao:gen}.
	}
\[
	\resTerms \ni s,t,u,v \recdef
		x \mid \lambda x. s \mid \rappl{s}{\bar{t}}
		\mid \linj{s} \mid \rinj{s}
	\qquad\qquad
	\resMonomials \ni \bar s,\bar t,\bar u,\bar v \recdef
		[s_1,\dotsc,s_n]
\]
and,
in addition to $\alpha$-equivalence, we consider resource expressions
up to permutations of terms in monomials,
so that $[s_1,\dotsc,s_n]$ denotes a multiset of terms.
We give precedence to application and abstraction over
$\linj{-}$ and $\rinj{-}$,
and we write $\rappl s{\bar t_1\cdots\bar t_n}$
for $\rappl{\cdots \rappl s{\bar t_1}\cdots}{\bar t_n}$,
so that we may write $\linj{\lambda x.\rappl s{\bar t\,\bar u}}$
for $\linj{(\lambda x.(\rappl{\rappl s{\bar t}}{\bar u}))}$.
We write $\lambda \vec x.s$ for a term of the form $\lambda x_1.\cdots\lambda x_n.s$.
We moreover write $\bar s\cdot\bar t$ for the multiset union of $\bar s$ and $\bar t$,
and if $\bar s=[s_1,\dotsc,s_n]$ then we write $\length{\bar s}\eqdef n$
for the size of $\bar s$;
in particular $\length{\bar s}=0$ iff $\bar s$ is the empty multiset $[]$,
which is neutral for multiset union.

If $X$ is a set, we write $\sums X$ for the freely generated commutative monoid over $X$:
formally, this is the same as the set of finite multisets of elements of $X$
but we choose to consider its elements as finite linear combinations of elements of $X$ with coefficients in $\N$.
In the following, we write $\resExprs$ for either $\resTerms$
or $\resMonomials$, so that $\resExprSums$ is either $\resTermSums$
or $\resMonomialSums$:
when we consider a sum $E$ of resource expressions,
we always require $E$ to be a sum of terms or a sum of monomials,
\emph{i.e.} $E\in\resExprSums$.
Then we write $\supp(E)\subseteq\resExprs$ for the \definitive{support set} of $E$, which is finite.
We extend  the syntactic constructs of the resource calculus to finite sums of resource expressions by linearity, so that:
\begin{itemize}
	\item if $ S = \sum_{i=1}^{n} s_{i} $ then
		$ \lambda x. S = \sum_{i=1}^{n} \lambda x. s_{i}$,
		$ \bullet \oplus S = \sum_{i=1}^{n}  \bullet \oplus s_{i}$
		and $  S \oplus \bullet = \sum_{i=1}^{n}  s_{i} \oplus \bullet$;
	\item if moreover $ \bar{T} = \sum_{j=1}^{m} \bar{t}_{j} $ then
		$\langle S \rangle \bar{T} = \sum_{i=1}^{n} \sum_{j=1}^{m} \langle s_{i} \rangle \bar{t}_{j}$
		and $ [S] \cdot \bar{T} = \sum_{i=1}^{n} \sum_{j=1}^{m} [s_{i}] \cdot \bar{t}_{j}  $.
\end{itemize}

\noindent
For any resource expression $e\in\resExprs$, we write $n_x(e)$ for the number of occurrences of variable $x$ in $e$.
  If moreover $ \bar{u}= [u_{1},\dotsc,u_{n}] \in \resMonomials $,
	we introduce the \definitive{symmetric $n$-linear substitution} $ \nsubst ex{\bar u}\in\resExprSums$
	of $ \bar{u} $ for the variable $ x $ in $ e $, which is informally defined as follows:
	\[
		\nsubst ex{\bar u} \eqdef \begin{cases}
			\sum\limits_{\sigma\in \permutations{n}} e[ u_{\sigma(1)}/x_{1},\dotsc, u_{\sigma(n)}/x_{n} ] & \text{ if } n_{x}(e) = n
			\\
			0 & \text{ otherwise}
		 \end{cases}
	\]
	where $x_1,\dotsc,x_{n_x(e)}$ enumerate the occurrences of $x$ in $e$.\footnote{
		Enumerating the occurrences of $x$ in $e$ only makes sense
		if we fix an ordering of each monomial in $e$:
		the rigid resource calculus to be introduced later in the paper
		will allow us to give a more formal account of this intuitive presentation.
		For now we stick to the alternative definition given in the next paragraph.
	}

Formally,
$\nsubst ex{\bar u}$ is defined by induction on $e$, setting:
\begin{align*}
	\partial_{x} y \cdot \bar{u}
	&\eqdef
	\begin{cases}   y &  \text{if} \  y \neq x   \text{ and }  n = 0\\
	u_{1} & \text{if }   y = x  \text{ and }  n= 1  \\ 0 & \text{otherwise}\end{cases}
	\\
	\partial_{x} {\lambda y. s}\cdot \bar{u}
	&\eqdef
	\lambda y. ( \partial_{x} s \cdot \bar{u})
	\\
	\partial_{x} {(\linj s)}\cdot \bar{u}
	&\eqdef
	\linj{\partial_{x} s \cdot \bar{u}}
	\\
	\partial_{x} {(\rinj s)}\cdot \bar{u}
	&\eqdef
	\rinj{\partial_{x} s \cdot \bar{u}}
	\\
	\nsubst {\langle s \rangle \bar{t} } {x}  {\bar{u}} &\eqdef
  \sum_{(I_{1}, I_{2}) \in\wcomp{2}{n}} \rappl { \nsubst  {s}{x} {\bar{u}_{I_{1}}} } {\nsubst  {\bar{t}} {x} {\bar{u}_{I_{2}} }}
	\\
	\partial_{x} [t_{1},\dotsc, t_{k}]\cdot \bar{u} &\eqdef
  \sum_{(I_{1},\dotsc, I_{k}) \in\wcomp{k}{n}} [\partial_{x} t_{1} \cdot \bar{u}_{I_{1}},\dotsc, \partial_{x} t_{n} \cdot \bar{u}_{I_{k}} ]
\end{align*}
where $\wcomp{k}{n}$ denotes the set of $k$-tuples $(I_1,\dots,I_k)$
of (possibly empty) pairwise disjoint subsets of $\{1,\dotsc,n\}$
such that $\bigcup_{j=1}^k I_j=\{1,\dotsc,n\}$,\footnote{
	Note that this data is equivalent to a function $\{1,\dotsc,n\}\to\{1,\dotsc,k\}$.
}
and we write $\bar{u}_{\{i_1,\dotsc,i_j\}}\eqdef [u_{i_1},\dotsc,u_{i_j}]$
whenever $1\le i_1<\dotsc<i_j\le n$.
It is easy to check that $\nsubst ex{\bar t}\not=0$ iff $n_x(e)=\length{\bar t}$.

\begin{figure}[t]
	\begin{gather*}
		\begin{prooftree}
			\infer0{\rappl{\lambda x.s}{\bar t}\to_\partial \nsubst sx{\bar t}}
		\end{prooftree}
		\qquad
		\begin{prooftree}
			\infer0{\rappl{\linj s}{\bar t}\to_\partial \linj{\rappl s{\bar t}}}
		\end{prooftree}
		\qquad
		\begin{prooftree}
			\infer0{\rappl{\rinj s}{\bar t}\to_\partial \rinj{\rappl s{\bar t}}}
		\end{prooftree}
		\\[1ex]
		\begin{prooftree}
			\infer0{\lambda x.{(\linj s)}\to_\partial \linj{\lambda x.s}}
		\end{prooftree}
		\qquad
		\begin{prooftree}
			\infer0{\lambda x.{(\rinj s)}\to_\partial \rinj{\lambda x.s}}
		\end{prooftree}
		\\[1ex]
		\begin{prooftree}
			\hypo{s\to_\partial S'}
			\infer1{\lambda x.s\to_\partial\lambda x.S'}
		\end{prooftree}
		\qquad
		\begin{prooftree}
			\hypo{s\to_\partial S'}
			\infer1{\rappl s{\bar t}\to_\partial\rappl{S'}{\bar t}}
		\end{prooftree}
		\qquad
		\begin{prooftree}
			\hypo{\bar s\to_\partial \bar S'}
			\infer1{\rappl{t}{\bar s}\to_\partial \rappl{t}{\bar S'}}
		\end{prooftree}
		\\[1ex]
		\begin{prooftree}
			\hypo{s\to_\partial S'}
			\infer1{\linj s\to_\partial \linj{S'}}
		\end{prooftree}
		\qquad
		\begin{prooftree}
			\hypo{s\to_\partial S'}
			\infer1{\rinj s\to_\partial \rinj{S'}}
		\end{prooftree}
		\qquad
		\begin{prooftree}
			\hypo{s\to_\partial S'}
			\infer1{[s]\cdot{\bar t}\to_\partial[S']\cdot{\bar t}}
		\end{prooftree}
	\end{gather*}
	\caption{Reduction rules of the resource calculus with sums}%
	\label{fig:resred}
\end{figure}

The reduction of the resource calculus is the relation from
resource expressions to finite formal sums of resource expressions
induced by the rules of Figure~\ref{fig:resred}:
the first rule is the counterpart of $\beta$-reduction in the resource calculus;
the next four rules implement the commutation of $\oplus$ with
abstraction and application to a monomial; the final six rules
ensure the contextuality of the resulting relation.
It is extended
to a binary relation on $\sums{\resExprs}$ by setting
$e+F\to_\partial E'+F$ whenever $e\to_\partial E'$.
As in the case of the original resource calculus~\cite{er:tay},
the reduction relation $\to_{\partial}$ is confluent and strongly normalizing.
Confluence may be proved following the same technique
as for the original resource calculus~\cite[Section 3.4]{vaux:taylornf}:
we do not provide any detail, because we will soon focus on a reduction
strategy, which is functional.
For strong normalization, slightly more care is needed, because
the size of expressions does not necessarily decrease under reduction:
\begin{lem}%
  \label{lem:SN}
  The reduction
  $\mathord\to_\partial\subseteq \sums{\resExprs}\times\sums{\resExprs}$
  is strictly normalizing.
\end{lem}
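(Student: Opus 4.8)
The plan is to exhibit a well-founded measure on single resource expressions that strictly decreases along every one-step reduction, and then lift it to sums through a multiset ordering. First I would reduce the statement to a claim about the one-step relation on $\resExprs$ given by the rules of Figure~\ref{fig:resred}, before the extension by $e+F\to_\partial E'+F$. Recall that an element of $\resExprSums$ is a finite multiset of resource expressions with multiplicities in $\N$, and that a step $e+F\to_\partial E'+F$ amounts to removing one occurrence of $e$ and inserting the finitely many expressions of $\supp(E')$. Hence, by the standard fact that the multiset extension of a well-founded order is again well-founded, it suffices to find a well-founded order $\prec$ on $\resExprs$ such that a single-expression reduction $e\to_\partial E'$ entails $e'\prec e$ for every $e'\in\supp(E')$.

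Second, I would take $\prec$ to be the lexicographic order induced by a pair $\mu(e)\eqdef(\mathrm{sz}(e),c(e))\in\N\times\N$. Here $\mathrm{sz}(e)$ is the total number of syntactic nodes of $e$ (counting variables, abstractions, applications, monomial brackets, and the injections $\linj{-}$ and $\rinj{-}$), extended additively to monomials. The second component $c(e)$ must decrease exactly along the four commutation rules, which preserve $\mathrm{sz}$. The naive choice ``sum of the depths of all injection nodes'' fails, because the rule $\rappl{\linj s}{\bar t}\to_\partial\linj{\rappl s{\bar t}}$ pushes the argument $\bar t$ one level deeper. I would instead set $c(e)\eqdef\sum_{a}h(a)$, the sum over all injection nodes $a$ of $e$ of the number $h(a)$ of \emph{non-injection} ancestors of $a$. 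The point is that a commutation step swaps one injection node with an adjacent abstraction or application: the swapped injection loses exactly one non-injection ancestor, while every other injection node retains the same set of non-injection ancestors (in particular the displaced $\lambda$ or application node remains an ancestor of the material below it), so $c$ drops by exactly one.

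Third, I would establish the key claim by induction on the derivation of $e\to_\partial E'$. For the resource $\beta$-rule $\rappl{\lambda x.s}{\bar t}\to_\partial\nsubst sx{\bar t}$ one checks that $\mathrm{sz}$ strictly decreases: when $\nsubst sx{\bar t}\neq 0$ we have $n_x(s)=\length{\bar t}$, and each summand merely relocates the monomial terms into $s$ while deleting one application node, one abstraction, the brackets, and the $n_x(s)$ occurrences of $x$; all summands share this size, so $\mathrm{sz}$ drops independently of $c$ (and if $\nsubst sx{\bar t}=0$ the support is empty and there is nothing to prove). The four commutation axioms keep $\mathrm{sz}$ fixed and lower $c$ by one, as computed above. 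For the contextual rules I would use that embedding an expression $f$ in a fixed one-hole context $C$ adds a constant to $\mathrm{sz}$ and adds $k\cdot N(f)$ to $c$, where $k$ is the number of non-injection ancestors of the hole and $N(f)$ the number of injection nodes of $f$; since commutation preserves $N$ while the $\beta$-case already wins on the first component, the inequality $\mu(C[f'])\prec\mu(C[f])$ follows for each summand $f'$.

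I expect the main obstacle to be precisely the design and bookkeeping of $c$: the subtlety that application-commutation moves the argument downward rules out the obvious depth measure, and one must verify that counting non-injection ancestors of injection nodes is genuinely invariant for all nodes other than the one being pushed, both for the base commutations and under arbitrary contexts. Once $\mu$ is fixed and this invariance is checked, the reduction to the multiset ordering and the inductive verification are routine, and strong normalization of $\to_\partial$ on $\resExprSums$ ensues.
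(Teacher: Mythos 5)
Your proof is correct and follows essentially the same strategy as the paper's: a lexicographic measure whose first component strictly decreases under the resource $\beta$-rule and is invariant under the $\oplus$-commutations, whose second component strictly decreases under the commutations, the whole being lifted to $\sums{\resExprs}$ by the multiset extension of a well-founded order. The only difference is in the bookkeeping: the paper takes the number of abstractions as first component and, as second, the multiset recording for each abstraction or application node the number of injection markers below it in body/function position, whereas you take total size and the sum over injection nodes of their non-injection ancestors --- essentially a transposed count of the same quantity.
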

\begin{proof}
  If $e\in\resExprs$, we write $n_\lambda(e)\in\N$
  (resp. $n_\oplus(e)\in\N$)
  for the number of abstractions (resp.\ of $\oplus$) occurring in $e$.
  Let $\#_\oplus(e)$ denote the multiset of natural numbers
  containing a value $n_\oplus (s)$ for each occurrence
  of a subterm $\lambda x.s$ or $\rappl s{\bar t}$ in $e$.
  Formally:
  \begin{gather*}
    \#_\oplus(x) \eqdef []
    \quad
    \#_\oplus(\lambda x.s) \eqdef [n_\oplus(s)]\cdot\#_\oplus(s)
    \quad
    \#_\oplus(\linj s)\eqdef \#_\oplus(s)
    \quad
    \#_\oplus(\rinj s)\eqdef \#_\oplus(s)
    \\
    \#_\oplus(\rappl s{\bar t})\eqdef [n_\oplus(s)]\cdot\#_\oplus(s)\cdot\#_\oplus(\bar t)
    \qquad
    \#_\oplus([s_1,\dotsc,s_n])\eqdef \#_\oplus(s_1)\cdot\cdots\cdot\#_\oplus(s_n)
  \end{gather*}
  where we use the same notations for multisets as for monomials.

  We first establish that, for all $e\in\resExprs$,
  $\bar t\in\resMonomials$ and $e'\in\supp(\nsubst ex{\bar t})$,
  we have $n_\lambda(e')=n_\lambda(e)-1$:
  the proof is by a straightforward induction on $e$.

  Then, whenever $e\to_\partial E'$ and $e'\in\supp(E')$,
  we have $n_\oplus(e')=n_\oplus(e)$, and:
  \begin{enumerate}
    \item\label{snproof:i}
      either $n_\lambda(e')=n_\lambda(e)-1$;
    \item\label{snproof:ii}
      or $n_\lambda(e')=n_\lambda(e)$,
      and we can write $\#_\oplus(e)=\bar n\cdot [n+1]$
      and $\#_\oplus(e')=\bar n\cdot [n]$.
  \end{enumerate}
  The proof is by induction on the derivation of $e\to_\partial E'$:
  the $\beta$-redex case holds using the previous result on multilinear
  substitution to obtain (\ref{snproof:i});
  the other four base cases yield (\ref{snproof:ii});
  and each other case follows straightforwardly by the induction hypothesis.

  Now, for each $E=e_1+\cdots+e_n\in\resExprSums$,
  we write $\#_\lambda(E)=[n_\lambda(e_1),\dotsc,n_\lambda(e_n)]$
  and $\#_\oplus(E)=\#_\oplus(e_1)\cdot\cdots\cdot\#_\oplus(e_n)$.
  By the previous result and the definition of $\to_\partial$ on
  sums of resource expressions:
  if $E\to_\partial E'$ then either $\#_\lambda(E)<\#_\lambda(E)$, or
  $\#_\lambda(E)=\#_\lambda(E)$ and $\#_\oplus(E)<\#_\oplus(E)$,
  considering the multiset order.
  We conclude since the latter is well-founded.
\end{proof}

We write $\NF{E}$ for the unique normal form of $E\in\resExprSums$,
which is a linear operator:
$\NF{\sum_{i=1}^k e_i}=\sum_{i=1}^k \NF{e_i}$.
As stated before,
we do not focus on the reduction relation itself, and we rather
consider the \definitive{hereditary head reduction} strategy obtained by
defining the function $L:\resExprs\to\sums{\resExprs}$
inductively as follows:
\begin{gather*}
	\begin{aligned}
		L(\linj s)
		&\eqdef
		\linj{L(s)}
    &
    &\quad&
		L(\rinj s)
		&\eqdef
		\rinj{L(s)}
		\\
		L( \lambda\vec x.\lambda y. (\linj s))
		&\eqdef
		\lambda\vec x.(\linj{\lambda y.s})
    &
    &\quad&
		L( \lambda\vec x.\lambda y. (\rinj s))
		&\eqdef
		\lambda\vec x.(\rinj{\lambda y.s})
	\end{aligned}
	\\
	\begin{aligned}
		L(\lambda\vec x.\rappl{\rappl{\linj {s}}{\bar t}}{\bar u_1\cdots\bar u_k})
		&\eqdef
		\lambda\vec x.\rappl{\linj {\rappl{s}{\bar t}}}{\bar u_1\cdots\bar u_k}
		\\
		L(\lambda\vec x.\rappl{\rappl{\rinj {s}}{\bar t}}{\bar u_1\cdots\bar u_k})
		&\eqdef
		\lambda\vec x.\rappl{\rinj {\rappl{s}{\bar t}}}{\bar u_1\cdots\bar u_k}
		\\
		L(\lambda\vec x.\rappl{y}{\bar{s}_{1}\cdots \bar{s}_{k}})
		&\eqdef
		\lambda\vec x.\rappl{y}{L(\bar{s}_{1})\cdots L(\bar{s}_{k})}
		\\
		L([s_{1}, \dots , s_{k}])
		&\eqdef
		[L(s_{1}), \dots ,L(s_{k})  ]
		\\
		L(\lambda\vec x. \rappl{\lambda y. s}{\bar t\,{\bar u_1\cdots\bar u_k}})
		&\eqdef
		\lambda\vec x.\rappl{\nsubst s y{\bar t}}{{\bar u_1\cdots\bar u_k}}
	\end{aligned}
\end{gather*}
extended to sums of resource expressions by linearity,
setting $L(\sum_{i=1}^k e_i)\eqdef\sum_{i=1}^k L(e_i)$.

Again, it should be clear that $e\to_\partial^* L(e)$:
if $e$ contains a redex (\emph{i.e.} the left-hand side of any of the first
five rules of Figure~\ref{fig:resred}) in head position,
then $L(e)$ is obtained by firing this redex;
otherwise each term in a monomial argument of the head variable is reduced,
following the same strategy inductively.
Moreover, $e=L(e)$ iff $e$ is normal:
we obtain an equivalence because $\to_\partial$ is strongly normalizing
on sums of resource expressions and, if $e$ is not normal,
$L(e)$ is obtained by firing at least one redex in $e$.

Due to the definition of $\to_\partial$ on sums and the linearity of $L$,
these properties extend directly: $E\to_\partial^* L(E)$ and
$E=L(E)$ iff $E$ is normal (\emph{i.e.} it is a sum of normal expressions).
It moreover follows that $L$ is normalizing:
for all $E\in\resExprSums$, there is $n$ such that $L^n(E)=\NF E$.

\subsection{Taylor expansion of \texorpdfstring{$\lambda_{\oplus}$}{lambdaplus}-terms}%
\label{subsection:taylor}

The Taylor expansion of a $\lambda_\oplus$-term will be an infinite linear combination
of resource terms: to introduce it, we first need some preliminary notations
and results.

If $X$ is a set, we write $\vectors{X}$ for the set of
possibly infinite linear combinations of elements of $X$
with non negative rational coefficients (in fact we could use any commutative semifield):
equivalently, $\vectors{X}$ is the set of functions from $X$ to
the set of non negative rational numbers.
We write $A=\sum_{a\in X} A_a\,a\in\vectors{X}$ and then
the \definitive{support set} of $A$ is $\supp(A)=\{a\in X\mid A_a\not=0\}$.

All the syntactic constructs of resource expressions are extended
to infinite linear combinations, componentwise:
\begin{itemize}
	\item if $ S\in\resTermVectors$ then
    \[
      \lambda x.S \eqdef \sum_{s\in\resTerms}S_s(\lambda x.s)
      \quad,\quad
      \linj S \eqdef \sum_{s\in\resTerms}S_s (\linj s)
      \quad \text{and}\quad
      \rinj S \eqdef \sum_{s\in\resTerms}S_s (\rinj s)
      \quad;
    \]
	\item if moreover $ \bar{T}\in\resMonomialVectors$ then
    \[
      \langle S \rangle \bar{T} \eqdef \sum_{s\in\resTerms} \sum_{\bar t\in\resMonomials} S_s \bar T_{\bar t} (\rappl s{\bar{t}})
      \quad;
    \]
  \item and if $S_1,\dotsc,S_n\in\resTermVectors$ then
    \[
      [S_1,\dotsc,S_n] \eqdef \sum_{(s_1,\dotsc,s_n)\in\resTerms^n} \big(\prod_{i=1}^n {S_i}_{s_i}\big) [s_1,\dotsc,s_n]
      \quad.
    \]
\end{itemize}
Observe indeed that each of these infinite sums is finite in each component:
e.g., for each $\bar s\in\resMonomials$,
there are finitely many tuples $(s_1,\dotsc,s_n)\in\resTerms^n$
such that $\bar s = [s_1,\dotsc,s_n]$.

Similarly we extend syntactic constructs to sets of resource expressions:
\begin{itemize}
	\item if $ S\subseteq\resTerms$ then
    \[
      \lambda x.S \eqdef \{ \lambda x.s \mid s\in S\}
      \quad,\quad
      \linj S \eqdef \{ \linj s \mid s\in S\}
      \quad \text{and}\quad
      \rinj S \eqdef \{ \rinj s \mid s\in S\}
      \quad;
    \]
	\item if moreover $ \bar{T}\subseteq\resMonomials$ then
    \[
      \langle S \rangle \bar{T} \eqdef \{\rappl s{\bar{t}}\mid s\in S,\ \bar t\in\bar T\}
      \quad;
    \]
  \item and if $S_1,\dotsc,S_n\subseteq\resTerms$ then
    \[
      [S_1,\dotsc,S_n] \eqdef \{  [s_1,\dotsc,s_n] \mid s_i\in S_i\text{ for }1\le i\le n\}
      \quad.
    \]
\end{itemize}
Considering subsets of $\resExprs$ as infinite
linear combinations of resource expressions with boolean coefficients,
this is just a variant of the previous construction
(which can be carried out in any commutative semifield).
Moreover, syntactic constructs commute with the support function: e.g.,
$\lambda x.\supp(S)=\supp(\lambda x.S)$.

Let $S \in \resTermVectors$. We define $S^{n}\in\resMonomialVectors$ by induction on $n$:
$S^{0}= []$ and $S^{n+1} = [S]\cdot S^{n}$.
Then we define the \definitive{promotion} of $S$ as the series $S^\oc = \sum_{n=0}^{\infty} \frac{1}{n!} S^{n}$:
because the supports of $S^n$ and $S^p$ are disjoint when $n\not=p$,
this sum is componentwise finite.
If $S\subseteq\resTerms$ is a set of terms,
we may also write $S^\oc=\{[s_1,\dotsc,s_n]\mid s_1,\dotsc,s_n\in S\}$
for the set of monomials of terms in $S$,
so that $\supp(S^\oc)=\supp(S)^\oc$ for any $S\in\resTermVectors$.

We define the \definitive{Taylor expansion} $\tayexp{M} \in \resTermVectors$
of $M\in\ndTerms$ inductively as follows:
\begin{align*}
\tayexp{x}
&\eqdef x
\\
\tayexp{(\lambda x.N)}
&\eqdef \lambda x. \tayexp{N}
\\
\tayexp{(PQ)}
&\eqdef  \rappl{\tayexp{P}}{(\tayexp{Q})^{!}}
\\
\tayexp{(P\oplus Q)}
&\eqdef (\linj{\tayexp{P}})+(\rinj{\tayexp{Q}})
\quad .
\end{align*}
Note that this definition follows the one for the ordinary
$\lambda$-calculus given by Ehrhard and Regnier~\cite{er:tay},
in the form described in their Lemma~18.
We extend it to $\oplus$ by encoding the pair of vectors
$(\tayexp P,\tayexp Q)$ as the sum vector $(\linj{\tayexp{P}})+(\rinj{\tayexp{Q}})$.\footnote{
  Note that the original notion of Taylor expansion for nondeterministic $\lambda$-terms
  (considered as algebraic $\lambda$-terms without coefficients)
  interprets nondeterministic choice directly as a sum,
  setting $\tayexp{(M\oplus N)}=\tayexp M+\tayexp N$~\cite{ehrhard:finres,ptv:taylorsn,vaux:taylornf}.
  Following Tsukada, Asada and Ong~\cite{tao:gen}, we can recover this notion,
  by erasing the markers $\linj{\mathord -}$ and $\rinj{\mathord -}$,
  with one \emph{caveat}:
  in general, this might yield infinite sums of coefficients,
  because a single resource term without markers
  may be obtained from infinitely many terms with markers.
  Define for instance $x\oplus n\bullet$ by analogy with Example~\ref{exa:BT}:
  $x\oplus 0\bullet=x$ and $x\oplus(n+1)\bullet=(x\oplus n\bullet)\oplus\bullet$.
  Then forgetting markers in the sum $\sum_{i=0}^\infty x\oplus n\bullet$
  yields $\sum_{i=0}^\infty x$.
  And it turns out that normalizing the Taylor expansion
  of nondeterministic terms does yield such sums:
  see Example~\ref{exa:TBT}.
}

\begin{exa}
  We have $\tayexp{(x\oplus x)}=(\linj x)+(\rinj x)$
  hence \[
    \tayexp{(\lambda x.(x\oplus x))}
    =\lambda x.((\linj x)+(\rinj x))
    =(\lambda x.(\linj x))+(\lambda x.(\linj x))
  \] and \[
    \tayexp{(y(x\oplus x))}
    = \sum_{n\in\N}\frac 1{n!}\rappl y{[(\linj x)+(\rinj x)]^n}
    = \sum_{n\in\N}\sum_{i=0}^n\frac 1{i!(n-i)!}
    \rappl y{[\linj x]^i\cdot[\rinj x]^{n-i}}
    \quad.
  \]
\end{exa}

Writing $\taysup M\eqdef \supp(\tayexp M)$ for the support of
Taylor expansion, we obtain:
\begin{align*}
	\taysup{x} & = \{x\}
	\\
	\taysup{\lambda x.N} & = \lambda x. \taysup{N} = \{\lambda x.t\mid t\in\taysup N\}
	\\
	\taysup{PQ} &=  \rappl{\taysup{P}}{\taysup{Q}^{!}} =
	\{ \rappl s{[t_1,\dotsc,t_n]} \mid s\in\taysup{P}\text{ and }t_1,\dotsc,t_n\in\taysup{Q} \}
	\\
	\taysup{P\oplus Q} & = (\linj{\taysup{P}})\cup (\rinj{\taysup{Q}}) =
	\{ \linj{s} \mid s\in\taysup P\}\cup \{ \rinj{t} \mid t\in\taysup Q\}
\end{align*}
so that $\tayexp M=\sum_{s\in\taysup M} \tayexp M_s s$.

We can immediately check that Step~\ref{step:clique} still holds for our extension
of Taylor expansion to $\lambda_\oplus$-terms:
we prove that $\taysup M$ is always a clique for the coherence relation
$\mathord\coh \subseteq \resExprs \times \resExprs$
inductively defined by the rules of Figure~\ref{fig:coh}.
\begin{figure}[t]
	\begin{gather*}
		\begin{prooftree}
			\infer0[]{x \coh x}
		\end{prooftree}
		\qquad
		\begin{prooftree}
			\hypo{ s \coh s'}
			\infer1[]{ \lambda x. s \coh \lambda x. s' }
		\end{prooftree}
		\qquad
		\begin{prooftree}
			\hypo{ s \coh s'}
			\hypo{ \bar{t} \coh \bar{t}'}
			\infer2[]{ \langle s \rangle \bar{t} \coh \langle s' \rangle \bar{t}'}
		\end{prooftree}
		\qquad
		\begin{prooftree}
			\hypo{t_{i} \coh t_{j} \text{ for }1\le i,j\le n + m}
			\infer1[]{ [t_{1},\dotsc, t_{n}] \coh [t_{n + 1},\dotsc, t_{n + m}] }
		\end{prooftree}
		\\[1ex]
		\begin{prooftree}
			\hypo{ s \coh s'}
			\infer1[]{ s \oplus \bullet \coh s' \oplus \bullet }
		\end{prooftree}
		\qquad
		\begin{prooftree}
			\hypo{ s \coh s'}
			\infer1[]{ \bullet \oplus s \coh \bullet \oplus s' }
		\end{prooftree}
		\qquad
		\begin{prooftree}
			\infer0[]{s \oplus \bullet \coh \bullet \oplus s'}
		\end{prooftree}\quad
	\end{gather*}
	\caption{Rules for the coherence relation on $\resExprs$.}%
	\label{fig:coh}
\end{figure}
The first four rules are exactly those
for the ordinary resource calculus~\cite[Section 3]{er:tay},
while the last three rules are reminiscent of the definition
of the cartesian product of coherence spaces~\cite[Definition~5]{girard:ll}.
Again, this is consistent with the fact that we treat $\oplus$ as a pairing
construct, denoting an unspecified superposition operation.

Observe that the relation $\coh$ is automatically symmetric,
but not reflexive: e.g., $[s,t]\not\coh[s,t]$ when $s\not\coh t$.
We say a resource expression $e$ is \definitive{uniform} if $e\coh e$,
so that uniform expressions form a coherence space in the usual sense.\footnote{
  Note that, by contrast with the coherence relation considered by Dal Lago and
  Leventis for the Taylor expansion of probabilistic $\lambda$-terms~\cite{dll:teplt},
  $e\coh e'$ does not imply the uniformity of $e$ nor $e'$:
  we have $\linj s\coh\rinj {s'}$ without any condition on $s$ and $s'$.
  We could adapt our main results with a finer coherence, similar to theirs,
  requiring $s\coh s$ and $s'\coh s'$ for $\linj s\coh\rinj {s'}$ to hold:
  uniform expressions and cliques are the same for both relations.
  Nonetheless, we find it interesting that this additional hypothesis is not
  needed for Step~\ref{step:disjoint}.
}
We call \definitive{clique} any set $E$ of resource expressions
such that $e\coh e'$ for all $e,e'\in E$.
In particular, the elements of a clique are necessarily uniform.

We obtain the expected result
by a straightforward induction on $\lambda_\oplus$-terms:
\begin{thm}[Step~\ref{step:clique}]%
	\label{thm:clique}
	The Taylor support $\taysup{M}$ is a clique.
\end{thm}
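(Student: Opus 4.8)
The plan is to proceed by structural induction on $M\in\ndTerms$, exploiting the inductive description of $\taysup M$ recalled above together with the rules of Figure~\ref{fig:coh}. For each shape of $M$ I would pick two arbitrary (possibly equal) elements of $\taysup M$ and build a coherence derivation between them; allowing the two to coincide also secures the uniformity of each element, which is part of being a clique. The whole point is that the coherence rules were tailored to match, clause by clause, the recursive computation of the support, so that in each case the conclusion follows by feeding the coherences supplied by the induction hypothesis into the matching rule.

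The base case $M=x$ is immediate, since $\taysup x=\{x\}$ and $x\coh x$ is an axiom. For $M=\lambda x.N$, two elements are $\lambda x.t$ and $\lambda x.t'$ with $t,t'\in\taysup N$; the induction hypothesis gives $t\coh t'$, and the abstraction rule yields $\lambda x.t\coh\lambda x.t'$. The application case $M=PQ$ is the only one calling for genuine care, and is where I expect the single real subtlety to lie: two elements of $\taysup{PQ}$ have the form $\rappl s{[t_1,\dotsc,t_n]}$ and $\rappl{s'}{[t'_1,\dotsc,t'_m]}$ with $s,s'\in\taysup P$ and all the $t_i,t'_j\in\taysup Q$. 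By induction $s\coh s'$, so it remains to derive $[t_1,\dotsc,t_n]\coh[t'_1,\dotsc,t'_m]$ and conclude with the application rule. The delicate observation is that the monomial rule demands coherence of \emph{every} pair drawn from the pooled list $t_1,\dotsc,t_n,t'_1,\dotsc,t'_m$, including pairs internal to a single monomial and pairs taken from monomials of different lengths. This is exactly where the full clique property of $\taysup Q$ is needed, rather than mere pairwise coherence inside one monomial: since all these terms belong to the clique $\taysup Q$ they are pairwise coherent, and the monomial rule applies.

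Finally, for $M=P\oplus Q$ the support is $\linj{\taysup P}\cup\rinj{\taysup Q}$, and I would check coherence for the three kinds of pairs. Two left-tagged elements $\linj s,\linj{s'}$ cohere by the rule concluding $\linj s\coh\linj{s'}$ from $s\coh s'$ (the premise holding by induction on $P$), and symmetrically two right-tagged elements cohere through $\taysup Q$. The mixed pairs $\linj s\coh\rinj{t'}$ are the noteworthy feature: they are handled by the axiom $\linj s\coh\rinj{s'}$, which carries no hypothesis linking the two terms. I anticipate no real obstacle here, and indeed this unconditional cross rule is precisely what makes distinct branches of a choice always cohere, consistently with reading $\oplus$ as a pairing construct. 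Assembling the four cases closes the induction.
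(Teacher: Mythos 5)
Your proof is correct and takes essentially the same route as the paper, which establishes Theorem~\ref{thm:clique} by exactly this straightforward structural induction on $M$ (and gives no further detail). Your case analysis fills in the details accurately, including the key observation that the application case needs the full clique property of $\taysup{Q}$ to discharge all the pairwise premises of the monomial rule.
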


\subsection{Multiplicity coefficients}
We now generalize Step~\ref{step:m} in our generic nondeterministic setting:
we can define a multiplicity coefficient $m(s)\in\N$ for each $s\in\resTerms$ so
that $\tayexp M_s=\frac 1{m(s)}$ whenever $s\in\taysup M$.

Given any set $X$ and $n\in\N$,
we consider the left action of the group $\permutations{n}$
of all permutations of $\{1,\dotsc,n\}$
on the set $X^n$ of $n$-tuples, defined as follows:
if $\vec a=(a_1,\dotsc,a_n)$ and $\sigma\in\permutations{n}$ then
$\lact{\sigma}{\vec a}=(a_{\sigma^{-1}(1)},\dotsc,a_{\sigma^{-1}(n)})$.
Writing $\lact{\sigma}{\vec a}=(a'_1,\dotsc,a'_n)$,
we obtain $a'_{\sigma(i)}=a_i$.
Let us recall that if $\vec a\in X^n$, then the stabilizer of $\vec a$
is $\St(\vec a)=\{\sigma\in\permutations{n} \mid \lact{\sigma}{\vec a}=\vec a\}$.

If $\vec s=(s_1,\dotsc,s_n)\in \resTerms^n$ and $S\in\resTermVectors$,
we write $S^{\vec s}=\prod_{i=1}^n S_{s_i}$:
observe that this does not depend on the ordering of the $s_i$'s,
so if $\bar s=[s_1,\dotsc,s_n]\in\resMonomials$,
we may as well write $S^{\bar s}=S^{(s_1,\dotsc,s_n)}$.
We obtain:

\begin{lem}\label{lem:coeffProm}
Let $S\in \resTermVectors$ and $\bar{s}\in \supp(S^{!})$.
If $\vec{s}= (s_{1},\dotsc, s_{n})$ is an enumeration of $\bar{s}$,
\emph{i.e.} $ [s_{1},\dotsc, s_{n}] = \bar{s}$,
then $ (S^{!})_{\bar{s}} = \dfrac{S^{\bar{s}}}{\card(\St(\vec{s}))}$.
\end{lem}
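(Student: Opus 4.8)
The plan is to reduce the statement to the orbit–stabilizer theorem (Fact~\ref{fact:bijG}) applied to the action of $\permutations{n}$ on $\resTerms^n$. First I would unfold the definition of the vector $S^n\in\resMonomialVectors$. Since $S^n$ is the monomial $[S,\dots,S]$ with $n$ copies of $S$, the componentwise extension of the monomial construct to infinite linear combinations gives directly
\[
	S^n=\sum_{(t_1,\dots,t_n)\in\resTerms^n}\Big(\prod_{i=1}^n S_{t_i}\Big)[t_1,\dots,t_n].
\]
Reading off the coefficient of the fixed monomial $\bar s=[s_1,\dots,s_n]$, I obtain $(S^n)_{\bar s}=\sum \prod_{i=1}^n S_{t_i}$, where the sum ranges over all tuples $(t_1,\dots,t_n)\in\resTerms^n$ such that $[t_1,\dots,t_n]=\bar s$.

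The key observation is that these tuples are exactly the rearrangements of the fixed enumeration $\vec s$, that is, the elements of the orbit $\lact{\permutations{n}}{\vec s}$: a tuple enumerates the multiset $\bar s$ if and only if it equals $\lact{\sigma}{\vec s}$ for some $\sigma\in\permutations{n}$. Moreover, the summand is constant along this orbit: for each $\sigma$, the product $\prod_{i} S_{(\lact{\sigma}{\vec s})_i}$ is a mere reordering of $\prod_i S_{s_i}=S^{\bar s}$, hence equal to $S^{\bar s}$ (this is precisely the remark that $S^{\bar s}$ depends only on the multiset $\bar s$). Therefore $(S^n)_{\bar s}=\card(\lact{\permutations{n}}{\vec s})\cdot S^{\bar s}$, and Fact~\ref{fact:bijG} yields $\card(\lact{\permutations{n}}{\vec s})=\card(\permutations{n})/\card(\St(\vec s))=n!/\card(\St(\vec s))$.

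Finally I would isolate the relevant summand of the promotion. Since $S^!=\sum_{m=0}^\infty\frac1{m!}S^m$ and all monomials in $\supp(S^m)$ have size $m$, these supports are pairwise disjoint and only the term $m=n$ contributes to the coefficient of $\bar s$, which has size $n$. Combining the above, $(S^!)_{\bar s}=\frac1{n!}(S^n)_{\bar s}=\frac1{n!}\cdot\frac{n!}{\card(\St(\vec s))}\cdot S^{\bar s}=\frac{S^{\bar s}}{\card(\St(\vec s))}$, as claimed.

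I do not expect a serious obstacle here: the argument is essentially bookkeeping, and the only point requiring a little care is the identification of the index set of the defining sum with the orbit $\lact{\permutations{n}}{\vec s}$, together with the verification that the summand is orbit-invariant. Both are immediate once one notes that $S^{\bar s}$ is a function of the multiset $\bar s$ alone, and the well-definedness of the infinite sums involved has already been ensured by the componentwise finiteness observed when promotion was introduced.
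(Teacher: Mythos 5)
Your proof is correct and follows essentially the same route as the paper's: expand the promotion by (multi)linearity, identify the tuples enumerating $\bar s$ with the orbit of $\vec s$ under $\permutations{n}$, note that the coefficient $S^{\bar s}$ is constant on that orbit, and conclude by the orbit--stabilizer theorem (Fact~\ref{fact:bijG}). The only difference is that you spell out a couple of bookkeeping points (disjointness of the supports of the $S^m$, orbit-invariance of the summand) that the paper leaves implicit.
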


\begin{proof}
  By the definition of promotion, and by linearity, we obtain
	\[S^{!}= \sum_{n=0}^{\infty}
    \frac 1{n!}\sum_{(s_{1},\dotsc, s_{n})\in\resTerms^n}
    S^{(s_{1},\dotsc, s_{n})}[s_1,\dotsc,s_n]\quad.\]
	If $\length{\bar{s}}=n$, we thus obtain:
	\[(S^{!})_{\bar s}=\card(\{(s_1,\dotsc,s_n)\mid[s_1,\dotsc,s_n]=\bar s\}) \frac{S^{\bar s}}{n!}\quad.\]
	Observing that $\{(s_1,\dotsc,s_n)\mid[s_1,\dotsc,s_n]=\bar s\}$
	is the orbit of any enumeration of $\bar s$
	under the action of $\permutations{n}$,
	and that $\card(\permutations{n})=n!$, we conclude by Fact~\ref{fact:bijG}. % chktex 40
\end{proof}

Let $s\in \resTerms$. We inductively define $m(s)$, the \definitive{multiplicity coefficient}
of $s$, as follows:
\begin{align*}
m(x)
&\eqdef 1
\\
\left.
\begin{array}{r}
m(\lambda x. s)\\ m( \linj s) \\ m(\rinj s)
\end{array}
\right\}
&\eqdef m(s)
\\
m(\langle s \rangle \bar{t})
& \eqdef m(s)m(\bar{t})
\\
m([t_{1}]^{n_{1}}\cdot \dots \cdot [t_{n}]^{n_{n}})
&\eqdef \prod_{i=1}^{n} n_{i} ! \ m(t_{i})^{n_{i}} % chktex 26
\end{align*}
assuming the $t_i$'s are pairwise distinct in the case of a monomial.
Again, this definition extends straightforwardly
the one given by Ehrhard and Regnier for
their resource calculus~\cite[Section 2.2.1]{er:tay},
given that $\linj{\mathord -}$ and $\rinj{\mathord -}$ are both linear.
Observe that, considering the function $m$ as a vector
$m\in \vectors{\resExprs}$,
if $\vec s$ is an enumeration of $\bar s$ then
$m(\bar s)=m^{\bar s}\card(\St(\vec s))$.

\begin{thm}[{Step~\ref{step:m}}]\label{thm:m}
Let $s\in \taysup{M}$. Then $  \tayexp{M}_{s}= \dfrac{1}{m(s)}$.
\end{thm}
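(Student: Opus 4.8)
The plan is to prove the identity $\tayexp{M}_s = \frac{1}{m(s)}$ by induction on the $\lambda_\oplus$-term $M$, matching the inductive definition of $\tayexp{M}$ against that of the multiplicity coefficient $m$. Three of the four syntactic cases are routine bookkeeping of coefficients; the real content sits in the application case, where the promotion operator appears and Lemma~\ref{lem:coeffProm} together with the multiplicity identity noted just above the statement do all the work.

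First I would dispatch the easy cases. When $M = x$ we have $\taysup{x} = \{x\}$ and $\tayexp{x}_x = 1 = \frac{1}{m(x)}$. When $M = \lambda x.N$, every $s \in \taysup{M}$ has the form $\lambda x.t$ with $t \in \taysup{N}$; since abstraction acts componentwise on vectors, $\tayexp{M}_s = \tayexp{N}_t$, which by the induction hypothesis equals $\frac{1}{m(t)} = \frac{1}{m(\lambda x.t)}$. The case $M = P \oplus Q$ is similar once one observes that the injections $\linj{\mathord -}$ and $\rinj{\mathord -}$ are injective with disjoint images: the coefficient of $s = \linj{u}$ (resp.\ $s = \rinj{v}$) in $(\linj{\tayexp{P}}) + (\rinj{\tayexp{Q}})$ is then exactly $\tayexp{P}_u$ (resp.\ $\tayexp{Q}_v$), and $m$ is invariant under both injections, so the induction hypothesis concludes.

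The heart of the argument is the case $M = PQ$, where $\tayexp{PQ} = \rappl{\tayexp{P}}{(\tayexp{Q})^{!}}$. Any $s \in \taysup{PQ}$ factors as $s = \rappl{t}{\bar s}$ with $\bar s = [u_1, \dotsc, u_n]$, $t \in \taysup{P}$ and $u_1, \dotsc, u_n \in \taysup{Q}$. Because application of vectors is defined componentwise and $\rappl{t'}{\bar t} = \rappl{t}{\bar s}$ forces $t' = t$ and $\bar t = \bar s$, the coefficient splits as $\tayexp{PQ}_s = \tayexp{P}_t \cdot ((\tayexp{Q})^{!})_{\bar s}$. The first factor is $\frac{1}{m(t)}$ by the induction hypothesis on $P$. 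For the second, I would invoke Lemma~\ref{lem:coeffProm} with $S = \tayexp{Q}$ and an enumeration $\vec s = (u_1, \dotsc, u_n)$ of $\bar s$, obtaining $((\tayexp{Q})^{!})_{\bar s} = \frac{(\tayexp{Q})^{\bar s}}{\card(\St(\vec s))}$; applying the induction hypothesis on $Q$ to each $u_i \in \taysup{Q}$ then gives $(\tayexp{Q})^{\bar s} = \prod_{i=1}^n \tayexp{Q}_{u_i} = \frac{1}{m^{\bar s}}$.

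The step that closes the computation is the observation, recorded just before the statement, that $m(\bar s) = m^{\bar s}\,\card(\St(\vec s))$: its justification is that the stabilizer of an enumeration whose distinct terms $t_1, \dotsc, t_k$ occur with multiplicities $n_1, \dotsc, n_k$ is the Young subgroup $\permutations{n_1} \times \dotsb \times \permutations{n_k}$ of order $\prod_{i=1}^k n_i!$, which is precisely the factor separating $m(\bar s) = \prod_{i=1}^k n_i!\,m(t_i)^{n_i}$ from $m^{\bar s} = \prod_{i=1}^k m(t_i)^{n_i}$. Hence $((\tayexp{Q})^{!})_{\bar s} = \frac{1}{m(\bar s)}$, and combining the two factors yields $\tayexp{PQ}_s = \frac{1}{m(t)\,m(\bar s)} = \frac{1}{m(\rappl{t}{\bar s})} = \frac{1}{m(s)}$, as required. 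I expect the only genuine difficulty to be this application case; the entire group-theoretic content has been isolated in Lemma~\ref{lem:coeffProm} and in the multiplicity identity, so what remains is a careful tracking of coefficients through the induction.
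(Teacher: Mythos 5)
Your proof is correct and follows essentially the same route as the paper's: induction on $M$ with the application case as the only substantive one, resolved by Lemma~\ref{lem:coeffProm} together with the identity $m(\bar s)=m^{\bar s}\,\card(\St(\vec s))$ recorded just before the statement. The only difference is that you spell out the easy cases and justify that identity via the Young subgroup, which the paper leaves implicit.
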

\begin{proof}
	The only interesting case is that of an application: $M=PQ$.
	Assume $s\in\taysup{M}$;
	then $s=\rappl u{\bar v}$
	with $u\in\taysup P$ and $\bar v=[v_1,\dotsc,v_n]\in\taysup Q^\oc$.
	By definition,
	$\tayexp M_s
	=(\rappl{\tayexp P}{(\tayexp Q)^\oc})_{\rappl u{\bar v}}
	=\tayexp P_u (\tayexp Q)^\oc_{\bar v}
	$.
	Setting $\vec v=(v_1,\dotsc,v_n)$,
	we obtain $\tayexp M_s={\tayexp P_u(\tayexp Q)^{\bar v}}/{\card(\St(\vec v))}$
	by Lemma~\ref{lem:coeffProm}.
	By the induction hypothesis applied to $P$ and $Q$,
	we obtain $1/\tayexp P_u=m(u)$ and $1/\tayexp Q_{v_i}=m(v_i)$
	hence $1/\tayexp M_s=m(u)m^{\bar v}\card(\St(\vec v))=m(u)m(\bar v)=m(s)$.
\end{proof}

We can as well obtain Step~\ref{step:coefNF} following Ehrhard and Regnier's
study of permutations of variables occurrences, but here we choose to depart
from their approach.
At this point, indeed, we hope the reader will share our opinion that
the combinatorics of Taylor expansion is more intimately connected with the
action of permutations on the enumerations of monomials occurring in resource
expressions.

In the upcoming Section~\ref{sect:rigid}, we propose to flesh out this viewpoint, and to recast
resource expressions as equivalence classes of their rigid (\emph{i.e.}
non-commutative) representatives, up to the isomorphisms of a groupoid of
permutation terms inductively defined on the syntactic structure.

The other remaining Steps~\ref{step:disjoint} and~\ref{step:taysup} are
purely qualitative properties of the Taylor support.
We choose to treat also Step~\ref{step:disjoint} in the rigid setting,
to be introduced later, because it is essentially a property of rigid
reduction.
On the other hand, the commutation of Step~\ref{step:taysup} can be established
directly.

\subsection{Taylor expansion of Böhm trees}

The Taylor expansion of a B\"{o}hm tree is obtained as follows.
First we extend the definition of Taylor expansion from $\ndTerms$ to
$\ndApprox$ by adding the inductive case $\tayexp\bot\eqdef 0$,
hence $\taysup\bot = \emptyset$.
Then we set $\taysup{\BT{M}} \eqdef \bigcup_{B\in\BT M}\taysup{B}$.

We can already observe that if $B\in\eBTs$ and $s\in\taysup{B}$ then
$s$ is normal: indeed, the absence of redexes
is preserved by the inductive definition of Taylor expansion.
It follows that any $s\in\taysup{\BT{M}}$ is normal.
Moreover, it is clear that Theorem~\ref{thm:m} extends
to term approximants, hence
$\tayexp B_s=\frac 1{m(s)}$ whenever $s\in \taysup B$.
Thus, it makes sense to define the Taylor expansion
of a Böhm tree as:
$\tayexp{\BT{M}} \eqdef \sum_{s\in\taysup{\BT M}}\frac 1{m(s)} s$.

\begin{exa}%
  \label{exa:TBT}
  Recall from Example~\ref{exa:BT} that if we set
  $M=\Theta\lambda y.(y\oplus x)$ then
  $\BT{M}=\{\bot\oplus nx\mid n\in\N \}$.
  Observe that $\taysup{\bot\oplus nx}=\{(\bullet\oplus x)\oplus i\bullet\mid 0\le i<n\}$
  so that $\taysup{\BT{M}}=\{ (\bullet\oplus x)\oplus n\bullet\mid n\in\N\}$
  and $\tayexp{\BT{M}}=\sum_{i=0}^\infty (\bullet\oplus x)\oplus n\bullet$,
  because $m((\bullet\oplus x)\oplus n\bullet)=1$ for each $n\in\N$.
\end{exa}

We shall achieve Step~\ref{step:taysup} by showing that
the parallel left strategy in $\ndTerms$ can be simulated
in the support of Taylor expansion,
and that $\taysup{\BT M}$ is formed by accumulating
the normal forms reached from $\taysup M$ by this strategy.

First, we extend the operations $\nsubst -x-$, $L(-)$ and $\NF -$
to sets of resource expressions in the following way:
\[
  \nsubst Ex{\bar T}
  \eqdef
  \bigcup_{e\in E}\bigcup_{\bar t\in \bar T}\supp(\nsubst ex{\bar t})
  \ ,\quad
  L (E)
  \eqdef
  \bigcup_{e\in E} \supp(L(e))
  \quad\text{and}\quad
  \NF E
  \eqdef
  \bigcup_{e\in E} \supp(\NF e)
\]
whenever $E\subseteq\resExprs$ and $\bar T\subseteq\resTerms$.\footnote{
  In contrast with the case of syntactic constructors in Section~\ref{subsection:taylor},
  extending these operations to infinite linear combinations rather than sets
  requires some work.

  In the case of $\nsubst -x-$,
  we can show that each expression $e'$
  is in the support of finitely many sums of the shape
  $\nsubst ex{\bar t}$, by observing that the size
  of the antecedents $e$ and $\bar t$ is at most that of $e'$~\cite[Lemma 3.7]{vaux:taylornf}.
  Then one can exploit the fact that the redexes
  fired in the reduction from $e$ to $L(e)$
  are pairwise independent,
  to deduce that each $e'$
  is in the support of finitely many sums of the shape
  $L(e)$:
  this is a particular case of a result established
  by the second author for parallel reduction~\cite[Section 6.2]{vaux:taylornf}.

  The case of $\NF -$ is even more intricate because, given an infinite linear
  combination $S$ of resource terms, the sum $\sum_{s\in \resTerms} S_s\NF s$ is
  not well defined in general --- indeed, it is easy to find an infinite family of
  resource terms, all having the same nonzero normal form.
  Uniformity is one solution to this issue:
  if the support of $S$ is a clique then the summands
  $\NF s$ for $s\in \supp(S)$ have pairwise disjoint supports.
  This result is the main ingredient of Step~\ref{step:disjoint}:
  it will be our Theorem~\ref{thm:disjoint} below.
  For a survey of alternative approaches we refer to the study
  of this subject by the second author~\cite{vaux:taylornf}.
}

\begin{lem}\label{lem:commL} Let $M$ be a $\lambda_{\oplus}$-term. Then $ L(\taysup{M}) = \taysup{L(M)}$.
\end{lem}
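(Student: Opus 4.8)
The plan is to prove the identity by induction on the $\lambda_\oplus$-term $M$, following the five clauses defining $L$ on $\ndTerms$. In each case I will unfold the inductive description of $\taysup{-}$ on the one hand and the matching clause of the resource-level $L$ on the other, using throughout that every syntactic constructor commutes with $\supp$ and with unions, and that the set extension is $L(E)=\bigcup_{e\in E}\supp(L(e))$. This lets me compute both $L(\taysup{M})$ and $\taysup{L(M)}$ as explicit sets of resource terms and compare them directly.

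The first three clauses, where $L$ merely commutes a choice $\oplus$ past an abstraction or an application, are purely structural: unfolding $\taysup{P\oplus Q}=(\linj{\taysup{P}})\cup(\rinj{\taysup{Q}})$ together with the resource clauses $L(\linj s)=\linj{L(s)}$ and $L(\rinj s)=\rinj{L(s)}$, both sides collapse to the same description of decorated terms, the genuine choice case $M=M_1\oplus M_2$ being the only one that additionally invokes the induction hypothesis (on $M_1$ and $M_2$). The head-variable clause $M=\lambda\vec x.yQ_1\cdots Q_k$ is where the induction hypothesis on the arguments $Q_i$ is used; it also requires the elementary sublemma that, for a set $S\subseteq\resTerms$, one has $\bigcup_{\bar q\in S^\oc}\supp(L(\bar q))=L(S)^\oc$, which follows from the defining clause $L([t_1,\dotsc,t_m])=[L(t_1),\dotsc,L(t_m)]$ and the definition of monomials of sums. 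With it, $L(\taysup{M})$ becomes $\lambda\vec x.\rappl{y}{L(\taysup{Q_1})^\oc\cdots L(\taysup{Q_k})^\oc}$, which by induction is $\taysup{\lambda\vec x.yL(Q_1)\cdots L(Q_k)}=\taysup{L(M)}$.

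The only genuinely substantial case is the $\beta$-clause $M=\lambda\vec x.(\lambda y.M_1)NQ_1\cdots Q_k$, for which $L(M)=\lambda\vec x.(\subst{M_1}{y}{N})Q_1\cdots Q_k$. Unfolding $\taysup{M}$ writes its elements as $\lambda\vec x.\rappl{\rappl{\lambda y.s}{\bar n}}{\bar q_1\cdots\bar q_k}$ with $s\in\taysup{M_1}$, $\bar n\in\taysup{N}^\oc$ and $\bar q_i\in\taysup{Q_i}^\oc$; the resource $\beta$-clause turns each such $e$ into $L(e)=\lambda\vec x.\rappl{\nsubst{s}{y}{\bar n}}{\bar q_1\cdots\bar q_k}$. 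Taking supports and collecting the parameters yields
\[
  L(\taysup{M})=\lambda\vec x.\rappl{\bigl(\nsubst{\taysup{M_1}}{y}{\taysup{N}^\oc}\bigr)}{\taysup{Q_1}^\oc\cdots\taysup{Q_k}^\oc},
\]
whereas $\taysup{L(M)}=\lambda\vec x.\rappl{\taysup{\subst{M_1}{y}{N}}}{\taysup{Q_1}^\oc\cdots\taysup{Q_k}^\oc}$, so the whole case reduces to the single support-level identity
\[
  \taysup{\subst{M_1}{y}{N}}=\nsubst{\taysup{M_1}}{y}{\taysup{N}^\oc}.
\]

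I expect this last identity --- the commutation of Taylor expansion with substitution, read at the level of supports --- to be the main obstacle, and I would isolate it as an auxiliary lemma proved by its own induction on $M_1$. The variable and choice cases are immediate and the abstraction case is routine; the delicate case is the application $M_1=PQ$, where a single copy of $N$ must be dispatched either into $P$ or into $Q$. Matching the term-level substitution $\subst{(PQ)}{y}{N}=\subst{P}{y}{N}\,\subst{Q}{y}{N}$ with the resource definition of $\nsubst{\rappl{s}{\bar t}}{y}{\bar u}$, which sums over all partitions $(I_1,I_2)\in\wcomp{2}{\length{\bar u}}$ of the monomial of copies, is precisely the combinatorial heart of the argument: it is here that one must check that the support-level accounting of how many copies of $N$ reach the function part and the argument part is faithful, so that the promotion $\taysup{N}^\oc$ is split correctly between the two sides.
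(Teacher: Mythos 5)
Your proposal is correct and follows essentially the same route as the paper: an induction on $M$ along the clauses defining $L$, where the structural and $\oplus$-cases are direct and the whole weight falls on the $\beta$-clause, reduced to the auxiliary support-level identity $\taysup{\subst{M}{x}{N}}=\nsubst{\taysup M}x{\taysup N^\oc}$ proved by its own induction. The paper states exactly this (deferring the routine cases to the earlier treatment of pure $\lambda$-terms), so your plan just fleshes out the same argument.
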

\begin{proof}
The proof is the same as for $\lambda$-terms~\cite{vaux:taylornf},
the case of $\oplus$ being direct.
The base case requires to prove that
$\taysup{\subst MxN}=\nsubst{\taysup M}x{\taysup N^\oc}$,
which is done by a straightforward induction on $M$.
\end{proof}

\begin{lem}\label{lem:orderBT}
Let $A,B \in \ndApprox$. If $ A\leq B $ then $ T(A) \subseteq T(B)$.
\end{lem}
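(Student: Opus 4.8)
The plan is to argue by induction on the derivation of $A\leq B$ from the rules of Figure~\ref{fig:leq}, exploiting the inductive description of the Taylor support $\taysup{\cdot}$ recalled in Section~\ref{subsection:taylor} (together with $\taysup\bot=\emptyset$). The two axioms are immediate: for $\bot\leq M$ we have $\taysup\bot=\emptyset\subseteq\taysup M$, and for the reflexivity axiom $M\leq M$ the inclusion $\taysup M\subseteq\taysup M$ is trivial. The transitivity rule is handled by composing inclusions: if $\taysup A\subseteq\taysup N$ and $\taysup N\subseteq\taysup B$ by the induction hypotheses, then $\taysup A\subseteq\taysup B$.

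The three remaining rules are the congruence rules for the syntactic constructs, and each reduces to a \emph{monotonicity} property of the corresponding set-level operation. The underlying remark is that all the operations through which $\taysup{\cdot}$ is defined --- namely $\lambda x.(-)$, $\linj{(-)}$, $\rinj{(-)}$, the application construct $\rappl S{\bar T}$, promotion $(-)^\oc$ and union --- are defined on sets of resource expressions as direct images or set-builder closures, hence preserve inclusion. Concretely, for the abstraction rule the induction hypothesis gives $\taysup M\subseteq\taysup{M'}$, whence $\taysup{\lambda x.M}=\lambda x.\taysup M\subseteq\lambda x.\taysup{M'}=\taysup{\lambda x.M'}$ since $S\mapsto\{\lambda x.s\mid s\in S\}$ is monotone; the sum rule is identical, using that $\linj{(-)}$, $\rinj{(-)}$ and $\cup$ are monotone, so that $(\linj{\taysup M})\cup(\rinj{\taysup N})\subseteq(\linj{\taysup{M'}})\cup(\rinj{\taysup{N'}})$.

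For the application rule, from $\taysup M\subseteq\taysup{M'}$ and $\taysup N\subseteq\taysup{N'}$ I first need $\taysup N^\oc\subseteq\taysup{N'}^\oc$: this holds because any monomial in $S^\oc=\{[s_1,\dotsc,s_n]\mid s_i\in S\}$ is built from terms of $S$, so enlarging $S$ can only enlarge $S^\oc$. Then, since $\rappl S{\bar T}=\{\rappl s{\bar t}\mid s\in S,\ \bar t\in\bar T\}$ is monotone in both arguments, I conclude $\taysup{MN}=\rappl{\taysup M}{\taysup N^\oc}\subseteq\rappl{\taysup{M'}}{\taysup{N'}^\oc}=\taysup{M'N'}$.

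I do not expect any genuine obstacle: the content of the lemma is just that $\taysup{\cdot}$ on approximants is assembled entirely from monotone set operations. The single point worth a word of justification is the monotonicity of promotion $(-)^\oc$, which is immediate from its description as the set of finite multisets over the argument set. Note that the claim is purely qualitative, about supports: by Theorem~\ref{thm:m} the coefficients even agree, $\tayexp A_s=\tayexp B_s=\frac1{m(s)}$, on the common support $s\in\taysup A\subseteq\taysup B$, but only the inclusion of supports is asserted here.
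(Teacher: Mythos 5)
Your proof is correct and follows exactly the route the paper intends: the paper's own proof is the one-line ``by straightforward induction on the derivation of $A\leq B$'', and your argument is simply that induction carried out in full, with the congruence cases discharged by the monotonicity of the set-level operations defining $\taysup{\cdot}$.
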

\begin{proof}
  By straightforward induction on the derivation of $A\leq B$.
\end{proof}

%As an immediate corollary of Lemmas~\ref{lem:orderBT} and~\ref{lem:increasingBT}, we obtain:
%\begin{cor}\label{cor:increasingTBT}
%	For any $M\in\ndTerms$, $T(\eBT M) \subseteq T(\eBT{L(M)})$.
%\end{cor}

\begin{lem}\label{lem:eBTtaysup}
	For any $M\in\ndTerms$,
	$\taysup{\eBT M} = \{ s\in\taysup M\mid \text{$s$ is normal} \}$.
\end{lem}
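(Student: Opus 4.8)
The plan is to prove the identity by structural induction on $M$, following the three-way case analysis used in the definition of $\eBT{M}$. The two ingredients I will rely on are the explicit descriptions of $\taysup{-}$ recalled just above, and the characterization of normal resource terms as those containing none of the five redexes listed in Figure~\ref{fig:resred}. Throughout, I will use that the outer syntactic shape of $M$ is faithfully reflected in every $s\in\taysup M$: a leading abstraction of $M$ yields a leading abstraction of $s$, an application yields an application, and the markers $\linj{-}$ and $\rinj{-}$ are preserved — all immediate from the equations defining $\taysup{-}$.

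In the case $M=M_1\oplus M_2$, we have $\taysup{\eBT M}=(\linj{\taysup{\eBT{M_1}}})\cup(\rinj{\taysup{\eBT{M_2}}})$ and $\taysup M=(\linj{\taysup{M_1}})\cup(\rinj{\taysup{M_2}})$. Since $\linj{s}$ (resp. $\rinj{s}$) is not itself a redex and reduces exactly when $s$ does, it is normal iff $s$ is normal; hence restricting to normal elements commutes with both markers, and the equality follows from the induction hypotheses applied to $M_1$ and $M_2$. In the case $M=\lambda\vec x.x Q_1\cdots Q_k$, both sides unfold to sets of terms $\lambda\vec x.\rappl x{\bar t_1\cdots\bar t_k}$, with $\bar t_i$ ranging over $\taysup{\eBT{Q_i}}^\oc$ on the left and over $\taysup{Q_i}^\oc$ on the right. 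Such a term has a variable in head position and no marker directly under an abstraction, so it is normal exactly when every term occurring in every $\bar t_i$ is normal; using that the support of a promotion is the set of monomials of the support, this amounts to $\bar t_i$ ranging over monomials of $\{u\in\taysup{Q_i}\mid u\text{ normal}\}$. By the induction hypothesis this set equals $\taysup{\eBT{Q_i}}$, closing the case.

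In all remaining cases $\eBT M=\bot$, so $\taysup{\eBT M}=\emptyset$, and I must show that no $s\in\taysup M$ is normal. Writing $M=\lambda\vec x.H$ with $H$ not an abstraction, the two previous cases being excluded leaves exactly three shapes: $H=P\oplus P'$ with $\vec x$ nonempty, $H=(\lambda y.P)Q_1\cdots Q_k$ with $k\ge 1$, and $H=(P\oplus P')Q_1\cdots Q_k$ with $k\ge 1$. By the shape-preservation observation, every $s\in\taysup M$ then displays a head redex: respectively $\lambda y.(\linj{s'})$ or $\lambda y.(\rinj{s'})$; a $\beta$-redex $\rappl{\lambda y.s'}{\bar t_1}$; and $\rappl{\linj{s'}}{\bar t_1}$ or $\rappl{\rinj{s'}}{\bar t_1}$. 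Hence no element of $\taysup M$ is normal, as required.

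I expect the main obstacle to lie in this final case: one must verify that the enumeration of the "otherwise" shapes is genuinely exhaustive — i.e.\ that it is precisely the complement of the two cases $M=M_1\oplus M_2$ and $M=\lambda\vec x.x Q_1\cdots Q_k$ — and then, for each shape, confirm that the exhibited redex is actually present in \emph{every} resource term of the support, not just some. Both points reduce to the shape-preservation property, but they require care to state cleanly, since the interplay of leading abstractions, the markers, and the list of arguments must be tracked simultaneously; the other two cases are then routine bookkeeping with the induction hypothesis.
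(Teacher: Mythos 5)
Your proof is correct and follows essentially the same route as the paper's: a structural induction on $M$ mirroring the three-way case analysis in the definition of $\eBT{-}$, with the two non-$\bot$ cases handled by the induction hypothesis and the remaining shapes shown to force a head redex in every element of $\taysup M$. The only differences are that the paper obtains the inclusion $\subseteq$ via Lemma~\ref{lem:orderBT} (monotonicity of $\taysup{-}$ along $\le$) together with the normality of supports of elementary B\"ohm trees, rather than by direct computation, and that it leaves implicit the exhaustive ``otherwise'' analysis that you spell out.
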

\begin{proof}
  We have
	$\taysup{\eBT M} \subseteq \taysup M$
  by Lemma~\ref{lem:orderBT}
	and the obvious fact that $\eBT M\le M$.
  We deduce the inclusion $\subseteq$,
  recalling that the Taylor support of elementary
  Böhm trees contains normal terms only.

	Conversely, if $s\in\taysup M$ and $s$ is normal,
	then either $M=N\oplus P$ and $s=\linj t$ or $s=\rinj u$
	with $t\in\taysup N$ or $u\in\taysup P$;
	or $M=\lambda \vec x.  x Q_{1}\cdots Q_{k}$  and
	$s=\lambda \vec x.\rappl x{\bar q_1\cdots \bar q_k}$
	with $\bar q_i\in\taysup{Q_i}^\oc$ for $1\le i\le k$.
	We obtain inductively
	$t\in\taysup{\eBT N}$
	or
	$u\in\taysup{\eBT P}$
	or $\bar q_i\in\taysup{\eBT{Q_i}}^\oc$ for $1\le i\le k$,
	and then $s\in\taysup{\eBT M}$.
\end{proof}

Step~\ref{step:taysup} then follows, using the fact that $\BT M$ is the
downwards closure of $\{\eBT{L^n(M)}\mid n\in\N\}$:

\begin{thm}[Step~\ref{step:taysup}]\label{thm:taysup}
Let $M\in \ndTerms$. Then $\taysup{\BT{M}}= \NF{\taysup{M}}$.
\end{thm}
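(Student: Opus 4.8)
The plan is to express both sides of the identity as unions indexed by the iterations of the hereditary head reduction strategy $L$, and then match them. First I would unfold the definition of $\BT M$: since it is the downwards closure in $\eBTs$ of the family $\{\eBT{L^n(M)}\mid n\in\N\}$, every $B\in\BT M$ satisfies $B\le \eBT{L^n(M)}$ for some $n$, while conversely each $\eBT{L^n(M)}$ lies in $\eBTs$ and in its own downwards closure, hence belongs to $\BT M$. Combining this with Lemma~\ref{lem:orderBT} (monotonicity of $\taysup{-}$ along $\le$) and the definition $\taysup{\BT M}=\bigcup_{B\in\BT M}\taysup B$, both inclusions give
\[
	\taysup{\BT M}=\bigcup_{n\in\N}\taysup{\eBT{L^n(M)}}\quad.
\]

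Next I would rewrite the right-hand side using the commutation lemmas already at hand. By Lemma~\ref{lem:eBTtaysup}, $\taysup{\eBT{L^n(M)}}=\{s\in\taysup{L^n(M)}\mid s\text{ is normal}\}$; and iterating Lemma~\ref{lem:commL} yields $\taysup{L^n(M)}=L^n(\taysup M)$, where the set-level $L$ agrees with taking supports componentwise: since all coefficients are non-negative there is no cancellation, so $\supp(L^n(s))=L^n(\{s\})$ and $L$ commutes with set unions. This turns the identity into
\[
	\taysup{\BT M}=\bigcup_{n\in\N}\{t\in L^n(\taysup M)\mid t\text{ is normal}\}=\bigcup_{s\in\taysup M}\ \bigcup_{n\in\N}\{t\in\supp(L^n(s))\mid t\text{ is normal}\}\quad.
\]

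It then remains to show that, for each fixed $s\in\taysup M$, the inner double union equals $\supp(\NF s)$, which concludes since $\NF{\taysup M}=\bigcup_{s\in\taysup M}\supp(\NF s)$ by definition. This is the heart of the argument. The inclusion $\supseteq$ follows because $L$ is normalizing, so $L^N(s)=\NF s$ for $N$ large enough, and $\supp(\NF s)$ consists of normal terms only. For $\subseteq$, the key observation is that normal terms are invariant under $L$: if $t$ is normal and $t\in\supp(L^n(s))$, then $L(t)=t$, so $t$ persists (with non-negative coefficient, hence uncancelled) in $\supp(L^m(s))$ for every $m\ge n$, and in particular $t\in\supp(L^N(s))=\supp(\NF s)$. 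I expect this last step to be the main obstacle --- not because it is deep, but because it is precisely where the purely qualitative nature of Step~\ref{step:taysup} is exploited: working at the level of supports rather than coefficients is exactly what lets us accumulate the normal terms produced along the head-reduction sequence and identify them with $\supp(\NF s)$, sidestepping the convergence issues that would plague a coefficient-aware treatment of the infinite sum $\tayexp{\BT M}$.
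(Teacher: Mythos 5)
Your proof is correct and follows essentially the same route as the paper's: both reduce the statement to Lemmas~\ref{lem:orderBT}, \ref{lem:eBTtaysup} and~\ref{lem:commL} together with the fact that $L$ is normalizing, the only difference being that you organize the argument as a chain of set identities while the paper does an element-wise double inclusion. Your explicit justification of the persistence of normal terms along the $L$-iterates (via non-negativity of coefficients and $L(t)=t$) is a point the paper's proof leaves implicit, but it is the same underlying idea.
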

\begin{proof}
Recall that $\NF{\taysup{M}} = \bigcup_{s\in \taysup{M}} \supp(\NF{s})$.
The proof is by double inclusion.

$({\subseteq})$  Let $t\in \taysup{\BT{M}}$,
\emph{i.e.} $t\in\taysup B$ for some $B\in\BT M$.
By the definition of $\BT M$, there exists $n\in\N$ such that
$B\le \eBT{L^n(M)}$, and then by Lemma~\ref{lem:orderBT}
$t\in\taysup{\eBT{L^n(M)}}$.
By Lemma~\ref{lem:eBTtaysup},
$t$ is normal and $t\in\taysup{L^n(M)}$.
By Lemma~\ref{lem:commL},
$t\in L^n(\taysup M)$,
hence there exists $s\in\taysup M$ such that
$t\in \supp(L^n(s))$.
Since $t$ is normal, $t\in \supp(\NF s)$.

$({\supseteq}) $ If $t\in \NF{\taysup{M}}$ we can fix $s\in \taysup{M}$
such that $t\in \supp(\NF{s})$.
Then there exists $n\in\N$ such that $\NF s=L^n(s)$,
hence $t\in\bigcup_{s \in \taysup M} \supp(L^n(s)) = L^n(\taysup M)$.
By Lemma~\ref{lem:commL}, $t\in \taysup{L^n(M)}$
and since $t$ is normal, Lemma~\ref{lem:eBTtaysup}
entails that $t\in\taysup{\eBT{L^n(M)}}$.
By the definitions of $\BT M$ and $\taysup{\BT M}$,
we have $\eBT{L^n(M)}\subseteq\BT M$ and then $\taysup{\eBT{L^n(M)}}\subseteq\taysup{\BT M}$,
and we obtain $t\in\taysup{\BT{M}}$.
\end{proof}

\section{The groupoid of permutations of rigid resource terms}%
\label{sect:rigid}

\subsection{Rigid resource terms and permutation terms}

We introduce the set $D$ of \definitive{rigid resource terms}
and the set $D^{!}$ of \definitive{rigid resource monomials}
by mutual induction as follows:
\[
	\rigidTerms \ni a,b,c,d \recdef x \mid \lambda x. a \mid \rappl{a}{\vec{b}} \mid \rinj a \mid \linj a
	\qquad
	\qquad
	\rigidMonomials \ni \vec{a},\vec{b},\vec{c},\vec{d} \recdef (a_1,\dotsc,a_n)
	\quad.
\]
Rigid resource terms are considered up to renaming of bound variables:
the only difference with resource terms is that rigid monomials are ordered
lists rather than finite multisets.
We write $\length{(a_1,\dotsc,a_n)}\eqdef n$,
and $(a_1,\dotsc,a_n)\cons(a_{n+1},\dotsc,a_{n+m}) \eqdef (a_1,\dotsc,a_{n+m})$.
We write $\rigidExprs$ for either $\rigidTerms$ or $\rigidMonomials$
and call \definitive{rigid resource expression} any rigid term
or rigid monomial.
Again, for any $r\in\rigidExprs$, we write $n_x(r)$ for the number of free occurrences of the variable $x$ in $r$,
and we use notations and priority conventions similar to those for non rigid expressions:
e.g., we may write
$\linj{\lambda \vec x.\rappl a{\vec b\,\vec c}}$
for $\linj{(\lambda x_1.\dotsc.\lambda x_n.(\rappl{\rappl a{\vec b}}{\vec c}))}$.

As we have already stated, rigid resource expressions are nothing but resource
expressions for which the order of terms in monomials matter.
To make this connexion formal, consider the
\definitive{representation relation} $\lhd \subseteq D^{(!)}\times \resExprs$
defined by the rules of Figure~\ref{fig:lhd}.
\begin{figure}[t]
	\begin{gather*}
		\begin{prooftree}
			\infer0[]{ x \lhd x  }
		\end{prooftree}
		\qquad
		\begin{prooftree}
			\hypo{  a \lhd s }
			\infer1[]{  \lambda x. a \lhd \lambda x. s }
		\end{prooftree}
		\qquad
		\begin{prooftree}
			\hypo{  a \lhd s }
			\infer1[]{  \linj a \lhd s \oplus \bullet }
		\end{prooftree}
		\qquad
		\begin{prooftree}
			\hypo{  a \lhd s }
			\infer1[]{ \rinj  a \lhd \bullet \oplus s }
		\end{prooftree}
		\\[1ex]
		\begin{prooftree}
			\hypo{  c\lhd s}
			\hypo{ \ \vec{d}\lhd \bar{t}}
			\infer2[]{  \langle c \rangle \vec{d} \lhd \langle s \rangle \bar{t} }
		\end{prooftree}
		\qquad
		\begin{prooftree}
			\hypo{a_{1}\lhd t_{1}\quad\cdots\quad a_{n}\lhd t_{n}}
			\infer1[]{  (a_{1},\dotsc, a_{n}) \lhd [t_{1},\dotsc, t_{n}]   }
		\end{prooftree}
	\end{gather*}
	\caption{Rules for the rigid representation relation}%
	\label{fig:lhd}
\end{figure}
Observe that the relation $\lhd$ is the graph of a surjection $\rigidExprs\to\resExprs$:
if $r\in\rigidExprs$, there exists a unique $e\in\resExprs$ such
that $r\lhd e$, and then we write $\softify r\eqdef e$;
and any $e\in\resExprs$ has at least one rigid representation $r\lhd e$.
Moreover observe that, if $\vec a\lhd \bar t$ and $\length{\vec a}=n$
then for any $\sigma\in \permutations{n}$, $\lact{\sigma}{\vec a}\lhd \bar t$,
\emph{i.e.} $\softify{\lact{\sigma}{\vec a}}=\softify{\vec a}$.

We now introduce a syntax for the trees of permutations that can act on
monomials at any depth in a rigid expression.
The language of such \definitive{permutation expressions} is given as follows:
\[ \D \ni \alpha, \beta, \gamma, \delta
	\recdef id_x \mid \lambda x.\alpha
	\mid \rappl \alpha {\tilde \beta} \mid \linj \alpha \mid \rinj \alpha
	\qquad
	\Dm \ni
	\tilde \alpha,
	\tilde \beta,
	\tilde \gamma,
	\tilde \delta
	\recdef {(\sigma,(\alpha_1,\dotsc,\alpha_n))}
\]
where $x$ ranges over variables and $\sigma$ ranges over $\permutations{n}$
in the pair $(\sigma,(\alpha_1,\dotsc,\alpha_n))$.
In other words, a \definitive{permutation term} (resp. \definitive{permutation
monomial}) is nothing but a rigid term (resp.\ rigid monomial), with a
permutation attached with each list of arguments.
In general, we will simply write  $(\sigma,\alpha_1,\dotsc,\alpha_n)$ for
the permutation monomial $(\sigma,(\alpha_1,\dotsc,\alpha_n))$.

We say $\epsilon\in\G$ maps $r\in\rigidExprs$ to $r'\in\rigidExprs$
if the statement $\epsilon:r\cong r'$ is derivable from the rules of Figure~\ref{fig:cong}.
\begin{figure}[t]
	\begin{gather*}
		\begin{prooftree}
			\infer0{ id_{x} : x \cong x  }
		\end{prooftree}
		\qquad
		\begin{prooftree}
			\hypo{ \alpha : a \cong a' }
			\infer1{ \lambda x. \alpha : \lambda x. a \cong \lambda x. a' }
		\end{prooftree}
		\qquad
		\begin{prooftree}
			\hypo{ \gamma : c\cong c'}
			\hypo{ \delta : \vec{d}\cong \vec{d}'}
			\infer2{ \rappl{\gamma}{\delta}: \langle c \rangle \vec{d} \cong \langle c' \rangle \vec{d}' }
		\end{prooftree}
		\\[1ex]
		\begin{prooftree}
			\hypo{ \alpha : a \cong a' }
			\infer1{ \linj \alpha : \linj a \cong \linj {a'} }
		\end{prooftree}
		\qquad
		\begin{prooftree}
			\hypo{ \alpha : a \cong a' }
			\infer1{ \rinj \alpha : \rinj a \cong \rinj {a'} }
		\end{prooftree}
		\\[1ex]
		\begin{prooftree}
			\hypo{\sigma \in \permutations{n}}
			\hypo{\alpha_{1}:a_{1}\cong a'_{\sigma(1)}\quad\cdots\quad\alpha_{n}:a_{n}\cong a'_{\sigma(n)}}
			\infer2{(\sigma,\alpha_{1},\dotsc,\alpha_{n}):(a_{1},\dotsc,a_{n})\cong(a'_{1},\dotsc,a'_{n})}
		\end{prooftree}
	\end{gather*}
	\caption{Permutation expressions as morphisms between rigid expressions}%
	\label{fig:cong}
\end{figure}
We then write $ r \cong r'$ if there exists some $\epsilon \in \G$
such that $ \epsilon: r \cong r' $.
As a direct consequence of the definitions,
we obtain that $\cong$ is nothing but the equivalence kernel
of the function $r\in\rigidExprs\mapsto\softify r\in\resExprs$:
\begin{lem}\label{lem:conglhd}
	For all $r,r'\in\rigidExprs$,
	$r\cong r'$ iff $\softify r=\softify{r'}$.
\end{lem}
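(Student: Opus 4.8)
The plan is to prove the two implications separately, both by induction along the mutual structure of rigid terms and monomials. The key preliminary observation, read off directly from Figure~\ref{fig:lhd}, is that $\softify{-}$ commutes with every constructor: $\softify{\lambda x.a}=\lambda x.\softify a$, $\softify{\rappl c{\vec d}}=\rappl{\softify c}{\softify{\vec d}}$, $\softify{\linj a}=\softify a\oplus\bullet$, $\softify{\rinj a}=\bullet\oplus\softify a$, and $\softify{(a_1,\dotsc,a_n)}=[\softify{a_1},\dotsc,\softify{a_n}]$. Since these rules are syntax-directed, the outermost shape of $\softify r$ determines the outermost constructor of $r$; in particular $\softify r=\softify{r'}$ forces $r$ and $r'$ to share their head constructor, and in the monomial case to have the same length.

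For the forward implication I would assume $\epsilon:r\cong r'$ for some $\epsilon\in\G$ and argue by induction on the derivation of this judgement from the rules of Figure~\ref{fig:cong}, establishing $\softify r=\softify{r'}$. The base case $id_x:x\cong x$ is trivial, and the unary/binary syntactic cases (abstraction, application, and the two injections) are immediate: the induction hypothesis equates the softenings of the immediate subexpressions, and then the commutation of $\softify{-}$ with the relevant constructor equates the softenings of $r$ and $r'$. The only case requiring a word is the monomial rule $(\sigma,\alpha_1,\dotsc,\alpha_n):(a_1,\dotsc,a_n)\cong(a'_1,\dotsc,a'_n)$ with $\alpha_i:a_i\cong a'_{\sigma(i)}$: the induction hypothesis gives $\softify{a_i}=\softify{a'_{\sigma(i)}}$, and since $\sigma$ is a bijection of $\{1,\dotsc,n\}$ the list $[\softify{a'_{\sigma(1)}},\dotsc,\softify{a'_{\sigma(n)}}]$ is just a reordering of $[\softify{a'_1},\dotsc,\softify{a'_n}]$, hence denotes the same multiset; so $\softify{(a_1,\dotsc,a_n)}=\softify{(a'_1,\dotsc,a'_n)}$.

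For the converse I would assume $\softify r=\softify{r'}$ and construct $\epsilon:r\cong r'$ by induction on $r$ (mutually on terms and monomials). By the head-constructor remark above, $r$ and $r'$ have the same top constructor, so the induction splits exactly into the cases of Figure~\ref{fig:cong}; in each syntactic case one reads off equality of the softenings of corresponding immediate subexpressions, applies the induction hypothesis to obtain witnessing permutation expressions, and reassembles them with the matching rule of Figure~\ref{fig:cong}. I expect the only genuinely non-routine point to be the monomial case, which is exactly where the permutation decorating a permutation monomial earns its keep.

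Concretely, in that case $r=(a_1,\dotsc,a_n)$ and $r'=(a'_1,\dotsc,a'_n)$ with $[\softify{a_1},\dotsc,\softify{a_n}]=[\softify{a'_1},\dotsc,\softify{a'_n}]$ as multisets. The hard part is to convert this multiset equality into an actual permutation, and I would do so by invoking the standard fact that two finite lists denoting the same multiset differ by a permutation: there exists $\sigma\in\permutations{n}$ with $\softify{a_i}=\softify{a'_{\sigma(i)}}$ for every $i$. Applying the induction hypothesis to each pair $(a_i,a'_{\sigma(i)})$ (each $a_i$ being a subexpression of $r$, so the recursion is well-founded) yields $\alpha_i:a_i\cong a'_{\sigma(i)}$, and the monomial rule then assembles $(\sigma,\alpha_1,\dotsc,\alpha_n):(a_1,\dotsc,a_n)\cong(a'_1,\dotsc,a'_n)$, as required. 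This is precisely what makes permutation expressions rich enough to witness the full equivalence kernel of $\softify{-}$ rather than merely a pointwise congruence.
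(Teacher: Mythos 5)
Your proof is correct; the paper itself offers no written proof, presenting the lemma as ``a direct consequence of the definitions,'' and the mutual structural induction you spell out (with the multiset-equality-to-permutation step in the converse direction for monomials) is exactly the argument being elided. Nothing further is needed.
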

The equivalence classes for $\cong$ are thus exactly the sets
of rigid representations of each resource expression.
We can moreover organize the permutation expressions witnessing
this equivalence relation into a groupoid, whose objects
are resource expressions.
Observe indeed that, for each $\epsilon\in\G$ there is exactly one pair
$(r,r')$ of rigid expressions such that $\epsilon:r\cong r'$.
Given $r,r'\in\rigidExprs$, the set of morphisms from $r$ to $r'$ is then
$ \G(r,r') = \{ \epsilon\mid  \epsilon: r\cong r' \}$.
The composition $\epsilon'\epsilon\in\G(r,r'')$ of $\epsilon\in\G(r,r')$ and $\epsilon'\in\G(r',r'')$
is defined by induction on the syntax of rigid resource expressions in the obvious way:
the only interesting case is that of permutation monomials, for which we set
$(\sigma', \alpha'_{1},\dotsc, \alpha'_{n})(\sigma, \alpha_{1},\dotsc, \alpha_{n})\eqdef
(\sigma'\sigma,\alpha'_{\sigma(1)}\alpha_{1},\dotsc,\alpha'_{\sigma(n)}\alpha_{n})$.
And the identity $id_r$ on $r$ is the same as $r$,
with each variable occurrence $x$ replaced with $id_x$,
and with the identity permutation attached with each monomial.
Inverses are also defined inductively, the key case of monomials being:
$(\sigma, \alpha_{1},\dotsc, \alpha_{n})^{-1}
\eqdef(\sigma^{-1}, \alpha^{-1}_{\sigma^{-1}(1)},\dotsc, \alpha^{-1}_{\sigma^{-1}(n)})$.

If $\vec a = (a_{1},\dotsc, a_{n})$ and $\vec a' = (a'_{1},\dotsc, a'_{n})$,
we set $\vD(\vec{a}, \vec{a}') \eqdef \prod_{i=1}^n \D(a_{i}, a'_{i})$:
with rigid monomials as objects, we obtain a groupoid $\vD$,
which is the free strict monoidal category over $\D$.
Moreover, $\Dm(\vec a,\vec a')=
\sum_{\sigma\in \permutations{n}} \vD(\vec a, \lact{\sigma^{-1}}{\vec a'})$:
$\Dm$ is the free symmetric strict monoidal category over $ \D$.
We call \definitive{quasi-stabilizer} of $\vec{a}$ the subgroup of $\permutations{n}$ defined by
\[
  \qSt(\vec{a}) \eqdef
  \{ \sigma\in \permutations{n} \mid \text{for } 1\le i\le n,\ a_{i} \cong a_{\sigma(i)} \}\quad.
\]
Observe that
$\qSt(\vec{a}) = \St((\softify{a_1},\dotsc,\softify{a_n}))$
and $\sigma\in \qSt(\vec a)$ iff $\vD(\vec a,\lact{\sigma^{-1}}\vec a)\not=\emptyset$.

Let us write $\G(r)$ for the group of automorphisms of $r$: $\G(r)\eqdef\G(r,r)$.
Similarly, we will write $\vD(\vec a)\eqdef\vD(\vec a,\vec a)$.

\begin{lem}\label{lem:cardGmonomial}
For any $ \vec{a}=(a_{1},\dots, a_{n}) \in D^{\oc}$, $\card(\Dm(\vec{a}))
= \card( \qSt(\vec{a})\times\vD(\vec{a}))$.
\end{lem}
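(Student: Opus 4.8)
The plan is to read off the cardinality of the automorphism group from the decomposition of hom-sets already recorded above. Since $\Dm(\vec a)=\Dm(\vec a,\vec a)=\sum_{\sigma\in\permutations{n}}\vD(\vec a,\lact{\sigma^{-1}}{\vec a})$, where the sum denotes a disjoint union of sets, taking cardinalities gives
\[
  \card(\Dm(\vec a))=\sum_{\sigma\in\permutations{n}}\card(\vD(\vec a,\lact{\sigma^{-1}}{\vec a}))\quad.
\]
All the hom-sets and automorphism groups in play are finite (by a routine induction on the structure of rigid expressions), so this cardinal arithmetic is unproblematic.

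Next I would prune the index set and identify the surviving summands. As observed just after the definition of $\qSt$, the factor $\vD(\vec a,\lact{\sigma^{-1}}{\vec a})$ is empty unless $\sigma\in\qSt(\vec a)$, so only those $\sigma$ contribute. Unfolding $\vD(\vec a,\vec a')=\prod_{i=1}^n\D(a_i,a_i')$ together with the identity $(\lact{\sigma^{-1}}{\vec a})_i=a_{\sigma(i)}$, each surviving summand is exactly $\prod_{i=1}^n\D(a_i,a_{\sigma(i)})$.

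The crux is then to show that this product has the same cardinality as $\vD(\vec a)=\prod_{i=1}^n\D(a_i,a_i)$ for every $\sigma\in\qSt(\vec a)$. This rests on the standard groupoid fact that post-composition with a fixed isomorphism is a bijection of hom-sets: whenever $\D(a_i,a_{\sigma(i)})\neq\emptyset$ we get $\card(\D(a_i,a_{\sigma(i)}))=\card(\D(a_i,a_i))$. But $\sigma\in\qSt(\vec a)$ means precisely $a_i\cong a_{\sigma(i)}$ for all $i$, i.e.\ $\D(a_i,a_{\sigma(i)})\neq\emptyset$, so the hypothesis is met componentwise. Multiplying over $i$ yields $\card(\vD(\vec a,\lact{\sigma^{-1}}{\vec a}))=\card(\vD(\vec a))$, a value independent of $\sigma$. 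Summing this constant over the index set $\qSt(\vec a)$ gives $\card(\Dm(\vec a))=\card(\qSt(\vec a))\cdot\card(\vD(\vec a))=\card(\qSt(\vec a)\times\vD(\vec a))$, as claimed.

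I expect the only delicate point to be the bookkeeping in the middle step: pinning down the index conventions of the $\permutations{n}$-action so that the $\sigma$-th summand really is $\prod_i\D(a_i,a_{\sigma(i)})$, and matching its emptiness with non-membership in $\qSt(\vec a)$. The groupoid cardinality argument itself is entirely routine, and once applied componentwise the result follows immediately.
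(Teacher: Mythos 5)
Your proof is correct and follows essentially the same route as the paper: decompose $\Dm(\vec a)$ as the disjoint union $\sum_{\sigma\in\permutations{n}}\vD(\vec a,\lact{\sigma^{-1}}{\vec a})$, restrict the index set to $\qSt(\vec a)$ where the summands are nonempty, and use the groupoid fact that postcomposition by an isomorphism gives a bijection $\D(a_i)\to\D(a_i,a_{\sigma(i)})$ componentwise. Your version just spells out the index bookkeeping and the finiteness a little more explicitly than the paper does.
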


\begin{proof}
	Since $\G$ is a groupoid,
	for any morphism $\epsilon:r\cong r'$,
	postcomposition by $\epsilon$ defines
	a bijection from $\G(r)$ to $\G(r,r')$.
	It follows that $\Dm(\vec{a})
	=\sum_{\sigma\in \permutations{n}} \vD(\vec a,\lact{\sigma^{-1}}{\vec a})
	=\sum_{\sigma\in \qSt(\vec{a})}\prod_{i=1}^n \D(a_i, a_{\sigma(i)})$
	is in bijection with
	$\sum_{\sigma\in \qSt(\vec{a})}\prod_{i=1}^n \D(a_i)
	=\qSt(\vec{a})\times\vD(\vec a)$.
\end{proof}

We are then able to formalize the interpretation of the multiplicity of
a resource term $s$ as the number of permutations of monomials
in $s$ leaving any of its writings $a\vartriangleleft s$ unchanged:
\begin{lem}\label{lem:m_is_cardG}
Let $e\in \resExprs$ and let $ r \vartriangleleft e $. Then $m(e) = \card( \G(r))$.
\end{lem}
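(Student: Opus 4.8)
The plan is to proceed by structural induction on $r$ (equivalently, on the derivation of $r\vartriangleleft e$, which is entirely determined by $r$ since $e=\softify r$), matching each clause of the inductive definition of $m$ with a decomposition of the automorphism group $\G(r)$. The key external inputs are Lemma~\ref{lem:cardGmonomial}, which splits $\card(\Dm(\vec a))$ into a quasi-stabilizer factor and a pointwise-automorphism factor, and the identity $m(\bar s)=m^{\bar s}\card(\St(\vec s))$ recorded just before Theorem~\ref{thm:m}, which exhibits $m$ of a monomial as the same kind of product. Since rigid terms and rigid monomials are defined by mutual induction, this is a single induction covering both halves of $\resExprs$ at once.

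The term cases are immediate. When $r=x$ both sides equal $1$. When $r=\lambda x.a$, $r=\linj a$, or $r=\rinj a$ with $a\vartriangleleft s$, the rules of Figure~\ref{fig:cong} show that every automorphism of $r$ is the image under the corresponding constructor of a unique automorphism of $a$, so $\card(\G(r))=\card(\G(a))=m(s)=m(e)$ by the induction hypothesis and the definition of $m$. When $r=\rappl c{\vec d}$ with $c\vartriangleleft s$ and $\vec d\vartriangleleft\bar t$, an automorphism of $r$ is exactly a pair $\rappl\gamma\delta$ with $\gamma\in\G(c)$ and $\delta\in\Dm(\vec d)=\G(\vec d)$, whence $\card(\G(r))=\card(\G(c))\cdot\card(\Dm(\vec d))$; by induction this is $m(s)\,m(\bar t)=m(e)$, the second factor being supplied by the monomial case below.

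The monomial case is the crux. Take $r=\vec a=(a_1,\dotsc,a_n)$ with $a_i\vartriangleleft t_i$, so $e=\bar t=[t_1,\dotsc,t_n]$. By Lemma~\ref{lem:cardGmonomial}, $\card(\G(\vec a))=\card(\Dm(\vec a))=\card(\qSt(\vec a))\cdot\card(\vD(\vec a))$. Using the observation following the definition of the quasi-stabilizer, $\qSt(\vec a)=\St((\softify{a_1},\dotsc,\softify{a_n}))=\St(\vec t)$ where $\vec t=(t_1,\dotsc,t_n)$. On the other hand $\vD(\vec a)=\prod_{i=1}^n\G(a_i)$, so by the induction hypothesis $\card(\vD(\vec a))=\prod_{i=1}^n m(t_i)=m^{\bar t}$. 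Hence $\card(\G(\vec a))=\card(\St(\vec t))\cdot m^{\bar t}=m(\bar t)$ by the recorded identity, which is exactly $m(e)$.

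The only genuine difficulty lies in this last case, because the automorphisms of a rigid monomial do not factor through its components in a naive way: a permutation of the list is admissible precisely when it exchanges $\cong$-equivalent entries, so one must disentangle this \emph{permuting} contribution from the \emph{pointwise} one. That separation is exactly the content of Lemma~\ref{lem:cardGmonomial}; once it is in hand, the present lemma reduces to identifying its two factors with $\card(\St(\vec t))$ and $m^{\bar t}$, which the identity $m(\bar s)=m^{\bar s}\card(\St(\vec s))$ delivers. Thus the combinatorial weight is front-loaded into Lemma~\ref{lem:cardGmonomial} and that identity, and what remains here is a clean structural induction.
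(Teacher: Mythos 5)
Your proof is correct and follows essentially the same route as the paper's: a structural induction whose only substantive case is the monomial one, resolved by combining Lemma~\ref{lem:cardGmonomial} with the identity $m(\bar s)=m^{\bar s}\card(\St(\vec s))$ and the observation $\qSt(\vec a)=\St((\softify{a_1},\dotsc,\softify{a_n}))$. The paper merely leaves the term cases implicit, which you spell out; the content is the same.
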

\begin{proof}
By induction on the structure of $e$. We prove the multiset case.
Assume $e=\bar s$ and $\vec a=(a_1,\dotsc,a_n) \vartriangleleft \bar s$.
Then we can write $\bar s=[s_1,\dotsc,s_n]$ so that $a_i\lhd s_i$
and the induction hypothesis gives $m(s_i)=\card(\G(a_i))$ for $1\le i\le n$.
Then $m(e)=\card(\St((s_1,\dotsc,s_n)))\prod_{i=1}^n\card(\G(a_i))=
\card(\qSt(\vec a))\times \card(\vD(\vec a))$,
and we conclude by Lemma~\ref{lem:cardGmonomial}.
\end{proof}

\subsection{Rigid substitution}
For any $r\in\rigidExprs$ and $\vec b\in\rigidMonomials$ such that $\length{\vec b}=n_x(r)=n$,
we define the \definitive{$n$-linear substitution} $\subst rx{\vec{b}}$ of $\vec b$ for $x$ in $r$
inductively as follows:
\begin{align*}
	\subst xx{(b)}&\eqdef b
	\\
	\subst yx{()} &\eqdef y
	\\
	\subst{(\linj{a})}x{\vec b}
	&\eqdef
	\linj{\subst a x{\vec b}}
	\\
	\subst{(\rinj{a})}x{\vec b}
	&\eqdef
	\rinj{\subst a x{\vec b}}
	\\
	\subst{(\lambda z. a)}x{\vec{b}}
	&\eqdef
	\lambda z. \subst{a}x{\vec{b}}
	\\
	\subst{\rappl c{\vec d\,}}x{\vec b_0\cons\vec b_1}
	&\eqdef
	\rappl{\subst cx{\vec b_0}}{\subst{\vec d\,}x{\vec b_1}}
	\\
	\subst{(a_{1},\dotsc, a_{n})}x{\vec{b}_{1}\cons \cdots\cons \vec{b}_{n}}
	&\eqdef
	( \subst{a_{1}}x{\vec{b}_{1}},\dotsc, \subst{a_{n}}x{\vec{b}_{n}}\})
\end{align*}
where we assume that $y\not=x$, $z\notin \{x\}\cup FV(\vec b)$,
$\length{\vec b}=n_x(a)$,
$\length{\vec b_0}=n_x(c)$,
$\length{\vec b_1}=n_x(\vec d)$,
and $\length{\vec b_i}=n_x(a_i)$ for $1\le i\le n$.

Observe that this substitution is only partially defined.
In order to deal with the general case, we will use the
nullary sum of rigid expressions $0\in\sums{\rigidExprs}$:
again, we consider all the syntactic constructs to be linear
so that we may write, e.g., $\lambda x.a$ for $a\in\rigidTerms\cup\{0\}$
with $\lambda x.0=0$.
We call \definitive{partial rigid expressions} the elements of $\rigidExprs\cup \{0\}$:
we generally use the same typographic conventions for
partial expressions as for regular ones.

Whenever $r\in \rigidExprs\cup\{0\}$
and $\vec{b}\in\rigidMonomials\cup\{0\}$,
we define the \definitive{rigid substitution} $\subst{r}x{\vec{b}}$ of $\vec{b}$
for the variable $x$ in $r$
as above if
$r\in \rigidExprs$, $\vec{b}\in\rigidMonomials$ and $n_{x}(r)=\length{\vec b}$,
and set $\subst rx{\vec b}\eqdef 0$ otherwise.

This rigid version of multilinear substitution will allow us to provide a more
formal account of the intuitive definition of the symmetric multilinear
substitution  $\nsubst ex{\bar u}$, given in Section~\ref{sect:resourceTerms}:
having fixed rigid representations $r\lhd e$ and $\vec b\lhd\bar t=[t_1,\dotsc,t_n]$ with $n=n_x(e)$,
instead of the ambiguous
\[
			\sum\limits_{\sigma\in \permutations{n}} e[ t_{\sigma(1)}/x_{1},\dotsc, t_{\sigma(n)}/x_{n} ]
\]
we can write
\[
\sum_{\sigma\in\permutations{n}}\softify[\big]{\subst rx{\lact \sigma{\vec b}}}\quad .
\]
To prove that this coincides with the inductive definition of $\nsubst ex{\bar u}$,
we need to study how the elements of $\vec b$ are routed to subexpressions of $r$
in the substitution $\subst rx{\lact \sigma{\vec b}}$.

For this, we will rely on the following constructions on permutations.
First, if $ \sigma \in \permutations{n}$ and $\tau \in \permutations{p}$,
we define the \definitive{concatenation} $ \sigma \otimes \tau \in \permutations{n+p}$ by:
\[
	(\sigma\otimes\tau)(i)\eqdef \sigma(i)
	\qquad
  \text{and}
  \qquad
	(\sigma\otimes\tau)(n+j)\eqdef n+\tau(j)
\]
for $1\le i\le n$ and $1\le j\le p$.
This operation is associative and, more generally, we obtain
$\tau_{1}\otimes\cdots\otimes\tau_{n}\in\permutations{k_1+\cdots+k_n}$
whenever
$ \tau_{1} \in \permutations{k_{1}},\dotsc, \tau_{n} \in \permutations{k_{n}}$.
The tensor product notation is justified since, in the category
$\catP$ of natural numbers and permutations, the concatenation of
permutations defines a tensor product
(which is the sum of natural numbers on objects).

Moreover, for each $(I_1,\dotsc,I_n)\in\wcomp{n}{k}$,
writing $I_j=\{i^j_1,\dotsc,i^j_{k_j}\}$ with $i^j_1<\cdots<i^j_{k_j}$,
we set $\gamma_{I_1,\dotsc,I_n}(i^j_l)\eqdef l+\sum_{r=1}^{j-1} k_r$:
then $\gamma_{I_1,\dotsc,I_n}$
is the unique permutation $\gamma\in\permutations{k}$ such that
the map $(j,l)\mapsto \gamma(i^j_l)$ is strictly increasing,
considering the lexicographic order on pairs.

Given a weak $n$-composition of $k$, \emph{i.e.} a tuple
$(k_1,\dotsc,k_n)\in\N^n$ such that $k=\sum_{j=1}^n k_j$,
we write $\wcompk{n}{k_1,\dotsc,k_n}{k}$
for the set of those $(I_1,\dotsc,I_n)\in\wcomp{n}{k}$
such that $\card(I_i)=k_i$ for $1\le i\le n$.
We obtain:
\begin{lem}%
  \label{lem:permComp}
  For any weak $n$-composition $(k_1,\dotsc,k_n)$ of $k$,
  the function
  \begin{align*}
    \wcompk{n}{k_1,\dotsc,k_n}{k}\times \prod_{j=1}^n\permutations{k_j}
    &\to \permutations{k}
    \\
    ((I_1,\dotsc,I_n),(\sigma_1,\dotsc,\sigma_n))
    &\mapsto(\sigma_1\otimes\cdots\otimes\sigma_n)\gamma_{I_1,\dotsc,I_n}
  \end{align*}
  is bijective.
\end{lem}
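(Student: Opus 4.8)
The plan is to produce an explicit two-sided inverse, which is both cleanest and most informative; as a sanity check, note first that the two sides have the same finite cardinality. The codomain has $\card(\permutations{k})=k!$ elements. An element of $\wcompk{n}{k_1,\dotsc,k_n}{k}$ is an ordered partition of $\{1,\dotsc,k\}$ into blocks of the prescribed sizes, so $\card(\wcompk{n}{k_1,\dotsc,k_n}{k})=k!/(k_1!\cdots k_n!)$, and hence the domain has $\frac{k!}{k_1!\cdots k_n!}\cdot\prod_{j=1}^n k_j!=k!$ elements. Thus injectivity alone would already suffice, but I shall describe the inverse directly.

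Write $c_j=\sum_{r=1}^j k_r$, so that $c_0=0$ and $c_n=k$, and let $J_j=\{c_{j-1}+1,\dotsc,c_j\}$; then $(J_1,\dotsc,J_n)$ is the ``sorted'' element of $\wcompk{n}{k_1,\dotsc,k_n}{k}$. Two observations drive the argument. First, by its very definition the permutation $\gamma_{I_1,\dotsc,I_n}$ restricts to a bijection from $I_j$ onto $J_j$ for each $j$, sending the $l$-th element of $I_j$ to $c_{j-1}+l$. Second, $\sigma_1\otimes\cdots\otimes\sigma_n$ maps each interval $J_j$ onto itself, acting there as $\sigma_j$ shifted by $c_{j-1}$. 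Combining them, the image $\pi=(\sigma_1\otimes\cdots\otimes\sigma_n)\gamma_{I_1,\dotsc,I_n}$ sends $I_j$ onto $J_j$, whence $I_j=\pi^{-1}(J_j)$.

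This last identity recovers $(I_1,\dotsc,I_n)$ from $\pi$ alone; it lies in $\wcompk{n}{k_1,\dotsc,k_n}{k}$ because $\pi^{-1}$ is a bijection and the $J_j$ partition $\{1,\dotsc,k\}$ with $\card(J_j)=k_j$. Once the $I_j$ are fixed, $\gamma=\gamma_{I_1,\dotsc,I_n}$ is determined and $\sigma_1\otimes\cdots\otimes\sigma_n=\pi\gamma^{-1}$. The permutation $\pi\gamma^{-1}$ preserves each $J_j$ setwise --- indeed $\gamma^{-1}$ maps $J_j$ onto $I_j$ and $\pi$ maps $I_j$ onto $J_j$ --- so it is block-diagonal and decomposes uniquely as a concatenation of permutations $\sigma_j\in\permutations{k_j}$, using that $\otimes$ restricts to a bijection between $\prod_{j=1}^n\permutations{k_j}$ and the subgroup of $\permutations{k}$ fixing every $J_j$ setwise. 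The assignment $\pi\mapsto((I_1,\dotsc,I_n),(\sigma_1,\dotsc,\sigma_n))$ is then the sought inverse, and verifying that both composites are the identity is a routine unwinding of the definitions of $\otimes$ and of $\gamma_{I_1,\dotsc,I_n}$.

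I expect the only genuinely delicate point to be the second observation together with the claim that $\pi\gamma^{-1}$ is block-diagonal: both require tracking how the offsets $c_{j-1}$ interact with the blocks, and it is precisely here that the increasing normalization built into $\gamma_{I_1,\dotsc,I_n}$ --- rather than an arbitrary bijection $I_j\to J_j$ --- is what pins the factors $\sigma_j$ down unambiguously. Everything else is bookkeeping.
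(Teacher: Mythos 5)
Your proof is correct and follows essentially the same route as the paper's: both construct the explicit inverse by setting $I_j=\pi^{-1}(J_j)$ (the set of indices sent into the $j$-th block of consecutive integers) and then reading off each $\sigma_j$ from the block-diagonal remainder $\pi\gamma_{I_1,\dotsc,I_n}^{-1}$. The paper's version is just terser, stating the inverse formula without the cardinality sanity check or the verification that the composites are identities.
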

\begin{proof}
  The inverse function is as follows: given $\sigma\in\permutations{k}$,
  we fix $I_j\eqdef\{i\in\{1,\dotsc,k\}\mid \sum_{r=1}^{j-1} k_r< \sigma(i)\le \sum_{r=1}^{j} k_r\}$;
  then, using the above notations for the elements of $I_j$,
  for each $l\in\{1,\dotsc,k_j\}$, we fix $\sigma_j(l)\in\{1,\dotsc,k_j\}$
  to be the unique $l'$ such that $\sigma(i^j_l)=l'+\sum_{r=1}^{j-1} k_r$.
\end{proof}

Now we can show that the two definitions of symmetric multilinear substitution coincide:
\begin{lem}%
	\label{lem:softifySubst}
	If $r\lhd e$ and $\vec b\lhd \bar t$ then $n_x(r)=n_x(e)$
	and $\length{\vec b}=\length{\bar t}$.
	Moreover
	$\nsubst ex{\bar t}=\sum_{\sigma\in\permutations{\length{\vec b}}}\softify[\big]{\subst rx{\lact \sigma{\vec b}}}$.
\end{lem}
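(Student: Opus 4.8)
The plan is to treat the two assertions separately, the first being routine and the second carrying the real content. The equalities $n_x(r)=n_x(e)$ and $\length{\vec b}=\length{\bar t}$ follow by a direct induction on the derivations of $r\lhd e$ and $\vec b\lhd\bar t$ from Figure~\ref{fig:lhd}: the relation $\lhd$ preserves every syntactic construct, and ordering the elements of a multiset monomial $[t_1,\dotsc,t_n]$ into a list $(a_1,\dotsc,a_n)$ affects neither its length nor the number of free occurrences of $x$. These equalities also dispatch the degenerate situation at once: by the remark following the definition of $\partial$, $\nsubst ex{\bar t}=0$ unless $n_x(e)=\length{\bar t}$, while $\subst rx{\lact\sigma{\vec b}}=0$ unless $n_x(r)=\length{\vec b}$ (a permutation preserving length); since $n_x(r)=n_x(e)$ and $\length{\vec b}=\length{\bar t}$, the two conditions agree and both sides vanish together. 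I may therefore assume $\length{\vec b}=n_x(r)=n$ throughout the rest of the argument.

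The identity itself is proved by induction on the structure of $r$, simultaneously for rigid terms and rigid monomials. The variable cases ($r=x$ with $n=1$, and $r=y\ne x$ with $n=0$) are immediate from the base clauses of both definitions. For $r=\lambda z.a$, $r=\linj a$ and $r=\rinj a$, the rigid substitution commutes with the unary constructor, $\softify{-}$ does so as well by the rules of Figure~\ref{fig:lhd}, and these constructors are linear on sums of resource expressions; pulling the constructor out of the sum over $\sigma$ and invoking the induction hypothesis on the single immediate subexpression closes each of these cases. All the content thus lies in the two branching constructs, application and monomial, where the single argument list $\vec b$ must be split and routed to several subexpressions.

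Consider $r=\rappl c{\vec d}$ with $c\lhd s$ and $\vec d\lhd\bar v$ (so $e=\rappl s{\bar v}$), and put $k_1\eqdef n_x(c)$, $k_2\eqdef n_x(\vec d)$, so that $(k_1,k_2)$ is a weak $2$-composition of $n$. Writing $\vec b_I$ for the sublist of $\vec b$ at the positions of $I$ taken in increasing order, the crux is the computation
\[
  \lact{(\sigma_1\otimes\sigma_2)\gamma_{I_1,I_2}}{\vec b}
  =\lact{\sigma_1}{\vec b_{I_1}}\cons\lact{\sigma_2}{\vec b_{I_2}}
  \quad,
\]
which holds because $\gamma_{I_1,I_2}$ gathers the entries indexed by $I_1$ and then those indexed by $I_2$, each in increasing order, so that $\lact{\gamma_{I_1,I_2}}{\vec b}=\vec b_{I_1}\cons\vec b_{I_2}$, after which $\sigma_1\otimes\sigma_2$ permutes the two blocks independently. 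Feeding this into the clause $\subst{\rappl c{\vec d}}x{\vec b_0\cons\vec b_1}=\rappl{\subst cx{\vec b_0}}{\subst{\vec d}x{\vec b_1}}$, using that $\softify{-}$ commutes with application and that application is bilinear on sums, and reindexing the sum over $\sigma\in\permutations n$ through the bijection of Lemma~\ref{lem:permComp} for the fixed composition $(k_1,k_2)$, the sum becomes a sum over $(I_1,I_2)\in\wcompk2{k_1,k_2}n$ of applications $\rappl{E_{I_1}}{E_{I_2}}$, where $E_{I_1}=\sum_{\sigma_1\in\permutations{k_1}}\softify{\subst cx{\lact{\sigma_1}{\vec b_{I_1}}}}$ and $E_{I_2}=\sum_{\sigma_2\in\permutations{k_2}}\softify{\subst{\vec d}x{\lact{\sigma_2}{\vec b_{I_2}}}}$. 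Since $\softify{\vec b_{I_j}}=\bar t_{I_j}$, the induction hypothesis rewrites $E_{I_1}=\nsubst sx{\bar t_{I_1}}$ and $E_{I_2}=\nsubst{\bar v}x{\bar t_{I_2}}$; and as $\nsubst sx{\bar t_{I_1}}=0$ whenever $\card(I_1)\ne k_1$, the index set may be relaxed from $\wcompk2{k_1,k_2}n$ to all of $\wcomp2n$, which is exactly the defining clause of $\nsubst{\rappl s{\bar v}}x{\bar t}$. The monomial case $r=(a_1,\dotsc,a_m)$ is entirely analogous, splitting $\vec b$ into $m$ blocks via the $m$-ary form of Lemma~\ref{lem:permComp}. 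The main obstacle is precisely this bookkeeping: checking that the free sum over $\permutations n$, decomposed as $\sigma=(\sigma_1\otimes\cdots\otimes\sigma_m)\gamma_{I_1,\dotsc,I_m}$, reorganizes into independent permutations of the blocks routed to each subexpression, so as to match the weak-composition sum defining the symmetric multilinear substitution.
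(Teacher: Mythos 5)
Your proposal is correct and follows essentially the same route as the paper's proof: induction on the structure of $r$, with the branching cases handled by the identity $\lact{(\sigma_1\otimes\cdots\otimes\sigma_n)\gamma_{I_1,\dotsc,I_n}}{\vec b}=\lact{\sigma_1}{\vec b_{I_1}}\cons\cdots\cons\lact{\sigma_n}{\vec b_{I_n}}$ and the reindexing bijection of Lemma~\ref{lem:permComp}. The only cosmetic difference is that you spell out the application case while the paper spells out the monomial case; the combinatorial content is identical.
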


\begin{proof}
	The first two identities follow directly from the definitions.
	If $n_x(r)\not=\length{\vec b}$ then both sides of the third identity are $0$.
	Otherwise, it is proved by induction on $r$.

	Let us treat the case of a monomial:
	write $r=(a_1,\dotsc,a_n)$ and $e=[s_1,\dotsc,s_n]$
	with $a_i\lhd s_i$ for $1\le i\le n$.
	Then
	\begin{align*}
		\nsubst ex{\bar t}
    &= \sum_{(I_1,\dotsc,I_n)\in\wcomp{n}{\length{\vec b}} }
		[\nsubst {s_1}{x}{\bar t_{I_1}},\dotsc,\nsubst {s_n}{x}{\bar t_{I_n}}]
		\\
    &= \sum_{(I_1,\dotsc,I_n) \in\wcompk{n}{k_1,\dotsc,k_n}{\length{\vec b}}}
		[\nsubst {s_1}{x}{\bar t_{I_1}},\dotsc,\nsubst {s_n}{x}{\bar t_{I_n}}]
	\end{align*}
	where we write $k_i=n_x(s_i)$ for $1\le i\le n$.

	If $I\subseteq \{1,\dotsc,\length{\vec b}\}$
	then we write $\vec b_I=(b_{i_1},\dotsc,b_{i_k})$
	where $i_1<\cdots<i_k$ enumerate $I$.
	By induction hypothesis we obtain
	\begin{align*}
		\nsubst ex{\bar t}
		&=
    \sum_{(I_1,\dotsc,I_n) \in\wcompk{n}{k_1,\dotsc,k_n}{\length{\vec b}}}
		\bigg[
			\sum_{\sigma_1\in\permutations{k_1}}
			\softify[\big]{\subst {a_1}x{\lact{\sigma_1}{\vec b_{I_1}}}}\,
			,\dots,
			\sum_{\sigma_n\in\permutations{k_n}}
			\softify[\big]{\subst {a_n}x{\lact{\sigma_n}{\vec b_{I_n}}}}
		\bigg ]
		\\
		&=
    \sum_{(I_1,\dotsc,I_n) \in\wcompk{n}{k_1,\dotsc,k_n}{\length{\vec b}}}
		\sum_{\sigma_1\in\permutations{k_1}}
		\cdots
		\sum_{\sigma_n\in\permutations{k_n}}
		\softify[\big]
		{
			\subst[\big] {r}x{\lact{\sigma_1}{\vec b_{I_1}}\cons\cdots\cons \lact{\sigma_n}{\vec b_{I_n}}}
		}
	\end{align*}
	and we conclude, observing that
  $\lact {\sigma_1}{\vec b_{I_1}}\cons\cdots\cons \lact {\sigma_n}{\vec b_{I_n}}
  =\lact{(\sigma_1\otimes\cdots\otimes\sigma_n)\gamma_{I_1,\dotsc,I_n} }{\vec b}$,
  hence the families
	\[
		\big( \lact {\sigma_1}{\vec b_{I_1}}\cons\cdots\cons \lact {\sigma_n}{\vec b_{I_n}} \big)_{\substack{
      (I_1,\dotsc,I_n)\in \wcompk{n}{k_1,\dotsc,k_n}{\length{\vec b}}
			,\
			(\sigma_1,\dotsc,\sigma_n)\in\permutations{k_1}\times \cdots\times \permutations{k_n}
		}}
	\]
	and
	$\big( \lact{\sigma}{\vec b} \big)_{\sigma\in\permutations{\length{\vec b}}}$
  coincide up to reindexing \emph{via} the bijection of Lemma~\ref{lem:permComp}.
\end{proof}
Informally, everything thus works out as if
$[s_1,\dotsc,s_n]=\sum_{\sigma\in\permutations{n}}(s_1,\dotsc,s_n)$,
which is to be related with the $\frac{1}{n!}$ coefficient in the Taylor expansion,
cancelling out the cardinality of $\permutations{n}$.
Forgetting about coefficients, we obtain:
\begin{cor}\label{cor:lhdSubst}
	If $ r \lhd e $ and $ \vec{b} \lhd \bar{t}$ with $n_x(e)=\length{\bar t}$,
	then  $ \supp(\partial_{x} e \cdot \bar{t})
	= \{ \softify[\big]{\subst rx{\lact \sigma{\vec b}}} \mid \sigma\in\permutations{\length{\vec b}}\}$.
\end{cor}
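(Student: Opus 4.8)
The plan is to read the corollary off the quantitative identity of Lemma~\ref{lem:softifySubst} by passing to supports. First I would check that the hypotheses match those of that lemma and, in particular, that the rigid substitutions involved are all defined. Since $r\lhd e$ and $\vec b\lhd\bar t$, the first part of Lemma~\ref{lem:softifySubst} gives $n_x(r)=n_x(e)$ and $\length{\vec b}=\length{\bar t}$; combined with the assumption $n_x(e)=\length{\bar t}$ this yields $n_x(r)=\length{\vec b}$. As the action of any $\sigma\in\permutations{\length{\vec b}}$ preserves the length of $\vec b$, we have $\length{\lact\sigma{\vec b}}=n_x(r)$ for every such $\sigma$, so each rigid substitution $\subst rx{\lact\sigma{\vec b}}$ is a genuine rigid expression rather than $0$; hence $\softify{\subst rx{\lact\sigma{\vec b}}}$ is a single resource expression in $\resExprs$.

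With this in place, Lemma~\ref{lem:softifySubst} supplies the equality in $\resExprSums$
\[
  \nsubst ex{\bar t}=\sum_{\sigma\in\permutations{\length{\vec b}}}\softify[\big]{\subst rx{\lact \sigma{\vec b}}}\quad,
\]
and I would conclude simply by taking the support of both sides. The only thing needed is the elementary remark that, in $\sums{\resExprs}$, the support of a finite sum of single expressions is the set of those expressions: for a family $(e_\sigma)_\sigma$ with each $e_\sigma\in\resExprs$, the coefficient of $e'$ in $\sum_\sigma e_\sigma$ is the number of indices $\sigma$ with $e_\sigma=e'$, which is nonzero exactly when $e'=e_\sigma$ for some $\sigma$ (there is no cancellation, since all coefficients lie in $\N$). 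Applying this to $e_\sigma=\softify{\subst rx{\lact\sigma{\vec b}}}$ gives
\[
  \supp\Big(\sum_{\sigma\in\permutations{\length{\vec b}}}\softify[\big]{\subst rx{\lact \sigma{\vec b}}}\Big)
  =\{\softify[\big]{\subst rx{\lact \sigma{\vec b}}}\mid \sigma\in\permutations{\length{\vec b}}\}\quad,
\]
which is exactly the right-hand side of the corollary.

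I expect no genuine obstacle here: the whole content of the statement is already carried by Lemma~\ref{lem:softifySubst}, and passing to supports merely forgets the multiplicities recorded by the coefficients. The one point deserving a moment's attention is the verification that each summand on the right is a single resource expression, so that the support coincides literally with the indexed set of summands rather than with a union of supports; this is precisely what the dimension condition $n_x(r)=\length{\vec b}$ from the first paragraph guarantees.
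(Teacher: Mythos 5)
Your proof is correct and matches the paper's (implicit) argument: the corollary is obtained exactly by taking supports in the identity of Lemma~\ref{lem:softifySubst}, the paper signalling this with the phrase ``Forgetting about coefficients, we obtain''. Your additional checks --- that the hypothesis $n_x(e)=\length{\bar t}$ forces $n_x(r)=\length{\vec b}$ so every summand is a genuine expression, and that no cancellation occurs since coefficients lie in $\N$ --- are exactly the points the paper leaves tacit.
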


Conversely, any rigid representative of a symmetric substitution is
obtained as a rigid substitution:
\begin{lem}\label{lem:lhdDeSubst}
	If $r'\lhd e'\in \supp(\partial_{x} e \cdot \bar{t})$
	then $n_x(e)=\length{\bar t}$ and
	there exist $ r \lhd e $ and $ \vec{b} \lhd \bar{t}$
	such that $r'=\subst rx{\vec b}$.
\end{lem}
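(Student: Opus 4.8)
The plan is to prove the statement by induction on the structure of $e$, reading off the shape of $e'$ (and hence of any rigid representative $r'\lhd e'$) from the inductive definition of $\nsubst ex{\bar t}$. The identity $n_x(e)=\length{\bar t}$ comes for free: since $e'\in\supp(\nsubst ex{\bar t})$, the sum $\nsubst ex{\bar t}$ is nonzero, and this is equivalent to $n_x(e)=\length{\bar t}$ by the criterion recalled just after the definition of symmetric substitution. We may thus write $n\eqdef n_x(e)=\length{\bar t}$ and assume this throughout the induction.

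For the variable cases, if $e=y$ with $y\neq x$ then $\bar t=[]$, $e'=y$ and $r'=y$, so we take $r=y$ and $\vec b=()$, giving $\subst yx{()}=y=r'$; if $e=x$ then $\bar t=[u]$, $e'=u$ and $r'\lhd u$, so we take $r=x$ and $\vec b=(r')$, giving $\subst xx{(r')}=r'$. The unary constructors are immediate: for $e=\lambda y.s$ the representative decomposes as $r'=\lambda y.a'$ with $a'\lhd s'$ and $s'\in\supp(\nsubst sx{\bar t})$, the induction hypothesis supplies $a\lhd s$ and $\vec b\lhd\bar t$ with $a'=\subst ax{\vec b}$, and we set $r=\lambda y.a$ (choosing the bound name so that $y\notin\{x\}\cup FV(\vec b)$). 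The marker cases $e=\linj s$ and $e=\rinj s$ are handled identically, peeling off the marker and reapplying it to the witness produced by the induction hypothesis.

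The two informative cases are application and monomials, where the witness $\vec b\lhd\bar t$ must be reassembled from pieces that the substitution routes to the immediate subexpressions. For $e=\rappl s{\bar w}$ (reserving $\bar t$ for the substituted monomial), membership in the support yields $(I_1,I_2)\in\wcomp 2n$ with $e'=\rappl{s'}{\bar w'}$, $s'\in\supp(\nsubst sx{\bar t_{I_1}})$ and $\bar w'\in\supp(\nsubst{\bar w}x{\bar t_{I_2}})$; writing $r'=\rappl{c'}{\vec d\,'}$ with $c'\lhd s'$ and $\vec d\,'\lhd\bar w'$, the induction hypothesis provides $c\lhd s$, $\vec b_0\lhd\bar t_{I_1}$, $\vec d\lhd\bar w$ and $\vec b_1\lhd\bar t_{I_2}$ with $c'=\subst cx{\vec b_0}$ and $\vec d\,'=\subst{\vec d}x{\vec b_1}$. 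Because $I_1,I_2$ partition $\{1,\dotsc,n\}$ we have $\bar t_{I_1}\cdot\bar t_{I_2}=\bar t$, hence $\vec b_0\cons\vec b_1\lhd\bar t$; taking $r=\rappl c{\vec d}$ and $\vec b=\vec b_0\cons\vec b_1$, the application clause of rigid substitution returns $\subst rx{\vec b}=\rappl{c'}{\vec d\,'}=r'$, its length side-conditions $\length{\vec b_0}=n_x(c)$ and $\length{\vec b_1}=n_x(\vec d)$ holding because $c'$ and $\vec d\,'$ are nonzero. The monomial case $e=[t_1,\dotsc,t_k]$ proceeds along the same lines but requires one reindexing: the support yields $(I_1,\dotsc,I_k)\in\wcomp kn$ and a presentation $e'=[s'_1,\dotsc,s'_k]$ with $s'_j\in\supp(\nsubst{t_j}x{\bar t_{I_j}})$, whereas the components of $r'=(a'_1,\dotsc,a'_k)$ enumerate this multiset in an arbitrary order. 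Picking a permutation $\rho$ with $a'_j\lhd s'_{\rho(j)}$, applying the induction hypothesis to each $t_{\rho(j)}$, and concatenating the resulting witnesses $\vec b_j\lhd\bar t_{I_{\rho(j)}}$ gives $r=(\tilde a_1,\dotsc,\tilde a_k)\lhd e$ and $\vec b=\vec b_1\cons\cdots\cons\vec b_k\lhd\bar t$ with $\subst rx{\vec b}=r'$.

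The main obstacle is exactly this reassembly. The inductive definition of $\nsubst ex{\bar t}$ splits $\bar t$ along a weak composition and distributes the blocks to the subexpressions, so the crux is to verify that concatenating the rigid witnesses $\vec b_j$ along the \emph{same} composition yields a genuine representative $\vec b\lhd\bar t$ (which uses nothing more than the fact that the $I_j$ partition $\{1,\dotsc,n\}$), and that the length conditions built into rigid substitution are automatically met. This is the converse of the routing analysis behind Lemma~\ref{lem:softifySubst}, and I expect it to go through by the same combinatorics; the only extra subtlety is the permutation $\rho$ of the monomial case, needed to align the arbitrary order of the components of $r'$ with the chosen decomposition of $e'$ and with the multiset $e=[t_1,\dotsc,t_k]$.
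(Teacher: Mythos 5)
Your proof is correct and follows essentially the same route as the paper's: induction on $e$, with the base cases read off directly, the unary constructors passed through, and the application case reassembling the rigid witness $\vec b$ by concatenating the pieces obtained from the induction hypothesis along the weak composition $(I_1,I_2)$. The only difference is that you spell out the reindexing permutation $\rho$ in the monomial case, which the paper dismisses as ``similar''; your treatment of it is sound.
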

\begin{proof}
	By induction on $e$.
If $e = x $ then $ \bar{t} = [t] $ for some $ t \in\resTerms$ and $e' = t $.
If $r'\lhd e'=t$ then we can set $r=x$ and $\vec b=(r')$.
If $e=y\not=x$ then $\bar t=[]$ and we can set $r=y$ and $\vec b=()$.
The abstraction and sum cases follow immediately from the induction hypothesis.

If $ e = \rappl {s} {\bar{v}} $, we write $\bar t=[t_1,\dotsc,t_n]$ and obtain
\[  \nsubst {e} {x}  {\bar{t}} = \sum_{(I_{1}, I_{2})\in\wcomp{2}{n}} \rappl { \nsubst  {s}{x} {\bar{t}_{I_{1}}} } {\nsubst  {\bar{v}} {x} {\bar{t}_{I_{2}} }} \quad.\]
Then $e'=\rappl {s' }{\bar v'}$ with $ s' \in \supp(\nsubst  {s}{x} {\bar{t}_{I_{1}}}) $
and $ \bar v' \in \supp(\nsubst  {\bar{v}} {x} {\bar{t}}_{I_{2}} ) $
for some $ (I_{1}, I_{2})\in\wcomp{2}{n}$.
It follows that $r'=\rappl{a}{\vec d}$ with $a\lhd s'$ and $\vec d\lhd \bar v'$.
By induction hypothesis, we obtain $c_1\lhd s$, $\vec{b}_{1} \lhd \bar{t}_{I_{1}} $,
$\vec c_2 \lhd \bar v$ and $ \vec{b}_{2} \lhd \bar{t}_{I_{2}} $
such that $a = \subst {c_1} {x} {\vec{b}_{1}} $ and $ \vec d = \subst {\vec c_2} {x} {\vec{b}_{2}} $.
Then we conclude by setting $r=\rappl{c_1}{\vec c_2}\lhd \rappl s{\bar v}=e$
and $\vec b = \vec{b}_{1} :: \vec{b}_{2} \lhd \bar{t}_{I_{1}} \cdot \bar{t}_{I_{2}}=\bar t $.

The case of monomials is similar.
\end{proof}

\subsection{Substitution for permutation expressions}

The key intermediate result for Step~\ref{step:coefNF} is the fact that if
$e\coh e$ and
$e'\in \supp(\nsubst ex{\bar t})$ then $(\nsubst ex{\bar t})_{e'}=\frac{m(e)m(\bar t)}{m(e')}$:
this will be established in Lemma~\ref{lem:coeffNsubst},
which concludes the present section.
With that goal in mind, and having characterized $m(e)$ as the cardinality of
the group $\G(r)$ for any $r\lhd e$, it becomes essential to study how the
automorphisms of $r'\lhd e'\in \supp(\nsubst ex{\bar t})$ are related
with those of some $r\lhd e$ and $\vec b\lhd \bar t$:
by Lemma~\ref{lem:lhdDeSubst}, we can choose $r$ and $\vec b$ such that
$r'=\subst{r}x{\vec b}$.
Then it seems natural to consider some form of substitution for
permutation expressions, following the structure of rigid substitution.

We define the \emph{substitution of permutation terms for a variable} as follows.
Given $\epsilon \in \G(r,r')$
and $ \vec{\beta} \in \vD(\vec{b},  \vec{b}')$
with $\length{\vec b} = n_{x}(r)$,
we construct $\epsilon[\vec{\beta} / x]$
by induction on $\epsilon$:
\begin{align*}
	(id_{x})[(\beta)/x]
	&\eqdef \beta
	\\
	(id_{y})[()/x]
	&\eqdef id_{y}
	\\
	(\lambda y.\alpha)[\vec{\beta}/x]
	&\eqdef \lambda y. \alpha[\vec{\beta}/x]
	\\
 	(\linj \alpha)[\vec{\beta}/x]
	&\eqdef \linj{\alpha[\vec{\beta}/x]}
	\\
 	(\rinj \alpha)[\vec{\beta}/x]
	&\eqdef \rinj{\alpha[\vec{\beta}/x]}
	\\
	(\rappl{\gamma}{\tilde\delta})[\vec\beta_1\cons\vec\beta_2/x]
	&\eqdef \rappl{\gamma[\vec\beta_1/x]}{\tilde\delta[\vec\beta_2/x]}
	\\
	(\sigma,(\alpha_1,\dotsc,\alpha_n))[\vec \beta_1\cons\cdots\cons\vec \beta_n/x]
	&\eqdef (\sigma,(\alpha_1[\vec\beta_1/x],\dotsc,\alpha_n[\vec\beta_n/x]))
\end{align*}
where we assume that $y\not=x$, $z\notin \{x\}\cup FV(\vec \beta)$,
$\length{\vec\beta_1}=n_x(\gamma)$,
$\length{\vec\beta_2}=n_x(\delta)$,
and $\length{\vec\beta_i}=n_x(\alpha_i)$ for $1\le i\le n$.

If $\epsilon \in \G(r,r')$ and $ \vec{\beta} \in \vD(\vec{b},  \vec{b}')$,
the source of $\subst\epsilon x{\vec\beta}$
is obviously $\subst rx{\vec b}$ but describing
its target is more intricate:
in general, $\epsilon[\vec\beta/x]\not\in\G(r[\vec b/x],r'[\vec b'/x])$.
\begin{exa}
Consider the rigid monomials $\vec a= (x,x)$ and $ \vec{b}=(\langle z \rangle (), \langle z \rangle (z))$.
Writing $\tau$ for the unique transposition of $\permutations{2}$,
we obtain $ \alpha= (\tau, id_{x}, id_{x}) \in \Dm(\vec a)$.
Let $\vec{\beta}
= (id_{\rappl{z}{()}}, id_{\rappl{z}{(z)}}) \in \vD(\vec{b})$.
Then $\alpha[\vec{\beta}/x] = (\tau, id_{\rappl{z}{()}}, id_{\rappl{z}{(z)}})$,
hence $\alpha[\vec{\beta}/x]:a[\vec b/x]\cong( \langle z \rangle (z), \langle z \rangle ())\not=a[\vec b/x]$.
\end{exa}

To describe the image of $r[\vec b/x]$ through $\epsilon[\vec\beta/x]$,
we first introduce another operation on permutations.
If $ \sigma \in \permutations{n}$ and
$ \tau_{i} \in \permutations{k_{i}}$
for $1\le i\le n$,
we define the \definitive{multiplexing}
$ \sigma \cdot ( \tau_{1},\dots, \tau_{n}) \in \permutations{k_{1} + \cdots + k_{n}}$
by:
\[
	(\sigma \cdot ( \tau_{1},\dots,  \tau_{n}))\bigg (l + \sum_{j=1}^{i-1} k_j\bigg)
	\eqdef \tau_{i}(l) + \sum_{j=1}^{\sigma(i)-1} k_{\sigma^{-1}(j)}
\]
for $1\le i\le n$ and $1\le l\le k_i$.
Multiplexing may be described in the category $\catP$ of natural numbers and permutations,
which is symmetric strict monoidal, as follows:
$ \sigma \cdot ( \tau_{1},\dots, \tau_{n})
= \sigma_{k_1,\dotsc,k_n} \circ (\tau_{1}\otimes\cdots\otimes\tau_{n}) $
where $\sigma_{k_1,\dotsc,k_n}$ is the canonical symmetry map
$k_1+\cdots+k_n\to k_{\sigma^{-1}(1)}+\cdots+k_{\sigma^{-1}(n)}=k_1+\cdots+k_n$
associated with the left action of $\sigma$ on $n$-ary tensor products in $\catP$.
This decomposition of multiplexing is depicted in Figure~\ref{fig:multiplexing}.

\begin{figure}[t]
	\begin{center}
	\begin{tikzpicture}[
			permutation/.style={draw=black,rectangle,minimum width=2cm,minimum height=1 cm,outer xsep=1em},
			port/.style={circle,fill=black,minimum size=1 mm},
	]
	\node[permutation](tau1) at (-3,0) {$\tau_1$};
	\node[above=.5cm of tau1,inner xsep=-.5ex] (in1) {$\overbrace{1+\cdots+1}^{k_1}$};
	\node[left=.5em of in1.south west,anchor=south east]{(};
	\node[right=.5em of in1.south east,anchor=south west] (plusl) {$)\quad+$};
	\draw(in1.south west)--(in1.south west |- tau1.north) node[midway] (in1l) {};
	\draw(in1.south east)--(in1.south east |- tau1.north) node[midway] (in1r) {};
	\draw[dotted](in1l)--(in1r);

	\node[permutation](taun) at (3,0) {$\tau_n$};
	\node[above=.5cm of taun,inner xsep=-.5ex] (inn) {$\overbrace{1+\cdots+1}^{k_n}$};
	\node[left=.5em of inn.south west,anchor=south east] (plusr) {$+\quad($};
	\node[right=.5em of inn.south east,anchor=south west]{)};
	\draw(inn.south west)--(inn.south west |- taun.north) node[midway] (innl) {};
	\draw(inn.south east)--(inn.south east |- taun.north) node[midway] (innr) {};
	\draw[dotted](innl)--(innr);

	\draw[dotted,very thick](tau1)--(taun);
	\path (plusl)--(plusr) node[midway] {$\cdots$};

	\node[permutation,very thick,minimum width=8cm] (sigma) at (0,-1.5) {$\sigma_{k_1,\dotsc,k_n}$};

	\coordinate (out1l1) at (in1.west |- tau1.south);
	\draw (out1l1)--(out1l1 |- sigma.north) node [midway] (out1l) {};
	\coordinate (out1r1) at (in1.east |- tau1.south);
	\draw (out1r1)--(out1r1 |- sigma.north) node [midway] (out1r) {};
	\draw[dotted](out1l)--(out1r);

	\coordinate (outnl1) at (inn.west |- taun.south);
	\draw (outnl1)--(outnl1 |- sigma.north) node [midway] (outnl) {};
	\coordinate (outnr1) at (inn.east |- taun.south);
	\draw (outnr1)--(outnr1 |- sigma.north) node [midway] (outnr) {};
	\draw[dotted](outnl)--(outnr);

	\coordinate (sig1) at (tau1 |- sigma.south);
	\node[below=.5cm of sig1,inner xsep=-.5ex] (dest1) {$\underbrace{1+\cdots+1}_{k_{\sigma^{-1}(1)}}$};
	\node[left=.5em of dest1.north west,anchor=north east]{(};
	\node[right=.5em of dest1.north east,anchor=north west] (plusl) {$)\quad+$};

	\coordinate (sign) at (taun |- sigma.south);
	\node[below=.5cm of sign,inner xsep=-.5ex] (destn) {$\underbrace{1+\cdots+1}_{k_{\sigma^{-1}(n)}}$};
	\node[left=.5em of destn.north west,anchor=north east] (plusr) {$+\quad($};
	\node[right=.5em of destn.north east,anchor=north west] {)};
	\path (plusl)--(plusr) node[midway] {$\cdots$};

	\coordinate (sig1l1) at (out1l1 |- sigma.south);
	\coordinate (sig1r1) at (out1r1 |- sigma.south);
	\coordinate (signl1) at (outnl1 |- sigma.south);
	\coordinate (signr1) at (outnr1 |- sigma.south);
	\draw (sig1l1)--(sig1l1 |- dest1.north) node [midway] (sig1l) {};
	\draw (sig1r1)--(sig1r1 |- dest1.north) node [midway] (sig1r) {};
	\draw (signl1)--(signl1 |- destn.north) node [midway] (signl) {};
	\draw (signr1)--(signr1 |- destn.north) node [midway] (signr) {};
	\draw[dotted](sig1l)--(sig1r);
	\draw[dotted](signl)--(signr);

	\end{tikzpicture}
	\end{center}

\caption{Graphical representation of $\sigma\cdot(\tau_1,\dotsc,\tau_n)$%
\label{fig:multiplexing}}

\end{figure}

Multiplexed permutations compose as follows:
\begin{lem}\label{lem:compMultiplex}
If $ \sigma,\sigma' \in \permutations{n}$,
$ \tau_{i} \in \permutations{k_{i}}$
and
$ \tau'_{i} \in \permutations{k_{\sigma^{-1}(i)}}$
for $1\le i \le n$,
then
\[
	\big(\sigma' \cdot ( \tau'_{1},\dots, \tau'_{n}) \big)
	\big(\sigma \cdot ( \tau_{1},\dots, \tau_{n}) \big)
  =
	(\sigma'\sigma)\cdot(\tau'_{\sigma(1)}\tau_1,\dotsc, \tau'_{\sigma(n)}\tau_n)
\]
and
\[
	\big(\sigma \cdot ( \tau_{1},\dots, \tau_{n}) \big)^{-1}
  =
	\sigma^{-1}\cdot(\tau_{\sigma^{-1}(1)}^{-1},\dotsc, \tau_{\sigma^{-1}(n)}^{-1})
	\quad.
\]
\end{lem}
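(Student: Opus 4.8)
The plan is to prove the composition identity first, by a direct computation on indices using the defining equation of multiplexing, and then to deduce the inverse formula as an immediate consequence. An alternative would be to argue entirely inside the symmetric strict monoidal category $\catP$, using the decomposition $\sigma\cdot(\tau_1,\dotsc,\tau_n)=\sigma_{k_1,\dotsc,k_n}\circ(\tau_1\otimes\cdots\otimes\tau_n)$ together with the naturality of the symmetry and the functoriality of $\sigma\mapsto\sigma_{k_1,\dotsc,k_n}$; I mention this route because it explains \emph{why} the identities hold, but the elementary computation is more self-contained and I would make it the official proof.

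For the composition identity, I would evaluate both sides on an arbitrary index of the canonical form $l+\sum_{j=1}^{i-1}k_j$ with $1\le i\le n$ and $1\le l\le k_i$, since every index of $\{1,\dotsc,k_1+\cdots+k_n\}$ is uniquely of this shape. First I would apply $\sigma\cdot(\tau_1,\dotsc,\tau_n)$, obtaining $\tau_i(l)+\sum_{j=1}^{\sigma(i)-1}k_{\sigma^{-1}(j)}$ by definition. The crucial observation is that, writing $k'_j\eqdef k_{\sigma^{-1}(j)}$ for the block sizes seen by the outer multiplexing, this output is exactly the canonical index $l'+\sum_{j=1}^{i'-1}k'_j$ with $i'=\sigma(i)$ and $l'=\tau_i(l)$ (and indeed $1\le l'\le k_i=k'_{i'}$). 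Applying $\sigma'\cdot(\tau'_1,\dotsc,\tau'_n)$ to this, the defining equation yields $\tau'_{\sigma(i)}(\tau_i(l))+\sum_{j=1}^{\sigma'(\sigma(i))-1}k'_{\sigma'^{-1}(j)}$. Simplifying the inner composite and rewriting $k'_{\sigma'^{-1}(j)}=k_{\sigma^{-1}(\sigma'^{-1}(j))}=k_{(\sigma'\sigma)^{-1}(j)}$, this becomes $(\tau'_{\sigma(i)}\tau_i)(l)+\sum_{j=1}^{(\sigma'\sigma)(i)-1}k_{(\sigma'\sigma)^{-1}(j)}$, which by definition is the value of $(\sigma'\sigma)\cdot(\tau'_{\sigma(1)}\tau_1,\dotsc,\tau'_{\sigma(n)}\tau_n)$ at $l+\sum_{j=1}^{i-1}k_j$. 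Since the two permutations agree on every index, they coincide.

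The inverse formula then follows with no further index manipulation: I would instantiate the composition identity with $\sigma'\eqdef\sigma^{-1}$ and $\tau'_i\eqdef\tau^{-1}_{\sigma^{-1}(i)}$ (these satisfy the typing constraint $\tau'_i\in\permutations{k_{\sigma^{-1}(i)}}$), so that the right-hand side becomes $(\sigma^{-1}\sigma)\cdot(\tau'_{\sigma(1)}\tau_1,\dotsc,\tau'_{\sigma(n)}\tau_n)$; here $\tau'_{\sigma(i)}\tau_i=\tau^{-1}_i\tau_i=\mathit{id}$ and $\sigma^{-1}\sigma=\mathit{id}$, and a one-line check on the defining equation shows that $\mathit{id}\cdot(\mathit{id},\dotsc,\mathit{id})$ is the identity permutation. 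Hence the composite equals the identity, which identifies $\sigma^{-1}\cdot(\tau^{-1}_{\sigma^{-1}(1)},\dotsc,\tau^{-1}_{\sigma^{-1}(n)})$ as the two-sided inverse of $\sigma\cdot(\tau_1,\dotsc,\tau_n)$.

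The only delicate point, and the one I would be most careful about, is the bookkeeping of block sizes: the outer multiplexing does not act on blocks of sizes $k_1,\dotsc,k_n$ but on blocks of sizes $k_{\sigma^{-1}(1)},\dotsc,k_{\sigma^{-1}(n)}$, because the inner multiplexing has already reshuffled them. Getting the identity $k'_{\sigma'^{-1}(j)}=k_{(\sigma'\sigma)^{-1}(j)}$ and the reindexing $\tau'_{i'}=\tau'_{\sigma(i)}$ right is where all the content lies; everything else is routine substitution into the defining equation.
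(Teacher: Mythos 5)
Your proof is correct and follows essentially the same route as the paper's: a direct evaluation of both sides on indices in canonical form $l+\sum_{j=1}^{i-1}k_j$, with the same key bookkeeping step $k'_{\sigma'^{-1}(j)}=k_{(\sigma'\sigma)^{-1}(j)}$ where $k'_j=k_{\sigma^{-1}(j)}$. You merely spell out the deduction of the inverse formula (which the paper dismisses as following ``directly'') and explicitly flag the categorical alternative that the paper also alludes to.
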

\begin{proof}
	We detail the proof only in case the result is not obvious to the reader from
	the above categorical presentation of multiplexing.
	Let $\alpha=\sigma \cdot ( \tau_{1},\dots, \tau_{n})$
	and $\alpha'=\sigma' \cdot ( \tau'_{1},\dots, \tau'_{n})$.
	For $1\le i\le n$ and $1\le l\le k_i$:
\begin{align*}
	\alpha'\bigg (\alpha\big (
	\sum_{j = 1}^{i-1} k_j + l
	\big )
	\bigg )
	&=
	\alpha'\bigg(
	\sum_{j = 1}^{\sigma(i)-1} k_{\sigma^{-1}(j)}
	+ \tau_i(l)
	\bigg )
	\\
	&=
	\sum_{j = 1}^{\sigma'(\sigma(i))-1} k'_{{\sigma'}^{-1}(j)}
	+ \tau'_{\sigma(i)}(\tau_i(l))
	&&\text{(writing $k'_i=k_{\sigma^{-1}(i)}$)}
	\\
	&=
	\sum_{j = 1}^{(\sigma'\sigma)(i)-1} k_{(\sigma'\sigma)^{-1}(j)}
	+ (\tau'_{\sigma(i)}\tau_i)(l)
\end{align*}
which establishes the first identity.
The second identity follows directly.
\end{proof}

The action of multiplexed permutations on sequences is as follows:
\begin{lem}\label{lem:actionMultiplex}
Let $\vec{b}, \vec{b}_{1}, \dots, \vec{b}_{n} \in D^{!}$, $ \sigma \in \permutations{n}$ and $ \tau_{i} \in \permutations{\length {\vec{b}_{i}}} $ for all $i \in \{1, \dots, n \}$. If $\vec{b} = \vec{b}_{1} :: \cdots :: \vec{b}_{n}$  then
$ \lact{\sigma \cdot ( \tau_{1},\dots, \tau_{n})}{\vec{b}}
= \lact{\tau_{\sigma^{-1}(1)}}{\vec{b}_{\sigma^{-1}(1)}}
:: \cdots :: \lact{\tau_{\sigma^{-1}(n)}}{\vec{b}_{\sigma^{-1}(n)}}$.
\end{lem}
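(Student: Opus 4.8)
The plan is to verify the identity componentwise, testing it at cleverly chosen positions so that the inverses cancel. Write $k_i\eqdef\length{\vec b_i}$, $k\eqdef k_1+\cdots+k_n$, and $\rho\eqdef\sigma\cdot(\tau_1,\dotsc,\tau_n)\in\permutations{k}$. Both sides are rigid monomials of length $k$, so it suffices to show that their entries agree at every position. I would exploit that the left action is determined by $(\lact{\rho}{\vec b})_{\rho(m)}=(\vec b)_m$; since $\rho$ is a bijection of $\{1,\dotsc,k\}$, it is enough to compare the two sides at the positions $\rho(m)$ for $1\le m\le k$.

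First I would fix $m$ and write it uniquely as $m=l+\sum_{j=1}^{i-1}k_j$ with $1\le i\le n$ and $1\le l\le k_i$; this decomposition identifies the $m$-th entry of $\vec b=\vec b_1\cons\cdots\cons\vec b_n$ with the $l$-th entry of $\vec b_i$, so $(\vec b)_m=(\vec b_i)_l$. By the definition of multiplexing, $\rho(m)=\tau_i(l)+\sum_{j=1}^{\sigma(i)-1}k_{\sigma^{-1}(j)}$. Reading this off as a position in the concatenation on the right-hand side, whose $p$-th block $\lact{\tau_{\sigma^{-1}(p)}}{\vec b_{\sigma^{-1}(p)}}$ has length $k_{\sigma^{-1}(p)}$, shows that $\rho(m)$ is precisely the $\tau_i(l)$-th entry of the $\sigma(i)$-th block.

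Finally I would evaluate that entry: the $\sigma(i)$-th block is $\lact{\tau_{\sigma^{-1}(\sigma(i))}}{\vec b_{\sigma^{-1}(\sigma(i))}}=\lact{\tau_i}{\vec b_i}$, whose $\tau_i(l)$-th entry is $(\vec b_i)_{\tau_i^{-1}(\tau_i(l))}=(\vec b_i)_l$. Hence the right-hand side at position $\rho(m)$ equals $(\vec b_i)_l=(\vec b)_m=(\lact{\rho}{\vec b})_{\rho(m)}$, which is the claimed equality. A more conceptual alternative would instead invoke the categorical decomposition $\rho=\sigma_{k_1,\dotsc,k_n}\circ(\tau_1\otimes\cdots\otimes\tau_n)$ from Figure~\ref{fig:multiplexing}: the tensor factor acts blockwise, sending $\vec b_1\cons\cdots\cons\vec b_n$ to $\lact{\tau_1}{\vec b_1}\cons\cdots\cons\lact{\tau_n}{\vec b_n}$, after which the symmetry permutes the blocks according to $\sigma$.

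The only genuine difficulty is the bookkeeping: one must reconcile the action convention $(\lact{\rho}{\vec b})_{\rho(m)}=(\vec b)_m$, whose defining formula involves $\rho^{-1}$, with the forward, block-by-block description of multiplexing, and keep the occurrences of $\sigma$, $\sigma^{-1}$, $\tau_i$ and $\tau_i^{-1}$ straight. Evaluating at the positions $\rho(m)$ rather than at arbitrary positions is exactly what makes the inverses cancel cleanly and avoids any off-by-one slip in the partial sums $\sum_{j<i}k_j$.
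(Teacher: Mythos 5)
Your proof is correct and follows essentially the same route as the paper's: a direct componentwise verification of the identity using the definition of the left action and the defining formula for $\rho=\sigma\cdot(\tau_1,\dotsc,\tau_n)$. The only difference is one of parametrization --- the paper evaluates the left-hand side at an arbitrary position $p$ and computes $\rho^{-1}(p)$ via the inversion formula of Lemma~\ref{lem:compMultiplex}, whereas by testing both sides at the positions $\rho(m)$ you make the inverses cancel and need only the forward definition of multiplexing, so Lemma~\ref{lem:compMultiplex} is never invoked.
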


\begin{proof}
	Again, we detail the proof only in case the result is not obvious from
	the categorical presentation.
	Set $\length {\vec{b}_{i}} = k_{i} $, so that $\length {\vec b} = \sum_{i = 1}^{n} k_{i}$.
	Write $\vec b'= \lact{\sigma \cdot ( \tau_{1},\dots, \tau_{n})}{\vec{b}}$.
	For $1\le p\le \length{\vec b'}=\length{\vec b}=\sum_{j=1}^{n} k_{\sigma^{-1}(j)}$,
	we can write $p =  \sum_{j=1}^{i-1} k_{\sigma^{-1}(j)} + l$
	with $ i \in \{ 1,\dots, n \} $ and $ l \in \{ 1,\dots, k_{\sigma^{-1}(i)} \}$.
	Then, by Lemma~\ref{lem:compMultiplex}, $(\sigma \cdot ( \tau_{1},\dots, \tau_{n}))^{-1}(p)= \sum_{j=1}^{\sigma^{-1}(i)-1} k_j+\tau_{\sigma^{-1}(i)}^{-1}(l)$
	and $b'_{p}=b_{(\sigma \cdot ( \tau_{1},\dots, \tau_{n}))^{-1}(p)}=(\vec b_{\sigma^{-1}(i)})_{\tau_{\sigma^{-1}(i)}^{-1}(l)}=(\lact{\tau_{\sigma^{-1}(i)}}{\vec b_{\sigma^{-1}(i)}})_l$.
\end{proof}

We can now define the \definitive{restriction} $\restr{\epsilon}{x} \in \permutations{n_{x}(r)}$
of $\epsilon\in \G(r,r')$ to the occurrences of $x$ in $r$, by induction on $\epsilon$:
\begin{align*}
  \restr{id_{x}}{x}
	&\eqdef id_{\{1\}}
	\\
  \restr{id_{y}}{x}
	&\eqdef id_\emptyset
	\\
	\left.
	\begin{array}{r}
    \restr{(\lambda y. \alpha)}{x}\\
    \restr{(\linj \alpha)}{x}\\
    \restr{(\rinj \alpha)}{x}
	\end{array}
	\right\}
  &\eqdef \restr{\alpha}{x}
	\\
  \restr{(\rappl{\gamma}{\tilde\delta})}{x}
  &\eqdef \restr{\gamma}{x} \otimes \restr{\tilde\delta}{x}
	\\
  \restr{(\sigma, \alpha_1,\dotsc,\alpha_n)}{x}
  &\eqdef \sigma\cdot (\restr{\alpha_1}{x}, \dots, \restr{\alpha_n}{x})
\end{align*}
where we assume $x\not=y$.
Intuitively $\restr{\epsilon}{x}$ is the permutation
induced by $\epsilon$ on the occurrences $x_1,\dotsc,x_{n_x(r)}$ of $x$ in $r$,
taken from left to right.

We recall that $\catP$ denotes the category of finite cardinals and permutations.
For any variable $x$,
we define an application $F_{x}$ from $\G $ to $\catP$ as follows:
$F_{x}(r)\eqdef n_{x}(r)$ and $F_{x}(\alpha)\eqdef \restr{\alpha}{x}$.

\begin{lem}%
	\label{lem:functor}
$F_{x}$ is a functor from $\G$ to $\catP$.
\end{lem}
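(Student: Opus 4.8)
The plan is to verify the two functoriality axioms directly by induction on the structure of permutation expressions, after first checking that $F_x$ is well-typed on morphisms. Since the morphisms of $\catP$ are permutations, and $\restr{\epsilon}{x}$ lives in $\permutations{n_x(r)}$ by construction for $\epsilon\in\G(r,r')$, saying that $F_x(\epsilon)$ is a morphism $F_x(r)\to F_x(r')$ amounts to the equality $n_x(r)=n_x(r')$. This is immediate from Lemma~\ref{lem:conglhd}: from $r\cong r'$ we get $\softify{r}=\softify{r'}$, and softening preserves the number of free occurrences of $x$. Hence $\restr{\epsilon}{x}$ is a genuine endomorphism of the object $n_x(r)=n_x(r')$.

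For preservation of identities, I would prove $\restr{id_r}{x}=id_{n_x(r)}$ by induction on $r$. The base cases $id_x$ and $id_y$ hold by definition, and the cases $\lambda y.\alpha$, $\linj\alpha$, $\rinj\alpha$ pass the restriction through unchanged and follow from the induction hypothesis. For application, $\restr{id_{\rappl c{\vec d}}}{x}=\restr{id_c}{x}\otimes\restr{id_{\vec d}}{x}$ is an identity because the concatenation of identities is an identity. For a monomial $(a_1,\dotsc,a_n)$, whose identity carries the identity permutation, the induction hypothesis gives $\restr{id_r}{x}=id_n\cdot(id_{n_x(a_1)},\dotsc,id_{n_x(a_n)})$, and multiplexing the identity permutation with identities yields the identity --- either by unfolding the defining formula, or by noting that the symmetry $\sigma_{k_1,\dotsc,k_n}$ is trivial when $\sigma=id$.

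The core of the argument is preservation of composition, $\restr{(\epsilon'\epsilon)}{x}=\restr{\epsilon'}{x}\,\restr{\epsilon}{x}$, also by induction. The injection and abstraction cases are routine, and the application case reduces to functoriality of $\otimes$ on $\catP$ (the interchange law). The only case that genuinely uses the combinatorics developed above is the monomial case, which I expect to be the main obstacle. There $\epsilon=(\sigma,\alpha_1,\dotsc,\alpha_n)$ and $\epsilon'=(\sigma',\alpha'_1,\dotsc,\alpha'_n)$, and by the definition of composition in $\G$ we have $\epsilon'\epsilon=(\sigma'\sigma,\alpha'_{\sigma(1)}\alpha_1,\dotsc,\alpha'_{\sigma(n)}\alpha_n)$. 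Unfolding the restriction and applying the induction hypothesis to each factor, I would obtain
\[
	\restr{(\epsilon'\epsilon)}{x}
	=(\sigma'\sigma)\cdot\big(\restr{\alpha'_{\sigma(1)}}{x}\restr{\alpha_1}{x},\dotsc,\restr{\alpha'_{\sigma(n)}}{x}\restr{\alpha_n}{x}\big).
\]
On the other side, $\restr{\epsilon'}{x}\,\restr{\epsilon}{x}$ is a composite of two multiplexed permutations, which Lemma~\ref{lem:compMultiplex} rewrites to exactly this expression.

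The one delicate point is matching the domain hypotheses of Lemma~\ref{lem:compMultiplex}. Writing $k_i=n_x(a_i)$ for the source $a_i$ of $\alpha_i$, the typing rule for permutation monomials gives $\alpha_i:a_i\cong a'_{\sigma(i)}$, so $\restr{\alpha_i}{x}\in\permutations{k_i}$; and using the well-typedness established at the outset, $\restr{\alpha'_i}{x}\in\permutations{n_x(a'_i)}=\permutations{k_{\sigma^{-1}(i)}}$, which are precisely the cardinalities required to apply the lemma. With these identifications the two sides coincide, completing the induction.
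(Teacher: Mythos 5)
Your proof is correct and follows essentially the same route as the paper's: induction on permutation expressions, with the monomial case of composition preservation handled by applying the inductive hypothesis componentwise and then Lemma~\ref{lem:compMultiplex}, including the same check that the domains $n_x(a'_i)=k_{\sigma^{-1}(i)}$ line up. The extra details you supply on well-typedness (via Lemma~\ref{lem:conglhd}) and on preservation of identities are points the paper leaves implicit, but they do not change the substance of the argument.
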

\begin{proof}
By induction on permutation expressions. We focus on the composition condition for the list case.
Let $\tilde\alpha : \vec{a} = (a_{1}, \dots, a_{n}) \cong \vec{b} = ( b_{1}, \dots, b_{n})$
and $\tilde\beta : \vec{b} \cong \vec{c} = (c_{1},\dots, c_{n})$.
By definition $\tilde\alpha = ( \sigma, \alpha_{1}, \dots, \alpha_{n})$ and
$\tilde\beta = ( \tau, \beta_{1}, \dots, \beta_{n})$,
for some $\sigma, \tau $ in $ \permutations{n}$ and with
$ \alpha_{i} : a_{i} \cong b_{\sigma(i)} $ and $ \beta_{i} : b_{i} \cong c_{\tau(i)}$.
The composition $\tilde\beta\tilde\alpha $ is then defined as the isomorphism
$ (\tau \sigma,  \beta_{\sigma(1)} \alpha_{1}, \dots \beta_{\sigma(n)} \alpha_{n})$.

We have to prove that
$\restr{(\tilde\beta\tilde\alpha)}{x}=\restr{\tilde\beta}{x}\restr{\tilde\alpha}{x}$,
that is
\[
	(\tau\sigma)\cdot\big (
    \restr{(\beta_{\sigma(1)} \alpha_1)}{x}
		,\dotsc,
    \restr{(\beta_{\sigma(n)} \alpha_n)}{x}
	\big)
  =(\tau\cdot(\restr{\beta_1}{x}, \dots, \restr{\beta_n}{x}))
  (\sigma\cdot(\restr{\alpha_1}{x}, \dots, \restr{\alpha_n}{x}))
\]
which is a direct consequence of the inductive hypothesis,
$\restr{(\beta_{\sigma(i)}\alpha_i)}{x} = \restr{\beta_{\sigma(i)}}{x}\restr{\alpha_i}{x} $
for $1\le i\le n$,
\emph{via} Lemma~\ref{lem:compMultiplex}.
\end{proof}
In particular, the restriction of $F_{x}$ to the automorphism group of some
rigid expression $r$ is a group homomorphism from $\G(r)$ to $\permutations{n_x(r)}$:
its image $\restr{\G(r)}{x}$ is thus a subgroup of $\permutations{n_x(r)}$.
This homomorphism will play a crucial rôle in Section~\ref{sect:rigid:cohSubst}.

This operator allows us to
describe the image of $\epsilon[\vec\beta/x]$ as follows:
\begin{lem}\label{lem:substPerm}
	If $\epsilon:r\cong r'$ and $\vec\beta\in\vD(\vec b,\vec b')$
	with $\length{\vec\beta}=n_x(r)$ then
	$\epsilon[\vec\beta/x]:r[\vec{b}/x]\cong r'[\lact{\restr{\epsilon}{x}}{\vec{b}'}/x]$.
\end{lem}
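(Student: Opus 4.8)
The plan is to proceed by induction on the structure of the permutation expression $\epsilon$, equivalently on the rigid expression $r$, following in lockstep the inductive definitions of $\subst{\epsilon}{x}{\vec\beta}$ and of the restriction $\restr{\epsilon}{x}$. Since the source of $\subst{\epsilon}{x}{\vec\beta}$ is $\subst{r}{x}{\vec b}$ by construction, the entire content of the statement lies in identifying its \emph{target}: in each case it will suffice to check that the target produced by the relevant rule of Figure~\ref{fig:cong} coincides with $\subst{r'}{x}{\lact{\restr{\epsilon}{x}}{\vec b'}}$. The base cases $\epsilon=id_x$ (where $n_x(r)=1$, $\vec\beta=(\beta)$ and $\restr{id_x}{x}=id_{\{1\}}$) and $\epsilon=id_y$ with $y\neq x$ (where $\vec\beta=()$ and $\restr{id_y}{x}=id_\emptyset$) are immediate, as are the unary steps for $\lambda y.\alpha$, $\linj\alpha$ and $\rinj\alpha$: there $\restr{\epsilon}{x}=\restr{\alpha}{x}$ and both substitution and the constructor commute through, so the induction hypothesis on $\alpha$ gives the result at once.

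For the application case $\epsilon=\rappl{\gamma}{\tilde\delta}$, I would split $\vec\beta=\vec\beta_1\cons\vec\beta_2$, $\vec b=\vec b_1\cons\vec b_2$, $\vec b'=\vec b'_1\cons\vec b'_2$ along $n_x(\gamma)$ and $n_x(\tilde\delta)$, and apply the induction hypothesis to $\gamma$ and $\tilde\delta$. The application rule then yields an isomorphism with target $\rappl{\subst{c'}{x}{\lact{\restr{\gamma}{x}}{\vec b'_1}}}{\subst{\vec d'}{x}{\lact{\restr{\tilde\delta}{x}}{\vec b'_2}}}$; since $\restr{\epsilon}{x}=\restr{\gamma}{x}\otimes\restr{\tilde\delta}{x}$, it only remains to observe that the concatenation action distributes over $\cons$, i.e. $\lact{\restr{\gamma}{x}\otimes\restr{\tilde\delta}{x}}{(\vec b'_1\cons\vec b'_2)}=\lact{\restr{\gamma}{x}}{\vec b'_1}\cons\lact{\restr{\tilde\delta}{x}}{\vec b'_2}$, which is a direct consequence of the definition of $\otimes$ (it is the instance of Lemma~\ref{lem:actionMultiplex} where $\sigma$ is the identity of $\permutations{2}$).

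The monomial case $\epsilon=(\sigma,\alpha_1,\dotsc,\alpha_n)$ is the crux. Writing $\vec b=\vec b_1\cons\cdots\cons\vec b_n$, $\vec b'=\vec b'_1\cons\cdots\cons\vec b'_n$ and $\vec\beta=\vec\beta_1\cons\cdots\cons\vec\beta_n$ with $\length{\vec\beta_i}=n_x(\alpha_i)$ and $\vec\beta_i\in\vD(\vec b_i,\vec b'_i)$, the induction hypothesis gives $\subst{\alpha_i}{x}{\vec\beta_i}:\subst{a_i}{x}{\vec b_i}\cong\subst{a'_{\sigma(i)}}{x}{\lact{\restr{\alpha_i}{x}}{\vec b'_i}}$. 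Since $\restr{\epsilon}{x}=\sigma\cdot(\restr{\alpha_1}{x},\dotsc,\restr{\alpha_n}{x})$, I would invoke Lemma~\ref{lem:actionMultiplex} — the lengths match because $\restr{\alpha_i}{x}\in\permutations{n_x(a_i)}$ and $\length{\vec b'_i}=n_x(a_i)$ — to compute $\lact{\restr{\epsilon}{x}}{\vec b'}=\lact{\restr{\alpha_{\sigma^{-1}(1)}}{x}}{\vec b'_{\sigma^{-1}(1)}}\cons\cdots\cons\lact{\restr{\alpha_{\sigma^{-1}(n)}}{x}}{\vec b'_{\sigma^{-1}(n)}}$. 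The $j$-th block then has length $n_x(a'_j)$, so the monomial substitution splits block by block and the $j$-th component of $\subst{r'}{x}{\lact{\restr{\epsilon}{x}}{\vec b'}}$ is $\subst{a'_j}{x}{\lact{\restr{\alpha_{\sigma^{-1}(j)}}{x}}{\vec b'_{\sigma^{-1}(j)}}}$.

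Finally, the monomial rule of Figure~\ref{fig:cong} requires $\subst{\alpha_i}{x}{\vec\beta_i}$ to map into the $\sigma(i)$-th component of the target; putting $j=\sigma(i)$ in the expression above recovers exactly $\subst{a'_{\sigma(i)}}{x}{\lact{\restr{\alpha_i}{x}}{\vec b'_i}}$, matching the induction hypothesis, and the whole $(\sigma,\subst{\alpha_1}{x}{\vec\beta_1},\dotsc,\subst{\alpha_n}{x}{\vec\beta_n})$ is thus a morphism with the claimed target. I expect the main obstacle to be precisely this index bookkeeping in the monomial case: the reindexing by $\sigma^{-1}$ introduced by the multiplexing action of Lemma~\ref{lem:actionMultiplex} must cancel against the $\sigma$ occurring in the monomial isomorphism rule, and making the two agree block by block is the heart of the argument; everything else reduces to the routine propagation of the induction hypothesis through the constructors.
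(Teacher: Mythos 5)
Your proof is correct and follows essentially the same route as the paper's: induction on the structure of $r$, with the monomial case resolved by applying the induction hypothesis componentwise and then invoking Lemma~\ref{lem:actionMultiplex} to match the multiplexed action $\lact{\restr{\epsilon}{x}}{\vec b'}$ against the blockwise targets, the $\sigma$/$\sigma^{-1}$ reindexing cancelling exactly as you describe. The only difference is expository — you compute $\lact{\restr{\epsilon}{x}}{\vec b'}$ first and then match components, whereas the paper assembles the target first and identifies it afterwards — and your explicit treatment of the application case and of the block lengths is if anything slightly more detailed than the original.
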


\begin{proof}
	By induction on the structure of $r$. The interesting case is the list case.
	Assume $r=(a_{1},\dotsc, a_{n})$,
	$r'=(a'_{1},\dotsc, a'_{n})$,
	$\epsilon = (\sigma, \alpha_{1},\dotsc, \alpha_{n})$
	and $\vec \beta=\vec \beta_1\cons\cdots \cons\vec\beta_n$,
	with
	$\alpha_i:a_i\cong a'_{\sigma(i)}$,
	$\vec b=\vec b_1\cons \cdots \cons \vec b_n$,
	$\vec b'=\vec b'_1\cons \cdots \cons \vec b'_n$,
	$\length{\vec \beta_i}=n_x(a_i)$
	and $\vec \beta_i\in \vD(\vec{b}_i,  \vec{b}'_i)$.
	By definition,
	we have
	$\alpha[\vec\beta/x] = (\sigma,\alpha_1[\vec\beta_1/x],\dotsc,\alpha_n[\vec\beta_n/x])$.
	Since $\alpha_i:a_i\cong a'_{\sigma(i)}$, we obtain
	$\alpha_i[\vec\beta_i/x] : a_{i}[\vec{b}_{i} / x]\cong  a'_{\sigma (i)}[ \lact{\restr{\alpha_i}{x}}{\vec{b}_{ i}'} / x]$
	by induction hypothesis.

	We obtain
	\begin{align*}
		\subst \alpha x{\vec \beta}:\subst r x {\vec b}&\cong
		\big(
      \subst[\big]{a'_{1}}x{\lact{\restr{\alpha_{\sigma^{-1}(1)}}{x}}{\vec{b}_{\sigma^{-1}(1)}'}}
			,\dotsc,
      \subst[\big]{a'_{n}}x{\lact{\restr{\alpha_{\sigma^{-1}(n)}}{x}}{\vec{b}_{\sigma^{-1}(n)}'}}
		\big)
		\\&=
		\subst[\big]{ r'}x{
      \lact{\restr{\alpha_{\sigma^{-1}(1)}}{x}}{\vec{b}_{\sigma^{-1}(1)}'}
			\cons \cdots \cons
      \lact{\restr{\alpha_{\sigma^{-1}(n)}}{x}}{\vec{b}_{\sigma^{-1}(n)}'}
		}
	\end{align*}
	and we conclude by Lemma~\ref{lem:actionMultiplex}.
\end{proof}

\subsection{The combinatorics of permutation expressions under coherent substitution}%
\label{sect:rigid:cohSubst}

Substitution is injective on parallel permutation expressions,
in the following sense:
\begin{lem}\label{lem:substInj}
Let $r, r' \in D^{(!)}$ and $\vec{b}, \vec{b}' \in D^{!}$
with $\length{\vec b}=n_x(r)$ and $\length{\vec b'}=n_x(r')$,
and let $ \epsilon,\epsilon'\in \G(r,r')$ and $\vec\beta,\vec\beta'\in \vD(  \vec{b}, \vec{b}')$.
If $ \epsilon [\vec{\beta} / x] = \epsilon' [\vec{\beta'} / x]$ then $ \epsilon = \epsilon' $ and $ \vec{\beta} = \vec{\beta }'$.
\end{lem}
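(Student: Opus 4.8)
The plan is to prove the statement by induction on the structure of the rigid expression $r$. The key structural fact I would exploit at the outset is that, since $\epsilon,\epsilon'\in\G(r,r')$ share both source $r$ and target $r'$, and since permutation expressions follow the syntax of rigid expressions, $\epsilon$ and $\epsilon'$ are forced to have the same top-level shape, dictated by $r$; likewise the way $\vec\beta$ and $\vec\beta'$ get decomposed along the $\cons$-structure is governed by $r$. I would first dispatch the base cases. If $r=x$, then necessarily $r'=x$ and $\epsilon=\epsilon'=id_x$, while $\length{\vec b}=n_x(x)=1$ forces $\vec\beta=(\beta)$ and $\vec\beta'=(\beta')$; since $id_x[(\beta)/x]=\beta$, the hypothesis $\subst\epsilon x{\vec\beta}=\subst{\epsilon'}x{\vec\beta'}$ reads $\beta=\beta'$, hence $\vec\beta=\vec\beta'$. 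If $r=y$ with $y\neq x$, then $\epsilon=\epsilon'=id_y$ and $\vec\beta=\vec\beta'=()$, so the claim is trivial.

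Next I would treat the transparent constructor cases $r=\lambda y.a$, $r=\linj a$ and $r=\rinj a$. Here substitution commutes with the head constructor without touching $\vec\beta$, so the assumed equality descends directly to an equality $\subst\alpha x{\vec\beta}=\subst{\alpha'}x{\vec\beta'}$ between the immediate subexpressions; since $n_x(r)=n_x(a)$, the induction hypothesis applies verbatim and gives $\alpha=\alpha'$ and $\vec\beta=\vec\beta'$, whence $\epsilon=\epsilon'$. For the application case $r=\rappl c{\vec d}$, write $\epsilon=\rappl\gamma{\tilde\delta}$ and $\epsilon'=\rappl{\gamma'}{\tilde\delta'}$. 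The crucial observation is that the split $\vec\beta=\vec\beta_1\cons\vec\beta_2$ is uniquely determined by $r$, the cut length being $\length{\vec\beta_1}=n_x(c)$, which depends only on the fixed function part $c$; the same cut applies to $\vec\beta'=\vec\beta'_1\cons\vec\beta'_2$. Comparing the two resulting applications componentwise yields $\subst\gamma x{\vec\beta_1}=\subst{\gamma'}x{\vec\beta'_1}$ and $\subst{\tilde\delta}x{\vec\beta_2}=\subst{\tilde\delta'}x{\vec\beta'_2}$, and the induction hypothesis applied to $c$ and to $\vec d$ delivers $\gamma=\gamma'$, $\vec\beta_1=\vec\beta'_1$, $\tilde\delta=\tilde\delta'$ and $\vec\beta_2=\vec\beta'_2$, which recombine to the claim.

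The monomial case $r=(a_1,\dotsc,a_n)$ is where I expect the only real care to be needed. I would write $\epsilon=(\sigma,\alpha_1,\dotsc,\alpha_n)$ and $\epsilon'=(\sigma',\alpha'_1,\dotsc,\alpha'_n)$, with $\alpha_i:a_i\cong a'_{\sigma(i)}$ and $\alpha'_i:a_i\cong a'_{\sigma'(i)}$, and split $\vec\beta,\vec\beta'$ at the cut lengths $n_x(a_i)$, which are again fixed by $r$. Since $\subst\epsilon x{\vec\beta}=(\sigma,\subst{\alpha_1}x{\vec\beta_1},\dotsc,\subst{\alpha_n}x{\vec\beta_n})$ and equality of permutation monomials is componentwise, I would first read off $\sigma=\sigma'$ from the outer permutation; this is the step that matters, because it aligns the targets $a'_{\sigma(i)}=a'_{\sigma'(i)}$ so that $\alpha_i$ and $\alpha'_i$ are genuinely parallel morphisms in $\D(a_i,a'_{\sigma(i)})$, which is exactly the hypothesis the induction requires. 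Only then would I use the componentwise equalities $\subst{\alpha_i}x{\vec\beta_i}=\subst{\alpha'_i}x{\vec\beta'_i}$ and apply the induction hypothesis to each $a_i$, obtaining $\alpha_i=\alpha'_i$ and $\vec\beta_i=\vec\beta'_i$; together with $\sigma=\sigma'$ this gives $\epsilon=\epsilon'$ and $\vec\beta=\vec\beta'$. Thus the whole difficulty, such as it is, is bookkeeping: extracting $\sigma=\sigma'$ before invoking the induction hypothesis so that the compared sub-permutation-terms have matching sources and targets, and noting that the determinacy of the concatenation splits — from the occurrence counts of the fixed subterms of $r$ — keeps the pieces parallel throughout.
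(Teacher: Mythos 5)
Your proposal is correct and follows exactly the route the paper takes: the paper's proof is literally ``by a straightforward induction on the structure of $r$'', and your write-up is that induction carried out in full, including the one genuinely delicate point (extracting $\sigma=\sigma'$ in the monomial case before invoking the induction hypothesis, so that the $\alpha_i$ and $\alpha'_i$ are parallel morphisms).
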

\begin{proof}
By a straightforward induction on the structure of $r$.
\end{proof}

On the other hand, surjectivity does not hold in general,
because the substitution might enable new morphisms
$\subst rx{\vec b}\cong\subst{r'}x{\vec b'}$,
not induced by morphisms in $\G(r,r')$ and $\vD(\vec b,\vec b')$:
\begin{exa}
Let $a = \langle \langle y \rangle (x) \rangle \langle z \rangle (x)$, $a'= \langle \langle x \rangle (y) \rangle \langle z \rangle (x)$ and $ \vec{b} = (y,z) $.
Then $ a [\vec{b}/x] = a' [\vec{b}/x]$ but $a\not\cong a'$.
\end{exa}
Observe that, in the above example, $\softify a\not\coh \softify{a'}$.
Indeed, in the following, we will establish that coherence
allows to restore a precise correspondence between the permutation expressions
on a substitution $\subst rx{(b_1,\dotsc, b_n)}$
and the $(1+n)$-tuples of permutation expressions
on $r$ and each of the $b_i$'s respectively.
It will be useful to consider the coherence relation defined on rigid
expressions by the rules of Figure~\ref{fig:rigidCoh},
so that $r\coh r'$ iff $\softify r\coh \softify{r'}$.
\begin{figure}
	\begin{gather*}
		\begin{prooftree}
			\infer0[]{x \coh x}
		\end{prooftree}
		\qquad
		\begin{prooftree}
			\hypo{ a \coh a'}
			\infer1[]{ \lambda x. a \coh \lambda x. a' }
		\end{prooftree}
		\qquad
		\begin{prooftree}
			\hypo{ c \coh c'}
			\hypo{ \vec d \coh \vec d'}
			\infer2[]{ \langle c \rangle \vec d \coh \langle c' \rangle \vec d'}
		\end{prooftree}
		\qquad
		\begin{prooftree}
			\hypo{b_{i} \coh b_{j} \text{ for }1\le i,j\le n + m}
			\infer1[]{ (b_{1},\dotsc, b_{n}) \coh (b_{n + 1},\dotsc, b_{n + m}) }
		\end{prooftree}
		\\[1ex]
		\begin{prooftree}
			\hypo{ a \coh a'}
			\infer1[]{ a \oplus \bullet \coh a' \oplus \bullet }
		\end{prooftree}
		\qquad
		\begin{prooftree}
			\hypo{ a \coh a'}
			\infer1[]{ \bullet \oplus s \coh \bullet \oplus s' }
		\end{prooftree}
		\qquad
		\begin{prooftree}
			\infer0[]{a \oplus \bullet \coh \bullet \oplus a'}
		\end{prooftree}\quad.
	\end{gather*}
	\caption{Rules for the coherence relation on $\rigidExprs$.}%
	\label{fig:rigidCoh}
\end{figure}
Then we obtain:
\begin{lem}\label{lem:substSurj}
Let $r, r' \in D^{(!)}$ and $\vec{b}, \vec{b}' \in D^{!}$
with $\length{\vec b}=n_x(r)$ and $\length{\vec b'}=n_x(r')$.
If $r \coh r' $ then for all $\phi\in \G( r[\vec{b}/x] ,  r' [\vec{b'}/x]) $
there exist  $ \epsilon\in \G(r,r')$ and
$  \vec{\beta} \in\vD(  \vec{b}, \lact{\restr{\epsilon}{x}^{-1}}{\vec{b}'})$ such that $\phi= \epsilon [\vec{\beta} / x]$.
\end{lem}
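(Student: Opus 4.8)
The plan is to proceed by induction on the structure of the rigid expression $r$, exploiting at each step the inversion principle for the inductively defined coherence relation of Figure~\ref{fig:rigidCoh}. A morphism $\phi\in\G(\subst rx{\vec b},\subst{r'}x{\vec b'})$ can exist only when $r$ and $r'$ carry the same head constructor (by inspection of the rules of Figure~\ref{fig:cong}), and then $r\coh r'$ forces the immediate subexpressions of $r'$ to be coherent with those of $r$; so in each case I can read off both the shape of $\phi$ and the coherence hypotheses needed to invoke the induction hypothesis. When the heads do not match --- notably the case $r=\linj a$, $r'=\rinj a'$, which $\coh$ permits without condition --- the hom-set $\G(\subst rx{\vec b},\subst{r'}x{\vec b'})$ is empty and the statement holds vacuously, so only the matching cases need treatment.

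The base and unary cases are routine. For $r=x$ we have $\vec b=(b)$ and coherence gives $r'=x$, $\vec b'=(b')$, so any $\phi\in\G(b,b')$ is $\subst{id_x}{x}{(\phi)}$; for $r=y\neq x$ we have $\vec b=()$, $r'=y$ and $\phi=id_y=\subst{id_y}{x}{()}$. The abstraction and injection cases follow immediately from the induction hypothesis, the corresponding constructors being transparent both to restriction (\emph{e.g.} $\restr{(\lambda y.\alpha)}{x}=\restr{\alpha}{x}$) and to substitution. For an application $r=\rappl c{\vec d}$ I would split $\vec b=\vec b_0\cons\vec b_1$ according to $n_x(c)$ and $n_x(\vec d)$, write $\phi=\rappl\gamma{\tilde\delta}$, apply the induction hypothesis to $c$ and to the monomial $\vec d$ to obtain pairs $(\epsilon_c,\vec\beta_0)$ and $(\epsilon_d,\vec\beta_1)$, and set $\epsilon=\rappl{\epsilon_c}{\epsilon_d}$ and $\vec\beta=\vec\beta_0\cons\vec\beta_1$. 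Here $\epsilon[\vec\beta/x]=\phi$ is definitional, and the target of $\vec\beta$ comes out correct because $\restr{\epsilon}{x}=\restr{\epsilon_c}{x}\otimes\restr{\epsilon_d}{x}$, with concatenation acting blockwise and satisfying $(\sigma\otimes\tau)^{-1}=\sigma^{-1}\otimes\tau^{-1}$.

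The heart of the argument is the monomial case $r=(a_1,\dotsc,a_n)$, where $\phi=(\rho,\phi_1,\dotsc,\phi_n)$ with $\rho\in\permutations{n}$ and $\phi_i:\subst{a_i}{x}{\vec b_i}\cong\subst{a'_{\rho(i)}}{x}{\vec b'_{\rho(i)}}$. The crucial point is that monomial coherence is the \emph{pairwise} coherence of all the $a_i$ and $a'_j$ together, so $r\coh r'$ yields $a_i\coh a'_{\rho(i)}$ for every $i$, regardless of $\rho$; this is exactly what licenses the induction hypothesis on each component, producing $\epsilon_i\in\G(a_i,a'_{\rho(i)})$ and $\vec\gamma_i\in\vD(\vec b_i,\lact{\restr{\epsilon_i}{x}^{-1}}{\vec b'_{\rho(i)}})$ with $\phi_i=\subst{\epsilon_i}{x}{\vec\gamma_i}$. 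Setting $\epsilon=(\rho,\epsilon_1,\dotsc,\epsilon_n)$ and $\vec\beta=\vec\gamma_1\cons\dotsb\cons\vec\gamma_n$ gives $\epsilon[\vec\beta/x]=\phi$ at once. To check $\vec\beta\in\vD(\vec b,\lact{\restr{\epsilon}{x}^{-1}}{\vec b'})$, I would use $\restr{\epsilon}{x}=\rho\cdot(\restr{\epsilon_1}{x},\dotsc,\restr{\epsilon_n}{x})$ and verify the equivalent identity $\lact{\restr{\epsilon}{x}}{\vec c}=\vec b'$, where $\vec c$ denotes the concatenated target of $\vec\beta$; this is a direct computation from Lemma~\ref{lem:actionMultiplex}, the block sizes being compatible because $\epsilon_i:a_i\cong a'_{\rho(i)}$ forces $n_x(a_i)=n_x(a'_{\rho(i)})$, and with Lemma~\ref{lem:compMultiplex} supplying the inverse of the multiplexed permutation.

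I expect the monomial case to be the main obstacle, on two counts. Conceptually, it is precisely here that coherence is indispensable: the example preceding the statement shows that without it substitution can glue together non-isomorphic rigid terms, so that a morphism of substitutions need not descend to morphisms of the factors; the strength of monomial coherence (\emph{all} pairs coherent) is exactly calibrated to absorb the arbitrary permutation $\rho$. Technically, the bookkeeping of how $\rho$ threads through the restriction operator via multiplexing --- and the verification that the target of $\vec\beta$ is the prescribed $\lact{\restr{\epsilon}{x}^{-1}}{\vec b'}$ --- is the only genuinely computational part, handled by Lemmas~\ref{lem:actionMultiplex} and~\ref{lem:compMultiplex}; alternatively one may determine this target \emph{a posteriori} from Lemma~\ref{lem:substPerm}, since once $\epsilon[\vec\beta/x]=\phi$ is established the codomain of $\phi$ fixes it, up to the evident injectivity of substitution in its argument monomial.
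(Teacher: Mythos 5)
Your proof is correct and follows essentially the same route as the paper's: induction on the structure of $r$, with the monomial case as the crux, resolved by applying the induction hypothesis componentwise (licensed by the pairwise coherence of all list elements, which absorbs the arbitrary permutation $\rho$) and then reconciling the target of $\vec\beta$ with $\lact{\restr{\epsilon}{x}^{-1}}{\vec b'}$ via Lemmas~\ref{lem:actionMultiplex} and~\ref{lem:compMultiplex}. Your explicit observation that the mismatched-head cases (such as $\linj a$ versus $\rinj{a'}$, which coherence permits) are vacuous because the hom-set $\G(\subst rx{\vec b},\subst{r'}x{\vec b'})$ is then empty is a small but welcome sharpening of the paper's blanket remark that coherence forces $r$ and $r'$ to have the same syntactic nature.
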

\begin{proof}
	By induction on the structure of $r$:
	the coherence hypothesis $r\coh r'$ induces that $r$ and $r'$
	are of the same syntactic nature.

	If $r=x$ then $r'=x$ and we can write
	$\vec b=(b)$, $\vec b'=(b')$
	with $\phi:b\cong b'$. Then we set $\epsilon=id_x$ and $\vec\beta=(\phi)$.
	If $r=y\not=x$ then $r'=y$ and $\phi=id_y$,
	and we set $\epsilon=id_y$ and $\vec\beta=()$.
	The abstraction, application and sum cases
	follow straightforwardly from the induction hypotheses.
	We detail the list case.

	We have $ r = ( a_{1}, \dots, a_{n})$ and $ r' = (a'_{1}, \dots, a'_{m})$.
	Since $\phi:r[\vec{b}/x]\cong r' [\vec{b}'/x]$
	we must have $m =n$,
	$\vec b=\vec b_1\cons \cdots\cons \vec b_n$,
	$\vec b'=\vec b'_1\cons \cdots\cons \vec b'_n$
	and $ \phi= ( \sigma, \gamma_{1}, \dots, \gamma_{n})$
	with $\gamma_i \in \G( a_i[\vec b_i/x] ,  a'_{\sigma(i)} [\vec b'_{\sigma(i)}/x]) $.
	Since $ r \coh r' $ we have in particular $a_{i} \coh a'_{\sigma(i)}$ for $1\le i \le n  $.

	By the induction hypothesis, we obtain $\gamma_{i} = \alpha_{i} [\vec{\beta}_{i} / x]$
	with $ \alpha_{i} \in \G(a_{i}, a'_{\sigma(i)})$
  and $\vec{\beta}_{i} \in \vD(\vec{b}_{i}, \lact{\restr{\alpha_i}{x}^{-1}}{\vec{b}'_{\sigma(i)}})$.
	Then by definition $\epsilon\eqdef (\sigma, \alpha_{1}, \dots, \alpha_{n}) : r \cong r'$ and
	\begin{align*}
		\vec \beta \eqdef \vec \beta_{1} :: \cdots :: \vec \beta_{n} : \vec b
		&\cong
    \lact{\restr{\alpha_1}{x}^{-1}}{\vec{b}'_{\sigma(1)}}
    \cons \cdots\cons \lact{\restr{\alpha_n}{x}^{-1}}{\vec{b}'_{\sigma(n)}}
		\\
		&=
    \lact{\sigma^{-1}\cdot (\restr{\alpha_{\sigma^{-1}(1)}}{x}^{-1},\dotsc,\restr{\alpha_{\sigma^{-1}(n)}}{x}^{-1})}{\vec{b}'}
		&&\text{(by Lemma~\ref{lem:actionMultiplex})}
	\end{align*}
	and it remains only to prove that
  $\sigma^{-1}\cdot (\restr{\alpha_{\sigma^{-1}(1)}}{x}^{-1},\dotsc,\restr{\alpha_{\sigma^{-1}(n)}}{x}^{-1})
  =\restr{\epsilon}{x}^{-1}$,
	which follows from Lemma~\ref{lem:compMultiplex}.
\end{proof}

In particular, we obtain
$(\restr{\epsilon}{x},\vec\beta)\in\Dm(\vec b,\vec b')$,
hence:
\begin{cor}%
	\label{cor:cohAntired}
	If $r \coh r' $ and $r[\vec{b}/x] \cong  r' [\vec{b'}/x]$
	then $r\cong r'$ and $\vec{b}\cong \vec{b}'$.
\end{cor}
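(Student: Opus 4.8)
The plan is to read this corollary off directly from Lemma~\ref{lem:substSurj}, which already carries the combinatorial weight. First I would turn the hypothesis $\subst rx{\vec b}\cong\subst{r'}x{\vec b'}$ into usable data: since $\cong$ relates genuine rigid expressions, both substitutions are nonzero, and by the definition of rigid substitution this forces $\length{\vec b}=n_x(r)$ and $\length{\vec b'}=n_x(r')$ --- exactly the side conditions required to invoke Lemma~\ref{lem:substSurj}. The isomorphism moreover supplies a witness $\phi\in\G(\subst rx{\vec b},\subst{r'}x{\vec b'})$.

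Next I would apply Lemma~\ref{lem:substSurj}, whose coherence premise $r\coh r'$ is given by hypothesis. This yields $\epsilon\in\G(r,r')$ together with $\vec\beta\in\vD(\vec b,\lact{\restr{\epsilon}{x}^{-1}}{\vec b'})$ such that $\phi=\subst\epsilon x{\vec\beta}$. The mere existence of $\epsilon$ in $\G(r,r')$ establishes the first conclusion, $r\cong r'$.

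For the second conclusion $\vec b\cong\vec b'$, I would assemble $\restr{\epsilon}{x}$ and $\vec\beta$ into a single permutation monomial. By Lemma~\ref{lem:functor}, $\restr{\epsilon}{x}\in\permutations{n_x(r)}=\permutations{\length{\vec b}}$; and since $\vec\beta\in\vD(\vec b,\lact{\restr{\epsilon}{x}^{-1}}{\vec b'})$, the defining identity $\Dm(\vec b,\vec b')=\sum_{\sigma\in\permutations{\length{\vec b}}}\vD(\vec b,\lact{\sigma^{-1}}{\vec b'})$ places the pair $(\restr{\epsilon}{x},\vec\beta)$ in $\Dm(\vec b,\vec b')$. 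A nonempty hom-set in the groupoid of permutation monomials is by definition a witness that $\vec b\cong\vec b'$, which finishes the argument; this is precisely the parenthetical remark stated just before the corollary.

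I expect no real obstacle here, as all the delicate interplay between restriction, multiplexing and coherence has already been discharged inside Lemma~\ref{lem:substSurj}. The only point deserving a moment's attention is that Lemma~\ref{lem:substSurj} hands back not an arbitrary permutation but precisely $\restr{\epsilon}{x}$ as the permutation component, so that $(\restr{\epsilon}{x},\vec\beta)$ lands in $\Dm(\vec b,\vec b')$ on the nose rather than merely after some unrelated reindexing.
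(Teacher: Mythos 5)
Your proposal is correct and follows exactly the paper's route: the paper proves this corollary by the single remark preceding it, namely that Lemma~\ref{lem:substSurj} yields $\epsilon\in\G(r,r')$ and $\vec\beta$ with $(\restr{\epsilon}{x},\vec\beta)\in\Dm(\vec b,\vec b')$, whence both conclusions. Your additional observations --- that the hypothesis forces the length side conditions and that the permutation component is precisely $\restr{\epsilon}{x}$ so the pair lands in $\Dm(\vec b,\vec b')$ on the nose --- are accurate and merely make explicit what the paper leaves implicit.
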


Given $r\lhd e$, $\vec b\lhd \bar t$ and
$e'\in \supp(\nsubst ex{\bar t})$ such that $\subst rx{\vec b}\lhd e'$,
we are about to determine the coefficient of $e'$ in $\nsubst ex{\bar t}$
by enumerating the permutations $\sigma$
such that $\subst rx{\lact\sigma{\vec b}}\lhd e'$,
\emph{i.e.} $\subst rx{\lact\sigma{\vec b}}\cong \subst rx{\vec b}$.
We thus define $\subSt{x}(r,\vec{b}) \eqdef
\{ \sigma \in \permutations{n_x(r)} \mid r[\vec{b}/x]\cong r[\lact{\sigma}{\vec{b}}/x]\} $
whenever $ \length{\vec b} = n_{x}(r)$.

\begin{lem}\label{lem:cardSvar}
	Let $ r \in\rigidExprs$ and $ \vec{b}\in D^{!}$ with $ \length{\vec b} = n_{x}(r)$.
	If $r\coh r$ then
  $ \subSt{x}(r,\vec{b}) = \restr{\G(r)}{x} \qSt(\vec{b})$.
\end{lem}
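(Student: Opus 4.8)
The plan is to prove the set equality by double inclusion, drawing on the two preceding structural results about substitution of permutation expressions: Lemma~\ref{lem:substPerm}, which identifies the target of $\subst{\epsilon}{x}{\vec\beta}$, for the inclusion $\restr{\G(r)}{x}\qSt(\vec b)\subseteq\subSt{x}(r,\vec b)$; and Lemma~\ref{lem:substSurj}, the surjectivity of permutation substitution under coherence, for the reverse inclusion. Throughout I rely on the observation recorded just after the definition of the quasi-stabilizer, namely that $\sigma\in\qSt(\vec b)$ iff $\vD(\vec b,\lact{\sigma^{-1}}{\vec b})\neq\emptyset$, together with the facts that $\qSt(\vec b)$ is a subgroup of $\permutations{n_x(r)}$ and that we are dealing with a left action, so that $\lact{\rho}{\lact{\rho'}{\vec b}}=\lact{\rho\rho'}{\vec b}$.

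For $\restr{\G(r)}{x}\qSt(\vec b)\subseteq\subSt{x}(r,\vec b)$, I fix $\epsilon\in\G(r)$ and $\tau\in\qSt(\vec b)$ and aim to show $\restr{\epsilon}{x}\,\tau^{-1}\in\subSt{x}(r,\vec b)$; since $\tau^{-1}$ ranges over $\qSt(\vec b)$ as $\tau$ does, this is enough. As $\tau\in\qSt(\vec b)$, there is some $\vec\gamma\in\vD(\vec b,\lact{\tau^{-1}}{\vec b})$, and $\length{\vec\gamma}=n_x(r)$. Applying Lemma~\ref{lem:substPerm} to $\epsilon:r\cong r$ and $\vec\gamma$ yields $\subst{\epsilon}{x}{\vec\gamma}:\subst{r}{x}{\vec b}\cong\subst{r}{x}{\lact{\restr{\epsilon}{x}}{\lact{\tau^{-1}}{\vec b}}}$, whose target rewrites to $\subst{r}{x}{\lact{\restr{\epsilon}{x}\tau^{-1}}{\vec b}}$. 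Hence $\subst{r}{x}{\vec b}\cong\subst{r}{x}{\lact{\restr{\epsilon}{x}\tau^{-1}}{\vec b}}$, that is, $\restr{\epsilon}{x}\,\tau^{-1}\in\subSt{x}(r,\vec b)$.

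For the converse inclusion, I take $\sigma\in\subSt{x}(r,\vec b)$, witnessed by some $\phi\in\G(\subst{r}{x}{\vec b},\subst{r}{x}{\lact{\sigma}{\vec b}})$. Here I invoke the hypothesis $r\coh r$ in order to apply Lemma~\ref{lem:substSurj} with $r'=r$ and $\vec b'=\lact{\sigma}{\vec b}$: it supplies $\epsilon\in\G(r)$ and $\vec\beta\in\vD(\vec b,\lact{\restr{\epsilon}{x}^{-1}}{\lact{\sigma}{\vec b}})$ with $\phi=\subst{\epsilon}{x}{\vec\beta}$. Setting $\rho=\restr{\epsilon}{x}^{-1}\sigma$, the hom-set $\vD(\vec b,\lact{\rho}{\vec b})$ is nonempty, so the quasi-stabilizer observation gives $\rho^{-1}\in\qSt(\vec b)$, hence $\rho\in\qSt(\vec b)$ as $\qSt(\vec b)$ is a subgroup. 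Therefore $\sigma=\restr{\epsilon}{x}\,\rho\in\restr{\G(r)}{x}\qSt(\vec b)$, which closes the argument.

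The main obstacle is this converse inclusion, and more precisely the indispensability of the coherence hypothesis. Without $r\coh r$, substitution can create genuinely new isomorphisms between $\subst{r}{x}{\vec b}$ and $\subst{r}{x}{\lact{\sigma}{\vec b}}$ that do not arise from any automorphism of $r$ acting together with a permutation of $\vec b$ — exactly the phenomenon displayed in the example preceding Lemma~\ref{lem:substSurj} — so $\sigma$ need not decompose and the inclusion breaks down. Coherence is precisely what lets Lemma~\ref{lem:substSurj} reconstruct the pair $(\epsilon,\vec\beta)$ from $\phi$. The only remaining care is bookkeeping with the direction of the action and with inverses, so as to match $\lact{\restr{\epsilon}{x}}{\lact{\tau^{-1}}{\vec b}}=\lact{\restr{\epsilon}{x}\tau^{-1}}{\vec b}$ and to translate nonemptiness of the hom-sets $\vD(\vec b,\lact{\rho}{\vec b})$ into membership in $\qSt(\vec b)$.
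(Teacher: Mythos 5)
Your proof is correct and follows essentially the same route as the paper's: Lemma~\ref{lem:substPerm} gives the inclusion $\restr{\G(r)}{x}\qSt(\vec b)\subseteq\subSt{x}(r,\vec b)$ and Lemma~\ref{lem:substSurj} (using $r\coh r$) gives the converse. The only difference is that you make the inverse-bookkeeping around the characterization $\sigma\in\qSt(\vec b)\Leftrightarrow\vD(\vec b,\lact{\sigma^{-1}}{\vec b})\not=\emptyset$ fully explicit, which the paper leaves implicit via the subgroup property of $\qSt(\vec b)$.
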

\begin{proof}
	Let $\tau\in \qSt(\vec b)$: by definition, we obtain
	$\vec\beta\in\vD(\vec b,\lact{\tau}{\vec b})$.
	If moreover $ \epsilon \in \G(r)$ then, by Lemma~\ref{lem:substPerm},
	$\subst\epsilon x{\vec \beta}\in
  \G(\subst rx{\vec b},\subst rx{\lact{\restr{\epsilon}{x}\tau}{\vec b}})$
  hence $\restr{\epsilon}{x}\tau\in \subSt{x}(r,\vec{b})$.
	It remains only to show that the function
	$(\epsilon,\tau)\in\G(r)\times \qSt(\vec b)\mapsto
  \restr{\epsilon}{x}\tau\in \subSt{x}(r,\vec{b})$
	is surjective.

	If $\sigma\in\subSt{x}(r,\vec{b})$, there exists $\phi\in
	\G(\subst rx{\vec b},\subst rx{\lact{\sigma}{\vec b}})$.
	Since $r\coh r$, we can apply Lemma~\ref{lem:substSurj} and obtain $\epsilon\in\G(r)$
	and $\vec\beta\in\vD(\vec b,\lact{\restr{\epsilon}{x}^{-1}\sigma}{\vec b})$:
	in particular, $\restr{\epsilon}{x}^{-1}\sigma\in \qSt(\vec b)$,
	and we conclude since $\sigma=\restr{\epsilon}{x}(\restr{\epsilon}{x}^{-1}\sigma)$.
\end{proof}

Our argument will moreover rely on the following construction:
if $ \length{\vec b} = n_{x}(r)$, we set
$\preqSt{x}(r,\vec b)
\eqdef\{ \epsilon \in \G(r) \mid \restr{\epsilon}{x} \in \qSt(\vec{b}) \}
=F_x^{-1}(\qSt(\vec b))$,
which is a subgroup of $\G(r)$ because
$F_x$ is a group homomorphism from $\G(r)$ to $\permutations{n_x(r)}$
by Lemma~\ref{lem:functor}.

\begin{lem}\label{lem:cardSubst}
	Let $r\in D^{(!)}$ and $\vec{b} \in D^{!}$ with $\length{\vec b}=n_x(r)$.
	If $r\coh r$ then
	$\card(\G(r[\vec{b}/x])) = \card(\preqSt{x}(r,\vec b)) \card(\vD(\vec{b}))$.
\end{lem}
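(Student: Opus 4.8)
The plan is to count the automorphism group $\G(\subst rx{\vec b})$ by writing each of its elements as a rigid substitution of permutation expressions, turning this decomposition into a bijection by means of Lemmas~\ref{lem:substSurj} and~\ref{lem:substInj}, and then counting the resulting set of pairs fiberwise over the outer permutation expression. Concretely, I would introduce the set
\[
  P \eqdef \{ (\epsilon,\vec\beta) \mid \epsilon\in\preqSt{x}(r,\vec b),\ \vec\beta\in\vD(\vec b,\lact{\restr{\epsilon}{x}^{-1}}{\vec b}) \}
\]
together with the map $\Phi\colon P\to\G(\subst rx{\vec b})$ sending $(\epsilon,\vec\beta)$ to $\subst\epsilon x{\vec\beta}$. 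The first step is to check that $\Phi$ is well defined: by Lemma~\ref{lem:substPerm} the expression $\subst\epsilon x{\vec\beta}$ has source $\subst rx{\vec b}$ and target $\subst rx{\lact{\restr{\epsilon}{x}}{\lact{\restr{\epsilon}{x}^{-1}}{\vec b}}}=\subst rx{\vec b}$, so it is genuinely an automorphism of $\subst rx{\vec b}$.

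Next I would prove that $\Phi$ is a bijection. For surjectivity I apply Lemma~\ref{lem:substSurj} with $r'=r$ and $\vec b'=\vec b$ — this is exactly where the hypothesis $r\coh r$ is used: every $\phi\in\G(\subst rx{\vec b})$ can be written $\subst\epsilon x{\vec\beta}$ with $\epsilon\in\G(r)$ and $\vec\beta\in\vD(\vec b,\lact{\restr{\epsilon}{x}^{-1}}{\vec b})$. Since this hom-set is inhabited, the characterization $\sigma\in\qSt(\vec b)\iff\vD(\vec b,\lact{\sigma^{-1}}{\vec b})\neq\emptyset$ forces $\restr{\epsilon}{x}\in\qSt(\vec b)$, that is $\epsilon\in\preqSt{x}(r,\vec b)$, so $(\epsilon,\vec\beta)\in P$. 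Injectivity follows from Lemma~\ref{lem:substInj}: an equality $\subst\epsilon x{\vec\beta}=\subst{\epsilon'}x{\vec\beta'}$ forces $\epsilon=\epsilon'$ and $\vec\beta=\vec\beta'$ by the same structural induction on $r$, componentwise and independently of the targets $\lact{\restr{\epsilon}{x}^{-1}}{\vec b}$ of the factors. Hence $\card(\G(\subst rx{\vec b}))=\card(P)$.

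Finally I would count $P$ fiberwise over $\epsilon\in\preqSt{x}(r,\vec b)$, writing
\[
  \card(P)=\sum_{\epsilon\in\preqSt{x}(r,\vec b)}\card(\vD(\vec b,\lact{\restr{\epsilon}{x}^{-1}}{\vec b})).
\]
For each such $\epsilon$ we have $\restr{\epsilon}{x}\in\qSt(\vec b)$, so $\vD(\vec b,\lact{\restr{\epsilon}{x}^{-1}}{\vec b})$ is nonempty; since $\vD$ is a groupoid, postcomposition by any fixed element of this hom-set gives a bijection from $\vD(\vec b)=\vD(\vec b,\vec b)$ onto it, exactly as in the proof of Lemma~\ref{lem:cardGmonomial}, whence $\card(\vD(\vec b,\lact{\restr{\epsilon}{x}^{-1}}{\vec b}))=\card(\vD(\vec b))$. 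Summing this constant over $\preqSt{x}(r,\vec b)$ yields $\card(P)=\card(\preqSt{x}(r,\vec b))\,\card(\vD(\vec b))$, which is the desired identity.

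I expect the main obstacle to be the bijection step rather than the arithmetic: specifically, checking that the decomposition supplied by Lemma~\ref{lem:substSurj} lands precisely in $P$ (so that the inhabitedness of the hom-set is what pins $\epsilon$ down to $\preqSt{x}(r,\vec b)$), and confirming that injectivity survives the fact that the factors $\vec\beta$ attached to different $\epsilon$ genuinely have different targets $\lact{\restr{\epsilon}{x}^{-1}}{\vec b}$. Once the bijection is secured, the fiberwise count is a routine groupoid computation, relying only on the already-used fact that in a groupoid isomorphic objects have equinumerous hom-sets.
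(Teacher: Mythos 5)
Your proof is correct and follows essentially the same route as the paper's: both establish that $(\epsilon,\vec\beta)\mapsto\subst\epsilon x{\vec\beta}$ is a bijection from $\sum_{\epsilon\in\preqSt{x}(r,\vec b)}\vD(\vec{b}, \lact{\restr{\epsilon}{x}^{-1}}{\vec{b}})$ onto $\G(r[\vec{b}/x])$ via Lemmas~\ref{lem:substPerm}, \ref{lem:substInj} and~\ref{lem:substSurj}, and then count each fiber as $\card(\vD(\vec b))$ using the groupoid property. Your explicit observation that inhabitedness of $\vD(\vec b,\lact{\restr{\epsilon}{x}^{-1}}{\vec b})$ is what pins $\epsilon$ down to $\preqSt{x}(r,\vec b)$ is a detail the paper leaves implicit, but the argument is the same.
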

\begin{proof}
	By Lemma~\ref{lem:substPerm}, if $ \epsilon \in \G(r)$
	and $\vec{\beta} \in \vD(\vec{b}, \lact{\restr{\epsilon}{x}^{-1}}{\vec{b}})$
	then
	$\epsilon [\vec{\beta} / x] \in \G(r[\vec{b} / x])$.
	If moreover $\epsilon\in\preqSt{x}(r,\vec b)$ then
	$\restr{\epsilon}{x}^{-1} \in \qSt(\vec{b})$:
	as already remarked in the proof of Lemma~\ref{lem:cardGmonomial},
	this entails that
	$\card(\vD(\vec{b}, \lact{\restr{\epsilon}{x}^{-1}}{\vec{b}}))=\card(\vD(\vec{b}))$.
	It is thus sufficient to establish that the substitution operation
	$(\epsilon, \vec{\beta}) \mapsto \epsilon[\beta / x] $ defines a bijection from
	$\sum_{\epsilon\in\preqSt{x}(r,\vec b)}\vD(\vec{b}, \lact{\restr{\epsilon}{x}^{-1}}{\vec{b}})$
	to $ \G(r[\vec{b}/ x])$.
	This fact derives immediately from Lemma~\ref{lem:substInj} (injectivity) and Lemma~\ref{lem:substSurj} (surjectivity).
\end{proof}

\begin{lem}%
	\label{lem:cardStree}
	Let $ r \in\rigidExprs$ and $ \vec{b}\in D^{!}$ with $r\coh r$ and $ \length{\vec b} = n_{x}(r)$.  Then
	\[
		\card(\subSt{x}(r,\vec b))
		= \frac{\card(\G(r))\card(\Dm(\vec b))}
		{\card(\G(\subst rx{\vec b}))}
	\quad.\]
\end{lem}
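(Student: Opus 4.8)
The plan is to assemble the identity purely from the cardinality relations already established, treating the statement as a bookkeeping exercise among subgroups of $\permutations{n_x(r)}$. The starting point is Lemma~\ref{lem:cardSvar}, which under the hypothesis $r\coh r$ identifies $\subSt{x}(r,\vec b)$ with the product $\restr{\G(r)}{x}\,\qSt(\vec b)$ of two subgroups of $\permutations{n_x(r)}$. Since $\permutations{n_x(r)}$ is finite, Fact~\ref{fact:groupProduct} then yields
\[
  \card(\subSt{x}(r,\vec b))
  = \frac{\card(\restr{\G(r)}{x})\,\card(\qSt(\vec b))}
         {\card(\restr{\G(r)}{x}\cap\qSt(\vec b))}\quad.
\]
This reduces the goal to expressing each of the three factors on the right-hand side of the statement in terms of the quantities appearing here.

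The key step is to control the index term $\card(\restr{\G(r)}{x}\cap\qSt(\vec b))$. For this I would apply Fact~\ref{fact:cardQuotient} to the group homomorphism $F_x\colon\G(r)\to\permutations{n_x(r)}$ (which is indeed a homomorphism by Lemma~\ref{lem:functor}) and to the subgroup $\qSt(\vec b)$ of its codomain. Here $F_x(\G(r))=\restr{\G(r)}{x}$ by definition, and $F_x^{-1}(\qSt(\vec b))=\preqSt{x}(r,\vec b)$ exactly as the latter was defined; Fact~\ref{fact:cardQuotient} then gives
\[
  \frac{\card(\G(r))}{\card(\preqSt{x}(r,\vec b))}
  = \frac{\card(\restr{\G(r)}{x})}
         {\card(\restr{\G(r)}{x}\cap\qSt(\vec b))}\quad.
\]

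It remains only to bring in the two factors $\card(\Dm(\vec b))$ and $\card(\G(\subst rx{\vec b}))$ that figure in the statement. I would expand the former by Lemma~\ref{lem:cardGmonomial}, as $\card(\Dm(\vec b))=\card(\qSt(\vec b))\,\card(\vD(\vec b))$, and the latter by Lemma~\ref{lem:cardSubst} (again using $r\coh r$), as $\card(\G(\subst rx{\vec b}))=\card(\preqSt{x}(r,\vec b))\,\card(\vD(\vec b))$. Substituting these two expansions into the right-hand side of the claimed identity cancels the common factor $\card(\vD(\vec b))$ and leaves $\card(\G(r))\,\card(\qSt(\vec b))/\card(\preqSt{x}(r,\vec b))$; rewriting $\card(\G(r))/\card(\preqSt{x}(r,\vec b))$ by the second display turns this into $\card(\restr{\G(r)}{x})\,\card(\qSt(\vec b))/\card(\restr{\G(r)}{x}\cap\qSt(\vec b))$, which is precisely $\card(\subSt{x}(r,\vec b))$ by the first display.

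Since every ingredient is already in hand, there is no genuine obstacle; the only point requiring care is the bookkeeping in the application of Fact~\ref{fact:cardQuotient} --- in particular making sure that the homomorphism in play is the restriction of $F_x$ to the automorphism group $\G(r)$, so that its image and the preimage of $\qSt(\vec b)$ really are $\restr{\G(r)}{x}$ and $\preqSt{x}(r,\vec b)$ respectively. Note also that the hypothesis $r\coh r$ is used twice, once through Lemma~\ref{lem:cardSvar} and once through Lemma~\ref{lem:cardSubst}, so it must be available for both invocations.
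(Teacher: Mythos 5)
Your proof is correct and follows essentially the same route as the paper's: both reduce the claim via Lemma~\ref{lem:cardSvar} and Fact~\ref{fact:groupProduct}, expand $\card(\Dm(\vec b))$ and $\card(\G(\subst rx{\vec b}))$ by Lemmas~\ref{lem:cardGmonomial} and~\ref{lem:cardSubst}, and close with Fact~\ref{fact:cardQuotient} applied to $F_x$. The bookkeeping, including the double use of $r\coh r$, matches the paper exactly.
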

\begin{proof}
	Write $k=n_x(r)$.
  We know that $\qSt(\vec b)$ and $\restr{\G(r)}{x}$ are subgroups of $\permutations{k}$.
	Lemma~\ref{lem:cardSvar} and Fact~\ref{fact:groupProduct} entail that
	\[
		\card(\subSt{x}(r,\vec b))=
    \frac{\card(\restr{\G(r)}{x}) \card(\qSt(\vec{b}))}{\card(\restr{\G(r)}{x} \cap \qSt(\vec{b}))}\quad.
	\]
	Using Lemma~\ref{lem:cardSubst}, it will thus be sufficient to prove:
	\[
		\frac{\card(\G(r))\card(\Dm(\vec b))}
		{
			\card(\preqSt{x}(r,\vec b))
			\card(\vD(\vec{b}))
		}
		=
    \frac{\card(\restr{\G(r)}{x}) \card(\qSt(\vec{b}))}
    {\card(\restr{\G(r)}{x} \cap \qSt(\vec{b}))}
	\]
	which simplifies to
	\[
		\frac{\card(\G(r))}{\card(\preqSt{x}(r,\vec b))}
		=
    \frac{\card(\restr{\G(r)}{x})}{\card(\restr{\G(r)}{x} \cap \qSt(\vec{b}))}
	\]
	by Lemma~\ref{lem:cardGmonomial}.
	We conclude by Fact~\ref{fact:cardQuotient},
  recalling that
  $\restr{\G(r)}{x}=F_x(\G(r))$ and
  $\preqSt{x}(r,\vec b)=F_x^{-1}(\qSt(\vec b))$.
\end{proof}

\begin{lem}%
	\label{lem:coeffNsubst}
Let $e\in \resExprs$ be such that $e \coh e$ and let $\bar{t}\in \resMonomials$.
If $e'\in \supp( \partial_{x} e \cdot \bar{t})$
then $(\partial_{x} e \cdot \bar{t})_{e'} = \dfrac{m(e)m(\bar{t})}{m(e')}$.
\end{lem}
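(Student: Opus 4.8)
The plan is to reduce the coefficient computation to the combinatorial identity already established in Lemma~\ref{lem:cardStree}, using the rigid presentation of symmetric substitution. Since $e'\in\supp(\nsubst ex{\bar t})$, we have $n_x(e)=\length{\bar t}=:n$. First I would fix a rigid representative $r'\lhd e'$; by Lemma~\ref{lem:lhdDeSubst} there are $r\lhd e$ and $\vec b\lhd\bar t$ with $r'=\subst rx{\vec b}$, and then $\length{\vec b}=n_x(r)=n$. Note that $r\coh r$ holds, because $\softify r=e\coh e$ and coherence on rigid expressions is defined so that $r\coh r$ iff $\softify r\coh\softify r$; this ensures the hypotheses of Lemma~\ref{lem:cardStree} are met.

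Next I would read off the coefficient directly from the rigid presentation. By Lemma~\ref{lem:softifySubst},
\[
	\nsubst ex{\bar t}=\sum_{\sigma\in\permutations{n}}\softify[\big]{\subst rx{\lact\sigma{\vec b}}}\quad,
\]
and each summand is a single resource expression, never $0$, since $\length{\lact\sigma{\vec b}}=n=n_x(r)$ guarantees the rigid substitution is defined. Hence $(\nsubst ex{\bar t})_{e'}$ is precisely the number of $\sigma\in\permutations{n}$ with $\softify{\subst rx{\lact\sigma{\vec b}}}=e'$. Writing $e'=\softify{r'}=\softify{\subst rx{\vec b}}$ and applying Lemma~\ref{lem:conglhd}, this condition is equivalent to $\subst rx{\lact\sigma{\vec b}}\cong\subst rx{\vec b}$, which is exactly the defining condition of $\subSt x(r,\vec b)$. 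Therefore $(\nsubst ex{\bar t})_{e'}=\card(\subSt x(r,\vec b))$.

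Finally I would invoke Lemma~\ref{lem:cardStree}, valid because $r\coh r$, to obtain
\[
	\card(\subSt x(r,\vec b))=\frac{\card(\G(r))\,\card(\Dm(\vec b))}{\card(\G(\subst rx{\vec b}))}\quad,
\]
and translate each automorphism count into a multiplicity coefficient via Lemma~\ref{lem:m_is_cardG}: since $r\lhd e$, $\vec b\lhd\bar t$ and $\subst rx{\vec b}=r'\lhd e'$, we get $\card(\G(r))=m(e)$, $\card(\Dm(\vec b))=\card(\G(\vec b))=m(\bar t)$ and $\card(\G(\subst rx{\vec b}))=m(e')$, which yields $(\nsubst ex{\bar t})_{e'}=\frac{m(e)m(\bar t)}{m(e')}$ as desired.

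As for the difficulty: essentially all the combinatorial weight has been front-loaded into Lemma~\ref{lem:cardStree}, which itself rests on the functoriality of $F_x$ (Lemma~\ref{lem:functor}), the surjectivity of permutation substitution under coherence (Lemma~\ref{lem:substSurj}), and the group-theoretic Facts~\ref{fact:groupProduct} and~\ref{fact:cardQuotient}. The only genuinely delicate point left is to justify that the coefficient \emph{literally} counts the elements of $\subSt x(r,\vec b)$ --- that no summand vanishes and that summands are distinguished only up to the $\cong$-class they produce. This is where the rigid presentation of Lemma~\ref{lem:softifySubst} together with Lemma~\ref{lem:conglhd} does the work of making the informal ``$\frac1{n!}$ cancels the $n!$ permutations'' intuition precise.
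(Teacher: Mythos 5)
Your proof is correct and follows essentially the same route as the paper's: fix $r'\lhd e'$, obtain $r\lhd e$ and $\vec b\lhd\bar t$ with $r'=\subst rx{\vec b}$ via Lemma~\ref{lem:lhdDeSubst}, identify the coefficient with $\card(\subSt x(r,\vec b))$ via Lemma~\ref{lem:softifySubst}, and conclude by Lemmas~\ref{lem:cardStree} and~\ref{lem:m_is_cardG}. The extra justification you give (no summand vanishes, and the count is exactly over the $\cong$-class of $r'$) is a correct elaboration of what the paper leaves implicit.
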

\begin{proof}
	Let $r' \lhd e'$ and $k=n_x(e)$.
	By Lemma~\ref{lem:lhdDeSubst} there exists $r \lhd e $ and
	$\vec{b} \lhd \bar{t}$ such that $r'= r[\vec{b}/x]$.
	Then, by Lemma~\ref{lem:softifySubst}, $(\partial_{x} e \cdot \bar{t})_{e'}
	= \card(\{ \sigma \in \permutations{k} \mid r[\lact{\sigma}{\vec{b}}/x]\lhd e'\})
  =\card(\subSt{x}(r,\vec b))$.
	Then we conclude by Lemmas~\ref{lem:cardStree}
	and~\ref{lem:m_is_cardG}.
\end{proof}

\section{Normalizing the Taylor expansion}%
\label{sect:commutation}

In this final section we leverage our results on the groupoid of rigid
expressions and permutation expressions in order to achieve
Steps~\ref{step:disjoint} and~\ref{step:coefNF}.
This allows us to complete the proof of commutation between Taylor expansion
and normalization.

\subsection{Normalizing resource expressions in a uniform setting}
Lemma~\ref{lem:coeffNsubst} is almost sufficient to obtain Step~\ref{step:coefNF},
as it fixes the coefficients in a hereditary head reduction step from a uniform expression:
\begin{lem}%
	\label{lem:coeffL}
Let $e\in \resExprs$ with $e \coh e$. If $e'\in \supp(L(e))$ then $(L(e))_{e'}= \dfrac{m(e)}{m(e')}$.
\end{lem}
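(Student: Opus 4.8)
The plan is to argue by induction on $e$, following the clauses that define $L$. The guiding observation is that coefficients above $1$ can only be produced by two of the clauses: the redex clause, through the symmetric multilinear substitution $\nsubst sy{\bar t}$, and the monomial clause $L([s_1,\dotsc,s_k])=[L(s_1),\dotsc,L(s_k)]$, where reassembling multisets from the reducts of the components may collapse several tuples. Every other clause either outputs a single term (the commutations of $\linj{\mathord-}$ and $\rinj{\mathord-}$ past abstraction and application) or merely pushes $L$ under a constructor. In each case I will first read off from $e\coh e$ the coherence of the relevant subexpressions, using the rules of Figure~\ref{fig:coh}, and then appeal either to the induction hypothesis or to Lemma~\ref{lem:coeffNsubst}.

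For the four commutation clauses, $L(e)$ is a single term $e'$ with coefficient $1$, and inspecting the definition of $m$ shows $m(e)=m(e')$, since the markers and abstractions leave $m$ unchanged and $m(\rappl s{\bar t})=m(s)m(\bar t)$ is insensitive to the position of a marker. For $L(\linj s)=\linj{L(s)}$ (and its $\rinj{\mathord-}$ twin), $e\coh e$ gives $s\coh s$, and if $e'=\linj{s'}$ then $(L(e))_{e'}=(L(s))_{s'}=m(s)/m(s')=m(e)/m(e')$ by the induction hypothesis. The head-variable clause reduces to the monomial case: from $e\coh e$ one gets $\bar s_i\coh\bar s_i$, and since an application records its argument monomials in order, the coefficient of $e'=\lambda\vec x.\rappl y{\bar s'_1\cdots\bar s'_k}$ is $\prod_i(L(\bar s_i))_{\bar s'_i}=\prod_i m(\bar s_i)/m(\bar s'_i)=m(e)/m(e')$. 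Finally, the redex clause $L(\lambda\vec x.\rappl{\lambda y.s}{\bar t\,\bar u_1\cdots\bar u_k})=\lambda\vec x.\rappl{\nsubst sy{\bar t}}{\bar u_1\cdots\bar u_k}$ is where Lemma~\ref{lem:coeffNsubst} is used: peeling the coherence derivation of $e\coh e$ through the abstractions and applications yields $s\coh s$, so for $e'=\lambda\vec x.\rappl{s''}{\bar u_1\cdots\bar u_k}$ with $s''\in\supp(\nsubst sy{\bar t})$ we get $(L(e))_{e'}=(\nsubst sy{\bar t})_{s''}=m(s)m(\bar t)/m(s'')$; as $m(e)=m(s)m(\bar t)\prod_i m(\bar u_i)$ and $m(e')=m(s'')\prod_i m(\bar u_i)$, the factor $\prod_i m(\bar u_i)$ cancels and $(L(e))_{e'}=m(e)/m(e')$.

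The hard part will be the monomial clause, with $e=[s_1,\dotsc,s_k]$ uniform, so that the $s_i$ are pairwise coherent. I would group the $s_i$ by equality, writing $e=[v_1]^{n_1}\cdots[v_p]^{n_p}$ with the $v_j$ pairwise distinct and coherent, so that $m(e)=\prod_j n_j!\,m(v_j)^{n_j}$. The crucial ingredient --- and the real obstacle --- is that distinct coherent terms have disjoint hereditary head reducts, i.e.\ $\supp(L(v_j))\cap\supp(L(v_{j'}))=\emptyset$ for $j\neq j'$; this is the one-step instance of the determinism of reduction on cliques underlying Step~\ref{step:disjoint}, and without it the formula genuinely fails (two distinct coherent components sharing a reduct would double a coefficient). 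Granting it, every $e'\in\supp(L(e))$ decomposes uniquely as a multiset union $\bar w_1\cdots\bar w_p$ with $\bar w_j$ a multiset of $n_j$ terms taken from the disjoint support $\supp(L(v_j))$, so $(L(e))_{e'}=\prod_j(\,[L(v_j),\dotsc,L(v_j)]\,)_{\bar w_j}$ ($n_j$ copies in the $j$-th factor). Writing $\bar w_j=[w_{j,1}]^{c_{j,1}}\cdots$ with the $w_{j,l}$ distinct, the $n_j$-linear monomial coefficient is the multinomial $\frac{n_j!}{\prod_l c_{j,l}!}\prod_l((L(v_j))_{w_{j,l}})^{c_{j,l}}$; inserting $(L(v_j))_{w_{j,l}}=m(v_j)/m(w_{j,l})$ from the induction hypothesis, and using disjointness once more to see that all the $w_{j,l}$ are globally distinct, whence $m(e')=\prod_{j,l}c_{j,l}!\,m(w_{j,l})^{c_{j,l}}$, the product telescopes to
\[
(L(e))_{e'}
=\frac{\prod_j n_j!\,m(v_j)^{n_j}}{\prod_{j,l}c_{j,l}!\,m(w_{j,l})^{c_{j,l}}}
=\frac{m(e)}{m(e')}.
\]
All the rest is routine unwinding of the definitions of $L$ and $m$; securing the disjointness of reducts of distinct coherent terms from coherence preservation is the one point that carries genuine content.
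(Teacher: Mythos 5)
Your proof follows the same inductive skeleton as the paper's: induction on the structure of $e$, with Lemma~\ref{lem:coeffNsubst} doing the work in the redex case, the marker-commutation clauses contributing coefficient $1$ with $m(e)=m(e')$, and the head-variable clause reducing to the monomial clause. The substantive difference is precisely that monomial clause $L([s_1,\dotsc,s_k])=[L(s_1),\dotsc,L(s_k)]$, which the paper dispatches with ``all the other cases follow directly from the induction hypothesis by multilinearity'', whereas you correctly point out that the multinomial bookkeeping only collapses to $m(e)/m(e')$ because distinct coherent terms have disjoint $L$-supports --- your counterexample-shaped remark is accurate: if two distinct coherent components shared a reduct, the coefficient of the corresponding monomial would come out wrong by the factorial factors. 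That disjointness is the one-step instance of Step~\ref{step:disjoint}; in the paper it is obtained by combining Lemma~\ref{lem:redSupp} with Lemma~\ref{lem:suppNFRigid}, both of which are stated only \emph{after} this lemma but do not depend on it, so invoking them here introduces no circularity (and indeed the lemma is only ever used afterwards, in Theorem~\ref{thm:coeffNF}, where the analogous disjointness is used explicitly). You do not prove the disjointness yourself, but you locate it correctly, and granting it, your computation in the monomial case --- grouping equal components, disjointness forcing a unique decomposition of $e'$, the multinomial coefficient $n_j!/\prod_l c_{j,l}!$ cancelling against the factorials in $m(e)$ and $m(e')$ --- is correct and more explicit than anything in the paper's own one-line argument. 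In short: same route, with the one genuinely non-routine point of the paper's terse proof identified and worked out.
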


\begin{proof}
	By induction on the structure of $e$ applying
	Lemma~\ref{lem:coeffNsubst}
	in the redex case: observe indeed that if
	$e=\lambda\vec x. \rappl{\lambda y. s}{\bar t\,{\bar u_1\cdots\bar u_k}}$
	then $e'=\lambda\vec x. \rappl{v}{\bar u_1\cdots\bar u_k}$
	with $v\in \supp(\nsubst sy{\bar t})$,
	and then $(L(e))_{e'}=(\nsubst sy{\bar t})_v=\frac{m(s)m(\bar t)}{m(v)}$
	and we conclude since $\frac{m(e)}{m(e')}=\frac{m(s)m(\bar t)}{m(v)}$.
	All the other cases follow directly from the induction hypothesis by multilinearity.
\end{proof}

To iterate Lemma~\ref{lem:coeffL} along the reduction sequence to the normal form,
we first need to show that uniformity is preserved by $L$.
As before, we prefer to focus on the rigid setting first, and
we will only consider the hereditary head reduction defined as follows:\footnote{
  Note that the reduction from $\rappl{\lambda x.a}\vec b$ to $\subst ax{\vec b}$
  is not well behaved in general: its contextual extension is not even confluent,
  because it forces the order in which variable occurrences are substituted.
  Consider for instance the term $(\lambda x.\rappl{\lambda y.\rappl y{(x)}}(x))(z_1,z_2)$
  which has two distinct normal forms: $\rappl{z_1}{z_2}$ and $\rappl{z_2}{z_1}$.
  This rigid calculus is thus not very interesting \emph{per se},
  and we only consider it as a tool to analyze the dynamics of the resource calculus.
}
\begin{gather*}
	\begin{aligned}
		L(\linj a)
		&\eqdef \linj{L(a)}
		&
		L(\rinj a)
		&\eqdef \rinj{L(a)}
		\\
		L( \lambda\vec x.\lambda y. (\linj a))
		&\eqdef\lambda\vec x.(\linj{\lambda y.a})
		&
		L( \lambda\vec x.\lambda y. (\rinj a))
		&\eqdef\lambda\vec x.(\rinj{\lambda y.a})
	\end{aligned}
	\\
	\begin{aligned}
		L(\lambda\vec x.\rappl{\rappl{\linj {a}}{\vec b}}{\vec c_1\cdots\vec c_k})
		&\eqdef\lambda\vec x.\rappl{\linj {\rappl{a}{\vec b}}}{\vec c_1\cdots\vec c_k}
		\\
		L(\lambda\vec x.\rappl{\rappl{\rinj {a}}{\vec b}}{\vec c_1\cdots\vec c_k})
		&\eqdef\lambda\vec x.\rappl{\rinj {\rappl{a}{\vec b}}}{\vec c_1\cdots\vec c_k}
		\\
		L(\lambda\vec x.\rappl{y}{\vec{a}_{1}\cdots \vec{a}_{k}})
		&\eqdef \lambda\vec x.\rappl{y}{L(\vec{a}_{1})\cdots L(\vec{a}_{k})}
		\\
		L((a_{1}, \dots , a_{k}))
		&\eqdef (L(a_{1}), \dots ,L(a_{k})  )
		\\
		L(\lambda\vec x. \rappl{\lambda y. a}{\vec b\,{\vec c_1\cdots\vec c_k}})
		&\eqdef\lambda\vec x.\rappl{\subst a y{\vec b}}{{\vec c_1\cdots\vec c_k}}
	\end{aligned}
\end{gather*}
extended to partial rigid expressions by setting $L(0)\eqdef0$.
By an analogue of Lemma~\ref{lem:SN}, for any $r\in\rigidExprs$, there exists $k\in\N$ such that
$L^k(r)$ is normal, and then we write $\NF r=L^k(r)$.
Moreover, $r$ is in normal form iff $L(r)=r$.

\begin{lem}\label{lem:redSupp}
	If $e \in \resExprs$ then:
	\begin{enumerate}
		\item $\supp(L(e))=\{\softify{L (r)}\mid r\lhd e\text{ and }L(r)\not=0\}$;
		\item $\supp(\NF e)=\{\softify{\NF r}\mid r\lhd e\text{ and }\NF r\not=0\}$.
	\end{enumerate}
\end{lem}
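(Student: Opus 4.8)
The plan is to lift the entire analysis to the rigid calculus, where $L$ is an honest function into $\rigidExprs\cup\{0\}$ rather than a relation valued in sums. The statement I would actually prove, by induction on $e$ and subsuming part~(1), is the set equality of rigid expressions
\[
	\{ r'\in\rigidExprs \mid r'\lhd e'\text{ for some }e'\in\supp(L(e)) \}
	=
	\{ L(r) \mid r\lhd e\text{ and }L(r)\neq 0 \}
\]
which says that the rigid representatives of the reducts of $e$ are exactly the nonzero rigid reducts of the representatives of $e$. Part~(1) then follows at once by applying the softening map $r\mapsto\softify r$ to both sides, since every resource expression admits a representative and $\softify{r'}=e'$ whenever $r'\lhd e'$. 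The induction follows the case analysis defining $L$, which its rigid counterpart mirrors pattern for pattern: in every case except the $\beta$-redex, the two functions $L$ commute with the same constructor (or perform the same commutation of a $\oplus$-marker), as do $\lhd$ and softening, so both inclusions reduce either to the induction hypotheses on the immediate subexpressions or, for the marker-commutation cases, to a bare unfolding of the definitions.

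The only genuinely new case is the head redex $e=\lambda\vec x.\rappl{\lambda y. s}{\bar t\,\bar u_1\cdots\bar u_k}$, where $L(e)=\lambda\vec x.\rappl{\nsubst sy{\bar t}}{\bar u_1\cdots\bar u_k}$ while a representative $r\lhd e$ has the shape $\lambda\vec x.\rappl{\lambda y. a}{\vec b\,\vec c_1\cdots\vec c_k}$ with $a\lhd s$, $\vec b\lhd\bar t$ and $\vec c_i\lhd\bar u_i$, so that $L(r)=\lambda\vec x.\rappl{\subst ay{\vec b}}{\vec c_1\cdots\vec c_k}$. Since the argument representatives $\vec c_i$ are carried along unchanged on both sides, this case reduces to the set equality
\[
	\{ w\mid w\lhd v\text{ for some }v\in\supp(\nsubst sy{\bar t}) \}
	=
	\{ \subst ay{\vec b}\mid a\lhd s,\ \vec b\lhd\bar t,\ \subst ay{\vec b}\neq 0 \}
\]
between the rigid representatives of the summands of the symmetric substitution and the nonzero rigid substitutions. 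The inclusion $\supseteq$ is Corollary~\ref{cor:lhdSubst} instantiated at the identity permutation, and the inclusion $\subseteq$ is exactly Lemma~\ref{lem:lhdDeSubst}, which recovers from any representative $w\lhd v$ of a summand a decomposition $w=\subst ay{\vec b}$ with $a\lhd s$ and $\vec b\lhd\bar t$. I expect this lifting inclusion to be the main obstacle, since it is the single place where the ordering information discarded by softening must be reconstructed; it is precisely what Lemma~\ref{lem:lhdDeSubst} supplies.

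For part~(2) the plan is to iterate. Because coefficients are non-negative, reduction of a sum satisfies $\supp(L(E))=\bigcup_{e'\in\supp(E)}\supp(L(e'))$, and a routine induction on $n$ then promotes the one-step equality to
\[
	\{ r'\mid r'\lhd e'\text{ for some }e'\in\supp(L^n(e)) \}
	=
	\{ L^n(r)\mid r\lhd e\text{ and }L^n(r)\neq 0 \}
	\quad .
\]
I would then fix $n$ with $L^n(e)=\NF e$. The last ingredient is that softening both preserves and reflects normality, which is immediate because the head-redex patterns of $r$ and of $\softify r$ correspond under the forgetting of monomial orderings. Hence, whenever $L^n(r)\neq 0$ its softening lies in $\supp(\NF e)$, which consists of normal expressions, so $L^n(r)$ is itself normal and equals $\NF r$; conversely $\NF r\neq 0$ forces $L^n(r)\neq 0$, again with $L^n(r)=\NF r$. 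Substituting $\NF r$ for $L^n(r)$ and the condition $\NF r\neq 0$ for $L^n(r)\neq 0$ in the iterated equality, and softening both sides, yields $\supp(\NF e)=\{\softify{\NF r}\mid r\lhd e,\ \NF r\neq 0\}$, which is part~(2).
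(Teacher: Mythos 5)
Your proposal is correct and follows essentially the same route as the paper: a rigid-level one-step equivalence proved by induction on $e$, with the $\beta$-redex case handled via Corollary~\ref{cor:lhdSubst} and Lemma~\ref{lem:lhdDeSubst}, then iterated for $\NF{-}$ using the fact that softening preserves and reflects normality. If anything, you are more explicit than the paper, which cites only Corollary~\ref{cor:lhdSubst} for the redex case while the lifting direction you isolate really does rest on Lemma~\ref{lem:lhdDeSubst}.
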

\begin{proof}
	We first prove that $r'\lhd e'\in \supp(L(e))$
	iff there exists $r\lhd e$ with $r' = L(r)$,
	which gives the first result:
	this is done by a straightforward induction on the structure of $e$,
	using Corollary~\ref{cor:lhdSubst} for the $\beta$-redex case.

	Now fix $k\in\N$ such that $\NF e=L^k(e)$:
	by iterating the previous result, we obtain
	$r'\lhd e'\in  \supp(\NF e)$
	iff there exists $r\lhd e$ with $r' = L^k(r)$.
	Then we conclude, observing that if $r'\lhd e'$,
	then $r'$ is in normal form iff $e'$ is.
\end{proof}

\begin{lem}%
	\label{lem:cohSubst}
		If $r\coh r'$ and $\vec b\coh \vec b'$
		with $n_x(r)=\length{\vec b}$ and $n_x(r')=\length{\vec b'}$
		then $\subst rx{\vec b}\coh \subst {r'}x{\vec b'}$.
\end{lem}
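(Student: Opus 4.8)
The plan is to argue by induction on the derivation of $r\coh r'$, with the induction hypothesis universally quantified over the substituted monomials, and exploiting that coherence forces $r$ and $r'$ to share the same top-level constructor --- with the sole exception of the cross rule relating $\linj a$ and $\rinj{a'}$ (its symmetric $\rinj a\coh\linj{a'}$ being treated identically). Throughout I will freely use that monomial coherence is a \emph{clique} condition: by the fourth rule of Figure~\ref{fig:rigidCoh}, $\vec b\coh\vec b'$ holds exactly when all the terms listed in $\vec b$ and $\vec b'$ together are pairwise coherent, so any pair of sub-lists extracted from $\vec b$ and $\vec b'$ is again coherent. Note that coherence does \emph{not} entail $n_x(r)=n_x(r')$ (for instance $\rappl y{()}\coh\rappl y{(x)}$), but this causes no trouble, since the two sides are split independently, according to $n_x$ of $r$ and of $r'$ respectively.

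The base and unary cases are immediate. If $r=r'=x$ then $\vec b=(b)$, $\vec b'=(b')$ with $b\coh b'$, and $\subst xx{(b)}=b\coh b'=\subst xx{(b')}$; if $r=r'=y$ with $y\neq x$ then $\vec b=\vec b'=()$ and $\subst yx{()}=y\coh y$. The abstraction and both marked-sum cases ($\linj{-}$ and $\rinj{-}$) follow directly from the induction hypothesis by re-applying the corresponding coherence rule. The cross case $r=\linj a$, $r'=\rinj{a'}$ needs no induction hypothesis at all: since $\subst{\linj a}x{\vec b}=\linj{\subst ax{\vec b}}$ and $\subst{\rinj{a'}}x{\vec b'}=\rinj{\subst{a'}x{\vec b'}}$ are well defined (the length hypotheses give $\length{\vec b}=n_x(a)$ and $\length{\vec b'}=n_x(a')$), the rule $\linj s\coh\rinj{s'}$ applies unconditionally.

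The application and monomial cases are where the clique property does the work, and the monomial case is the real crux. For an application $r=\rappl c{\vec d}$, $r'=\rappl{c'}{\vec d'}$, I split $\vec b=\vec b_0\cons\vec b_1$ and $\vec b'=\vec b'_0\cons\vec b'_1$ with $\length{\vec b_0}=n_x(c)$, $\length{\vec b_1}=n_x(\vec d)$, $\length{\vec b'_0}=n_x(c')$, $\length{\vec b'_1}=n_x(\vec d')$; the sub-lists satisfy $\vec b_0\coh\vec b'_0$ and $\vec b_1\coh\vec b'_1$ by clique-closure, so the induction hypothesis applied to $c\coh c'$ and $\vec d\coh\vec d'$ yields the two coherences needed to re-apply the application rule. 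For a monomial $r=(a_1,\dots,a_n)$, $r'=(a'_1,\dots,a'_m)$, I split $\vec b=\vec b_1\cons\cdots\cons\vec b_n$ and $\vec b'=\vec b'_1\cons\cdots\cons\vec b'_m$ with $\length{\vec b_i}=n_x(a_i)$ and $\length{\vec b'_j}=n_x(a'_j)$. Here I must establish that every pair of entries of $\subst rx{\vec b}$ and $\subst{r'}x{\vec b'}$ is coherent; each such pair arises from a premise $a\coh a'$ of the monomial coherence derivation --- one of $a_i\coh a_{i'}$, $a_i\coh a'_j$, or $a'_j\coh a'_{j'}$ --- to which the induction hypothesis applies with the corresponding blocks, whose coherence ($\vec b_i\coh\vec b_{i'}$, $\vec b_i\coh\vec b'_j$, $\vec b'_j\coh\vec b'_{j'}$) is again furnished by clique-closure. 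Re-applying the monomial rule concludes.

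The main obstacle is therefore organizational rather than deep: one must set up the induction on the coherence derivation (rather than merely on the structure of $r$), so that in the monomial case the coherences between two entries both coming from $r'$ are available as premises; and one must check that the independent splittings of $\vec b$ and $\vec b'$ keep every relevant sub-list coherent. Both points are handled uniformly by the single observation that monomial coherence is stable under taking sub-lists.
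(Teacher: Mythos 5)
Your proof is correct and follows the same route as the paper, which simply asserts the result ``by a straightforward induction on the derivation of $r\coh r'$'': you carry out exactly that induction, and you correctly identify the only point requiring any care, namely that in the monomial (and application) cases the needed coherences between the various sub-lists of $\vec b$ and $\vec b'$ all follow from the clique form of the monomial coherence rule. Nothing further is needed.
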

\begin{proof}
	By a straightforward induction on the derivation of $r\coh r'$.
\end{proof}

\begin{lem}%
	\label{lem:cohRedRigid}
	For all $r,r'\in\rigidExprs$ such that $r\coh r'$:
	\begin{enumerate}
		\item if $L(r)\not=0$ and $L(r')\not=0$ then $L(r)\coh L(r')$;
		\item if $\NF r\not=0$ and $\NF{r'}\not=0$ then $\NF r\coh \NF{r'}$.
	\end{enumerate}
\end{lem}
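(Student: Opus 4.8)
The plan is to prove part~(1) by induction on the structure of $r$, mirroring the clause of the definition of $L$ that applies to it, and then to obtain part~(2) by iterating part~(1) along the reduction to normal form. The starting point for part~(1) is that, since $\coh$ is generated by the rules of Figure~\ref{fig:rigidCoh}, it enjoys inversion principles: from $r\coh r'$ one reads that $r$ and $r'$ have the same syntactic shape, the sole slack being that a $\linj$-headed expression is coherent with a $\rinj$-headed one unconditionally (and symmetrically). Consequently the clause of $L$ firing on $r$ is the same as the one firing on $r'$, up to swapping $\linj$ and $\rinj$ at the head of the distinguished subexpression, so I may argue clause by clause.

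The purely structural clauses are routine. In the base cases $L(\linj a)=\linj{L(a)}$ and $L(\rinj a)=\rinj{L(a)}$ I either recurse, when $r'$ carries the same marker --- the hypothesis $L(r)\neq0$ forcing $L(a)\neq0$ so that the induction hypothesis applies to $a\coh a'$ --- or conclude at once by the unconditional injection rule when the markers differ. In the commutation clauses, where $L$ pushes $\linj$ or $\rinj$ past an abstraction or an application spine, no recursive call occurs and coherence of the output is immediate from the coherence rules, again using the unconditional rule when $r$ and $r'$ carry opposite markers. In the head-variable clause $L(\lambda\vec x.\rappl y{\vec a_1\cdots\vec a_k})=\lambda\vec x.\rappl y{L(\vec a_1)\cdots L(\vec a_k)}$ and in the monomial clause $L((a_1,\dots,a_n))=(L(a_1),\dots,L(a_n))$, inversion of $r\coh r'$ yields coherence of each pair of corresponding argument monomials (respectively, pairwise coherence of all the components of the two monomials); since $L(r)\neq0$ and $L(r')\neq0$ force every recursive $L$-image to be non-zero, the induction hypothesis applies to each coherent pair, and I rebuild coherence of the output with the application rule (respectively the monomial rule).

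The one clause that is not purely structural, and which I expect to be the main point, is the $\beta$-redex clause: if $r=\lambda\vec x.\rappl{\lambda y.a}{\vec b\,\vec c_1\cdots\vec c_k}$, inversion gives $r'=\lambda\vec x.\rappl{\lambda y.a'}{\vec b'\,\vec c'_1\cdots\vec c'_k}$ with $a\coh a'$, $\vec b\coh\vec b'$ and $\vec c_i\coh\vec c'_i$, while the outputs are $\lambda\vec x.\rappl{\subst ay{\vec b}}{\vec c_1\cdots\vec c_k}$ and $\lambda\vec x.\rappl{\subst{a'}y{\vec b'}}{\vec c'_1\cdots\vec c'_k}$. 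The crux is to know $\subst ay{\vec b}\coh\subst{a'}y{\vec b'}$, which is precisely Lemma~\ref{lem:cohSubst}, its length hypotheses $n_y(a)=\length{\vec b}$ and $n_y(a')=\length{\vec b'}$ being supplied by $L(r)\neq0$ and $L(r')\neq0$; coherence of the whole output then follows by the application and abstraction rules. So this case, although it is where the dynamics genuinely acts, is entirely discharged by the already established Lemma~\ref{lem:cohSubst}.

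Finally, for part~(2) I would iterate part~(1). By the analogue of Lemma~\ref{lem:SN} and the fact that $L$ fixes normal forms ($L(s)=s$ iff $s$ is normal), I can choose a single $k$ with $L^k(r)=\NF r$ and $L^k(r')=\NF{r'}$. As $L$ sends $0$ to $0$ while $\NF r\neq0$ and $\NF{r'}\neq0$, every $L^j(r)$ and $L^j(r')$ for $0\le j\le k$ is non-zero, hence a genuine rigid expression. A straightforward induction on $j$ then gives $L^j(r)\coh L^j(r')$: the case $j=0$ is the hypothesis, and the inductive step applies part~(1) to $L^j(r)\coh L^j(r')$, its side conditions $L^{j+1}(r)\neq0$ and $L^{j+1}(r')\neq0$ holding by the preceding remark. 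Taking $j=k$ yields $\NF r\coh\NF{r'}$, as required.
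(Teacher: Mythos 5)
Your proof is correct and follows essentially the same route as the paper: part~(1) by induction on $r$, with inversion of $\coh$ pinning down the shape of $r'$ and Lemma~\ref{lem:cohSubst} discharging the $\beta$-redex clause, and part~(2) by choosing a common $k$ with $L^k(r)=\NF r$ and $L^k(r')=\NF{r'}$ and iterating part~(1). Your explicit remark that $L(0)=0$ together with $\NF r\not=0$ keeps every intermediate $L^j(r)$ non-zero is a detail the paper leaves implicit, but it is the same argument.
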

\begin{proof}
	The first item is easily established by induction on $r$,
	using Lemma~\ref{lem:cohSubst} in the case of a $\beta$-redex.
	Having fixed $k$ such that both $\NF r=L^k(r)$ and $\NF{r'}=L^k(r')$,
	the second item follows by iterating the first one.
\end{proof}

We have thus established that $L$ preserves coherence of rigid expressions.
It follows that $L$ preserves cliques of resource expressions:

\begin{lem}%
	\label{lem:cohRed}
	If $E\subseteq\resExprs$ is a clique, then
  both $L(E)$ and $\NF E$ are cliques.
\end{lem}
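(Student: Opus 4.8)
The plan is to reduce coherence of the image cliques $L(E)$ and $\NF E$ to coherence of rigid reducts, exploiting that rigid coherence is nothing but the pullback of resource coherence along $\softify{-}$ (so that $r\coh r'$ iff $\softify r\coh \softify{r'}$), together with the two immediately preceding lemmas. First I would fix a clique $E\subseteq\resExprs$ and two arbitrary elements $e_1',e_2'\in L(E)$; by the definition of $L$ on sets, there are $e_1,e_2\in E$ with $e_i'\in\supp(L(e_i))$ for $i\in\{1,2\}$.

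The key step is to lift this situation into the rigid calculus. By Lemma~\ref{lem:redSupp}(1), each $e_i'$ may be written as $\softify{L(r_i)}$ for some rigid representative $r_i\lhd e_i$ with $L(r_i)\not=0$. Since $E$ is a clique we have $e_1\coh e_2$, that is $\softify{r_1}\coh\softify{r_2}$, which by the very definition of the coherence relation on rigid expressions (Figure~\ref{fig:rigidCoh}) amounts to $r_1\coh r_2$. Now Lemma~\ref{lem:cohRedRigid}(1) applies --- both $L(r_i)$ being nonzero --- yielding $L(r_1)\coh L(r_2)$, and hence $e_1'=\softify{L(r_1)}\coh\softify{L(r_2)}=e_2'$, again reading off rigid coherence through $\softify{-}$. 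Since $e_1',e_2'$ were arbitrary (the diagonal case $e_1'=e_2'$ being subsumed, using that the elements $e_i$ of a clique are uniform, $e_i\coh e_i$, so that $r_i\coh r_i$), this shows that $L(E)$ is a clique. For $\NF E$ the argument is verbatim the same, substituting Lemma~\ref{lem:redSupp}(2) for Lemma~\ref{lem:redSupp}(1) and Lemma~\ref{lem:cohRedRigid}(2) for Lemma~\ref{lem:cohRedRigid}(1).

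The genuine content has already been discharged in the rigid setting: the delicate point is stability of coherence under a head-reduction step on rigid expressions (Lemma~\ref{lem:cohRedRigid}), which itself rests on the preservation of coherence by rigid substitution (Lemma~\ref{lem:cohSubst}). At the level of the present statement I therefore anticipate no serious obstacle; the only bookkeeping is to choose rigid representatives compatibly on the two sides and to verify that the nonzero hypotheses of Lemma~\ref{lem:cohRedRigid} hold. This last point is automatic, since $e_i'\in\supp(L(e_i))$ (resp.\ $\supp(\NF{e_i})$) forces, via Lemma~\ref{lem:redSupp}, a rigid antecedent whose reduct is nonzero.
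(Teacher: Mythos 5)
Your proof is correct and follows essentially the same route as the paper: lift to rigid representatives via Lemma~\ref{lem:redSupp}, transport coherence through $\softify{-}$, apply Lemma~\ref{lem:cohRedRigid}, and push back down. The only difference is that you spell out the quantification over pairs (including the diagonal/uniformity case) which the paper leaves as ``the result follows straightforwardly.''
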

\begin{proof}
	As a direct consequence of Lemmas~\ref{lem:redSupp} and~\ref{lem:cohRedRigid},
  we obtain that:
  if $e\coh e'$ then,
  for all $e_0\in \supp(L(e))$ and  $e'_0\in \supp(L(e'))$
  (resp. $e_0\in \supp(\NF e)$ and  $e'_0\in \supp(\NF{e'})$),
  we have $e_0\coh e'_0$.
  The result follows straightforwardly.
\end{proof}

Step~\ref{step:disjoint} amounts to the fact that distinct coherent expressions
have disjoint normal forms.
In other words, if the normal forms of two coherent expressions intersect on a
common element, then they must coincide.
This result will follow from the following rigid version,
which states that coherent rigid expressions with isomorphic normal forms are isomorphic:
\begin{lem}%
	\label{lem:suppNFRigid}
	For all $r,r'\in\rigidExprs$ such that $r\coh r'$:
	\begin{enumerate}
		\item if $L(r)\cong L(r')$ then $r\cong r'$;
		\item if $\NF r\cong \NF{r'}$ then $r\cong r'$.
	\end{enumerate}
\end{lem}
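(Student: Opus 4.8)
The plan is to establish item~(1) by induction on the structure of $r$, and then to deduce item~(2) by iterating item~(1) backwards along the hereditary head reduction to normal form. Throughout, the hypothesis $r\coh r'$ forces $r'$ to share the syntactic shape of $r$ (by inspection of the coherence rules of Figure~\ref{fig:rigidCoh}), so the induction follows exactly the case analysis of the definition of $L$ on rigid expressions. A useful preliminary remark is that, since $\cong$ is a relation between genuine rigid expressions, the hypothesis $L(r)\cong L(r')$ already presupposes $L(r)\neq 0$ and $L(r')\neq 0$; in particular, in the $\beta$-redex case below, the length condition $n_y(a)=\length{\vec b}$ (and its primed counterpart) holds automatically.

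First I would dispatch the \emph{commutation} cases, where $L$ merely reshuffles an injection marker or fires no redex in head position: the variable case, the cases $r=\linj a$ and $r=\rinj a$, and the four patterns $\lambda\vec x.\lambda y.(\linj a)$, $\lambda\vec x.\lambda y.(\rinj a)$, $\lambda\vec x.\rappl{\rappl{\linj a}{\vec b}}{\vec c_1\cdots\vec c_k}$ and its $\rinj$ analogue. In each of these, $L(r)\cong L(r')$ exposes directly an isomorphism between the immediate subexpressions of $r$ and $r'$; here one first observes that if the head injection markers of $r$ and $r'$ differed, then $L(r)\cong L(r')$ would be impossible, since $\cong$ never relates a $\linj$ to a $\rinj$. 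Reassembling the component isomorphisms through the rules of Figure~\ref{fig:cong} then yields $r\cong r'$, with no appeal to the induction hypothesis.

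The two genuinely recursive cases are the head-variable pattern $\lambda\vec x.\rappl y{\vec a_1\cdots\vec a_k}$ and the bare monomial $(a_1,\dotsc,a_k)$: for the former, coherence forces the same head $y$ and arity $k$, and $L(r)\cong L(r')$ gives $L(\vec a_i)\cong L(\vec a_i')$, whence the induction hypothesis (applied to $\vec a_i\coh \vec a_i'$) yields $\vec a_i\cong \vec a_i'$; for the monomial, $L(r)\cong L(r')$ forces equal length and produces $\sigma\in\permutations k$ with $L(a_i)\cong L(a_{\sigma(i)}')$, and pairwise coherence gives $a_i\coh a_{\sigma(i)}'$, so the induction hypothesis delivers $a_i\cong a_{\sigma(i)}'$ and the same $\sigma$ witnesses $r\cong r'$. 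The crux, and the step I expect to be the main obstacle, is the $\beta$-redex pattern $r=\lambda\vec x.\rappl{\lambda y.a}{\vec b\,\vec c_1\cdots\vec c_k}$, where $L(r)=\lambda\vec x.\rappl{\subst ay{\vec b}}{\vec c_1\cdots\vec c_k}$: coherence gives $r'=\lambda\vec x.\rappl{\lambda y.a'}{\vec b'\,\vec c_1'\cdots\vec c_k'}$ with $a\coh a'$ and $\vec b\coh \vec b'$, while $L(r)\cong L(r')$ yields $\subst ay{\vec b}\cong \subst{a'}y{\vec b'}$ (and $\vec c_i\cong \vec c_i'$). The whole point is to \emph{invert} the substitution, and this is precisely what Corollary~\ref{cor:cohAntired} provides: from $a\coh a'$ and $\subst ay{\vec b}\cong \subst{a'}y{\vec b'}$ we obtain $a\cong a'$ and $\vec b\cong \vec b'$, after which $r\cong r'$ is immediate. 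This is where the combinatorics of coherent substitution developed in Section~\ref{sect:rigid:cohSubst} is brought to bear.

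Finally, for item~(2), I would fix $k$ with $\NF r=L^k(r)$ and $\NF{r'}=L^k(r')$. Since $\NF r\cong \NF{r'}$ forces all iterates $L^j(r)$ and $L^j(r')$ (for $j\le k$) to be nonzero, Lemma~\ref{lem:cohRedRigid}(1) propagates coherence, giving $L^j(r)\coh L^j(r')$ at every stage. Starting from $L^k(r)\cong L^k(r')$ and descending, item~(1) then yields $L^{j}(r)\cong L^{j}(r')$ for decreasing $j$, and at $j=0$ we conclude $r\cong r'$.
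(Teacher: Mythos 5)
Your proof is correct and follows essentially the same route as the paper's: item~(1) by induction on the structure of $r$, invoking Corollary~\ref{cor:cohAntired} to invert the substitution in the $\beta$-redex case, and item~(2) by iterating item~(1) with coherence propagated by Lemma~\ref{lem:cohRedRigid}. Your write-up merely spells out the case analysis and the nonzeroness observations that the paper leaves implicit.
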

\begin{proof}
	Observe that $\cong$ is defined on rigid expressions only
	so that if, e.g., $L(r)\cong L(r')$
	then in particular $L(r)\not =0\not=L(r')$.
	The first item is established by induction on $r$,
	using Corollary~\ref{cor:cohAntired} in the case of a $\beta$-redex.
	Having fixed $k$ such that both $\NF r=L^k(r)$ and $\NF{r'}=L^k(r')$,
	the second item follows by iterating the first one,
	thanks to Lemma~\ref{lem:cohRedRigid}.
\end{proof}

Note that the converse does not hold, even in the uniform case:
two uniform, isomorphic and coherent rigid expressions may yield
normal forms that are not isomorphic.
\begin{exa}
Consider
$a=\rappl{\lambda x. \rappl{x} {(x)} }{(y \oplus \bullet, \bullet \oplus z)}$ and
$a'=\rappl{\lambda x. \rappl{x} {(x)}} (\bullet \oplus z, y \oplus \bullet)$.
We have $a\coh a'$ and $a\cong a'$ but
$\NF a=L(a)= \rappl{y \oplus \bullet}{(\bullet \oplus z)}$
and
$\NF{a'}=L(a')= \rappl{\bullet \oplus z}{(y \oplus \bullet)}$,
hence $\NF a\not\cong\NF{a'}$.
\end{exa}

\begin{thm}[Step~\ref{step:disjoint}]\label{thm:disjoint}
  Let $e, e' \in\resExprs$ be such that $e\coh e'$.
  If $\supp(\NF{e})\cap \supp(\NF{e'})\not=\emptyset$ then $e=e'$.
\end{thm}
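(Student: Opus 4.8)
The plan is to transfer the statement to the rigid setting, where the corresponding fact (Lemma~\ref{lem:suppNFRigid}) has already been established, and then translate back through the representation relation. The whole proof is then a short assembly: the genuine work has been discharged into the earlier lemmas.

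First I would pick an element $e_0 \in \supp(\NF e)\cap \supp(\NF{e'})$. By Lemma~\ref{lem:redSupp}(2), applied to $e$ and to $e'$ separately, there exist rigid representations $r\lhd e$ and $r'\lhd e'$ with $\NF r\not=0$, $\NF{r'}\not=0$ and $\softify{\NF r}=e_0=\softify{\NF{r'}}$. Since $\NF r$ and $\NF{r'}$ are genuine (nonzero) rigid expressions with equal softening, Lemma~\ref{lem:conglhd} gives $\NF r\cong\NF{r'}$.

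Next I would observe that $r\coh r'$: indeed $\softify r=e$ and $\softify{r'}=e'$, and $e\coh e'$ by hypothesis, so the defining property of the rigid coherence relation stated after Figure~\ref{fig:rigidCoh} (namely $r\coh r'$ iff $\softify r\coh\softify{r'}$) yields $r\coh r'$. With $r\coh r'$ and $\NF r\cong\NF{r'}$ both in hand, Lemma~\ref{lem:suppNFRigid}(2) produces $r\cong r'$, and a final application of Lemma~\ref{lem:conglhd} gives $\softify r=\softify{r'}$, that is $e=e'$, as required.

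I do not expect a real obstacle at this level, precisely because the crux is hidden in Lemma~\ref{lem:suppNFRigid}, whose proof rests on Corollary~\ref{cor:cohAntired} (coherent rigid expressions whose substitutions become isomorphic are already isomorphic) to run an induction along the hereditary head reduction. The only points requiring care here are bookkeeping ones: ensuring that the nonzero conditions $\NF r\not=0$ and $\NF{r'}\not=0$ supplied by Lemma~\ref{lem:redSupp} are in place before invoking $\cong$, which is defined on rigid expressions and not on $0$; and keeping the three notions --- the representation relation $\lhd$, the isomorphism relation $\cong$, and plain equality of resource expressions --- cleanly separated, relying on Lemma~\ref{lem:conglhd} as the bridge between $\cong$ on the rigid side and equality on the resource side.
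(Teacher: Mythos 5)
Your proof is correct and follows exactly the same route as the paper's: pick a common element of the two normal-form supports, lift to rigid representatives via Lemma~\ref{lem:redSupp}, transfer coherence to the rigid side, apply Lemma~\ref{lem:suppNFRigid}(2) to get $r\cong r'$, and conclude $e=e'$ via Lemma~\ref{lem:conglhd}. The only difference is that you spell out the bookkeeping (nonzero normal forms, the bridge role of $\lVert\cdot\rVert$) slightly more explicitly than the paper does.
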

\begin{proof}
	Let $e_0\in \supp(\NF{e})\cap \supp(\NF{e'})$.
	By Lemma~\ref{lem:redSupp},
	there are $r\lhd e$ and $r'\lhd e'$
  such that $e_0=\softify{\NF{r}}=\softify{\NF{r'}}$.
	Since $e\coh e'$, we have $r\coh r'$ and,
	since ${\NF{r}}\cong{\NF{r'}}$, we obtain $r\cong r'$ by Lemma~\ref{lem:suppNFRigid},
	hence $e=e'$.
\end{proof}

By Lemma~\ref{lem:cohRed}, $L$ preserves coherence;
and thanks to Theorem~\ref{thm:disjoint} we can iterate
Lemma~\ref{lem:coeffL} to obtain:
\begin{thm}[Step~\ref{step:coefNF}]%
	\label{thm:coeffNF}
	Let $e\in \resExprs$ with $e\coh e$ and let $e'\in \supp(NF(e))$.
	Then \[(\NF{e})_{e'} = \dfrac{m(e)}{m(e')}\quad .\]
\end{thm}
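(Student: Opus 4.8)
The plan is to iterate the single-step coefficient computation of Lemma~\ref{lem:coeffL} along the hereditary head reduction sequence that reaches the normal form. Since $L$ is normalizing, I would fix $n\in\N$ with $\NF e=L^n(e)$ and prove by induction on $n$ the slightly strengthened statement: for every uniform $e$ and every $e'\in\supp(L^n(e))$ with $\NF{e'}\neq 0$, the coefficient $(L^n(e))_{e'}$ equals $\frac{m(e)}{m(e')}$. The base case $n=0$ is immediate, since $\supp(e)=\{e\}$ and $(e)_{e}=1=\frac{m(e)}{m(e)}$. Taking $e'\in\supp(\NF e)$ at the very end then yields the theorem: such an $e'$ is normal, so $\NF{e'}=e'\neq 0$ and the restriction is satisfied.

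For the inductive step I would write $L^{n+1}(e)=L(L^n(e))=\sum_{e_0\in\supp(L^n(e))}(L^n(e))_{e_0}\,L(e_0)$ by linearity of $L$. The crucial structural input is that uniformity propagates along $L$: since $e\coh e$, the singleton $\{e\}$ is a clique, so by Lemma~\ref{lem:cohRed} the set $\supp(L^n(e))=L^n(\{e\})$ is again a clique, whence each $e_0$ is uniform. Fixing $e'$ with $\NF{e'}\neq 0$, its coefficient is
\[
	(L^{n+1}(e))_{e'}=\sum_{\substack{e_0\in\supp(L^n(e))\\ e'\in\supp(L(e_0))}}(L^n(e))_{e_0}\,(L(e_0))_{e'}.
\]
Every contributing $e_0$ satisfies $\supp(\NF{e'})\subseteq\supp(\NF{e_0})$ — using $\NF{e_0}=\NF{L(e_0)}$, which holds because $e_0\to_\partial^* L(e_0)$ and normal forms are unique — so $\NF{e_0}\neq 0$ and the induction hypothesis gives $(L^n(e))_{e_0}=\frac{m(e)}{m(e_0)}$; meanwhile Lemma~\ref{lem:coeffL} applied to the uniform $e_0$ gives $(L(e_0))_{e'}=\frac{m(e_0)}{m(e')}$, so each summand telescopes to $\frac{m(e)}{m(e')}$.

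The main obstacle is to rule out that several distinct $e_0$ feed the same $e'$, which would inflate the coefficient beyond $\frac{m(e)}{m(e')}$. This is precisely where Step~\ref{step:disjoint} is indispensable: if $e_0\neq e_0'$ both lie in the clique $\supp(L^n(e))$ and both produce $e'$, then $e_0\coh e_0'$, and from $e'\in\supp(L(e_0))\cap\supp(L(e_0'))$ together with $\NF{e_0}=\NF{L(e_0)}$ I obtain $\emptyset\neq\supp(\NF{e'})\subseteq\supp(\NF{e_0})\cap\supp(\NF{e_0'})$, so Theorem~\ref{thm:disjoint} forces $e_0=e_0'$, a contradiction. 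Hence the displayed sum has exactly one term and equals $\frac{m(e)}{m(e')}$. I would stress that the hypothesis $\NF{e'}\neq 0$ is genuinely needed: a uniform expression may reduce to $0$ (e.g.\ through an occurrence mismatch in a head redex), and only a nonzero normal form carries the disjointness guarantee — but this restriction costs nothing, as the terms $e'\in\supp(\NF e)$ that the theorem is about are normal, hence nonzero, by construction.
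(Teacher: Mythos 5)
Your proof is correct and follows essentially the same route as the paper's: both iterate Lemma~\ref{lem:coeffL} along the hereditary head reduction, use Lemma~\ref{lem:cohRed} to propagate uniformity, and invoke Theorem~\ref{thm:disjoint} to ensure that the reduction chain leading to $e'$ is unique so that the coefficients telescope to $m(e)/m(e')$. The only (cosmetic) difference is that the paper inducts on a single chain $e=e_0,\dotsc,e_n=e'$ and peels off the \emph{first} step, whereas you peel off the \emph{last} step of $L^{n+1}(e)=L(L^n(e))$, under the extra bookkeeping hypothesis $\NF{e'}\neq 0$, which you correctly discharge at the end.
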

\begin{proof}
  Fix $n$ such that $\NF{e}=L^n(e)$:
  since $\supp(\NF{e})=L^n(\{e\})$,
  there exists a sequence $e_0,\dotsc,e_n$
	such that $e_0=e$, $e_n=e'$
  and $e_{i}\in \supp(L(e_{i-1}))$
	for $1\le i\le n$.
  We prove by induction on $n$ that,
  given such a sequence, we have
  $\NF{e_0}_{e_n}={m(e_{0})}/{m(e_n)}$.

  If $n=0$ the result is trivial.
  Otherwise, Lemma~\ref{lem:coeffL} gives
  $L(e_0)_{e_{1}}={m(e_{0})}/{m(e_{1})}$.
  Moreover, Lemma~\ref{lem:cohRed} ensures that $\supp(L(e_0))$ is a clique,
  and in particular $e_1\coh e_1$ and the induction hypothesis entails
  $\NF{e_1}_{e_{n}}={m(e_{1})}/{m(e_{n})}$.
  Finally, since $e_n\in \supp(\NF{e_1})$, Theorem~\ref{thm:disjoint}
  entails $\NF{e'}_{e_n}=0$ for each $e'\in \supp(L(e_0))\setminus\{e_1\}$.
  We obtain
  \[
    \NF{e_0}_{e_n}=\NF{L(e_0)}_{e_n}=L(e_0)_{e_1}\NF{e_1}_{e_n}
    =\dfrac{m(e_0)}{m(e_1)}\dfrac{m(e_1)}{m(e_n)}
    =\dfrac{m(e_0)}{m(e_n)}
    \quad.
    \qedhere
  \]
\end{proof}

\subsection{Commutation}

By assembling all our previous results, we obtain the desired commutation
theorem:

\begin{thm}
Let $M\in \ndTerms$. Then $ \tayexp{\BT{M}} = \NF{\tayexp{M}}$.
\end{thm}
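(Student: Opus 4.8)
The plan is to combine the four main results established so far --- Steps~\ref{step:m}, \ref{step:taysup}, \ref{step:disjoint} and~\ref{step:coefNF} --- by a componentwise comparison of the two sides. First I would record that, by Theorem~\ref{thm:m}, $\tayexp M=\sum_{s\in\taysup M}\frac{1}{m(s)}s$, and that by Theorem~\ref{thm:clique} the support $\taysup M$ is a clique, so every $s\in\taysup M$ is uniform (\emph{i.e.} $s\coh s$) and any two elements of $\taysup M$ are coherent. On the other side, the Taylor expansion of the Böhm tree is by definition $\tayexp{\BT M}=\sum_{t\in\taysup{\BT M}}\frac{1}{m(t)}t$.

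The crux is to make sense of $\NF{\tayexp M}$ as an infinite linear combination, since $\NF{-}$ is a priori a linear operator on finite sums only. I would set $\NF{\tayexp M}\eqdef\sum_{s\in\taysup M}\frac{1}{m(s)}\NF s$ and argue that this sum is componentwise finite: by Theorem~\ref{thm:disjoint}, whenever $s,s'\in\taysup M$ are distinct their respective normal forms have disjoint supports, so for each resource term $t$ there is \emph{at most one} $s\in\taysup M$ with $t\in\supp(\NF s)$. This well-definedness, resting entirely on the uniformity of the Taylor support, is the step I expect to require the most care; it is precisely where our treatment of nondeterminism pays off, since it is what lets us keep the support a clique.

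Granting this, I would compute the coefficient $(\NF{\tayexp M})_t$ of an arbitrary resource term $t$. If $t\notin\bigcup_{s\in\taysup M}\supp(\NF s)$ the coefficient is $0$; otherwise there is a unique $s\in\taysup M$ with $t\in\supp(\NF s)$, and since $s\coh s$, Theorem~\ref{thm:coeffNF} gives $(\NF s)_t=\frac{m(s)}{m(t)}$, whence $(\NF{\tayexp M})_t=\frac{1}{m(s)}\cdot\frac{m(s)}{m(t)}=\frac{1}{m(t)}$. Thus $\NF{\tayexp M}$ carries coefficient $\frac{1}{m(t)}$ exactly on its support $\supp(\NF{\tayexp M})=\bigcup_{s\in\taysup M}\supp(\NF s)=\NF{\taysup M}$, and coefficient $0$ elsewhere.

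Finally I would invoke Theorem~\ref{thm:taysup}, namely $\NF{\taysup M}=\taysup{\BT M}$, to identify this support with $\taysup{\BT M}$. Since $\tayexp{\BT M}=\sum_{t\in\taysup{\BT M}}\frac{1}{m(t)}t$ likewise has coefficient $\frac{1}{m(t)}$ on $\taysup{\BT M}$ and $0$ elsewhere, the two vectors agree in every component, yielding $\tayexp{\BT M}=\NF{\tayexp M}$.
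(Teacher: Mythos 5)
Your proof is correct and follows essentially the same route as the paper's: it writes $\NF{\tayexp M}=\sum_{s\in\taysup M}\frac{1}{m(s)}\NF s$, justifies this using Theorems~\ref{thm:clique} and~\ref{thm:disjoint} (disjointness of the summands' supports), computes each coefficient as $\frac{1}{m(t)}$ via Theorem~\ref{thm:coeffNF}, and identifies the support with $\taysup{\BT M}$ by Theorem~\ref{thm:taysup}. Your explicit discussion of why the infinite sum is componentwise well defined is a welcome elaboration of a point the paper states more tersely, but the argument is the same.
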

\begin{proof}

 By Theorem~\ref{thm:m}  \[\tayexp{M} = \sum\limits_{s\in T(M)} \frac{1}{m(s)} s\]
 and by Theorem~\ref{thm:clique} and Theorem~\ref{thm:disjoint} we are allowed to form
 \[
	 \NF{\tayexp{M}}
	 \quad
	 = \sum\limits_{s\in\taysup M} \frac{1}{m(s)} \NF s
	 \quad
	 = \sum\limits_{s\in\taysup M} \sum\limits_{u\in \supp(\NF s)} \frac{\NF s_u}{m(s)}  u
 \]
 the inner sums having pairwise disjoint supports.
 Then, if $u\in \supp(\NF{\tayexp M})$,
 there is a unique $s\in\taysup M$ such that $u\in \supp(\NF s)$
 and we obtain
 $\NF{\tayexp{M}}_u = \frac{\NF s_u}{m(s)} = \frac{1}{m(u)}$
 by Theorem~\ref{thm:coeffNF}.
 We conclude since $\supp(\NF{\tayexp M})=\taysup{\BT M}$  by Theorem~\ref{thm:taysup}.
\end{proof}

\section*{Acknowledgements}
This work owes much to the friendly and stimulating environment provided by the
International Research Network on Linear Logic\footnote{\url{http://www.linear-logic.org/}}
between the French CNRS and the Italian INDAM\@.

\bibliographystyle{alphaurl}
\bibliography{group}
\end{document}